\def\@cite#1#2{[\textbf{#1\if@tempswa , #2\fi}]}
\def\@biblabel#1{[\textbf{#1}]}
\newcommand{\xiao}[1]{{\color{blue} Xiao: [{#1}]}}
\newcommand{\cut}[1]{}   
\newcommand{\mytag}[2]{%
	\text{#1}%
	\@bsphack
	\protected@write\@auxout{}%
	{\string\newlabel{#2}{{#1}{\thepage}}}%
	\@esphack
}
\newenvironment{packed_enum}{
	\begin{enumerate}
		\setlength{\itemsep}{1pt}
		\setlength{\parskip}{0pt}
		\setlength{\parsep}{0pt}
	}
	{\end{enumerate}}
\newcommand{\introparagraph}[1]{\noindent {\bf \em #1.}}  
\newtheorem{definition}{Definition}
\newtheorem{lemma}{Lemma}
\newtheorem{theorem}{Theorem}
\renewcommand{\paragraph}[1]{\medskip\noindent{\bf {#1. }}}
\renewcommand{\xiao}[1]{{\color{black} {#1}}}
\newcommand{\cost}{\mathsf{cost}}
\newcommand{\algo}{\mathbb{A}}
\newcommand{\dstr}{\mathcal{D}}
\newcommand{\copt}{C_{opt}}
\begin{document}
	
	\title{Algorithms for a Topology-aware Massively Parallel Computation Model}
	
	\author{
		Xiao Hu\\
		Duke University\\
		xh102@cs.duke.edu\\
	\and 
		 Paraschos Koutris\\
		UW-Madison\\
		paris@cs.wisc.edu\\
	\and
		Spyros Blanas\\
		The Ohio State University\\
		blanas.2@osu.edu\\
	}

	\date{}
	\maketitle
	
	\begin{abstract}
Most of the prior work in massively parallel data processing assumes homogeneity, i.e., every computing unit has the same computational capability, and can communicate with every other unit with  the same latency and bandwidth. However, this strong assumption of a uniform topology rarely holds in practical settings, where computing units are connected through complex networks. To address this issue, Blanas et al.~\cite{blanas2020topology} recently proposed a topology-aware massively parallel computation model that integrates the network structure and heterogeneity in the modeling cost. The network is modeled as a directed graph, where each edge is associated with a cost function that depends on the data transferred between the two endpoints. The computation proceeds in synchronous rounds, and the cost of each round is measured as the maximum cost over all the edges in the network.

In this work, we take the first step into investigating three fundamental data processing tasks in this topology-aware parallel model: set intersection, cartesian product, and sorting. We focus on network topologies that are tree topologies, and present both lower bounds, as well as (asymptotically) matching upper bounds. The optimality of our algorithms is with respect to the initial data distribution among the network nodes, instead of assuming worst-case distribution as in previous results. Apart from the theoretical optimality of our results, our protocols are simple, use a constant number of rounds, and we believe can be implemented in practical settings as well. 
\end{abstract}


	\section{Introduction}
\label{sec:intro}

The popularity of massively parallel data processing systems has led to an increased interest in studying the formal underpinnings of massively parallel models. As a simplification of the Bulk Synchronous Parallel (BSP) model~\cite{valiant1990bridging}, the Massively Parallel Computation (MPC) model~\cite{koutris2011parallel}, has enjoyed much success in studying algorithms for query evaluation~\cite{beame:communication, beame:skew, koutris2016worst, hu:similarity-joins, ketsman2017worst, hu2019output,tao2020binary}, as well as other fundamental data processing tasks~\cite{goodrich2011sorting, agarwal2016parallel, ghaffari2018improved,andoni2014parallel,barbosa2016new,GGKMR18,AssadiSW19}.  In the MPC model, any pair of compute nodes in a cluster communicates via a point-to-point channel. Computation proceeds in synchronous rounds: at each round, all nodes first exchange messages and then perform computation on their local data. 

Algorithms in the MPC model operate on a strong assumption of  homogeneity: every compute node has the same data processing capability and communicates with every other node with the same latency and bandwidth. In practice, however, large deployments are heterogeneous in their computing capabilities, often consisting of different generations of CPUs and GPUs. In the cloud, the speed of communication differs based on whether the compute nodes are located within the same rack, across racks, or across datacenters. In addition to static effects from the network topology, a model needs to capture the dynamic effects of different algorithms that may cause network contention. This homogeneity assumption is not confined in the theoretical development of algorithms, but it is also used when deploying algorithms in the real world.

Recent work has started taking into account the impact of network topology for data processing. In the model proposed by Chattopadhyay et al.~\cite{chattopadhyay2014topology,chattopadhyay2017tight}, the underlying network is modeled as a graph, where nodes communicate with their neighbors through the connected edges. Computation proceeds in rounds. In each round, $\tilde{O}(1)$\footnote{The notation $\tilde{O}$ hides a polylogarithmic factor on the input size.} bits can be exchanged per edge. The complexity of algorithms in such a model is measured by the number of rounds. Using the same model, Langberg et al.~\cite{langberg2019topology} prove tight topology-sensitive bounds on the round complexity for computing functional aggregate queries. Although these algorithmic results have appealing theoretical guarantees, they are unrealistic starting points for implementation. As the number of rounds required is usually polynomial in terms of the data size, the synchronization cost would be extremely high in practice. In addition, the size of the data that can be exchanged per edge in each round is too small; the compute nodes in today's mainstream parallel data processing systems can process gigabytes of data in each round.

Recently, Blanas et al.~\cite{blanas2020topology} proposed a new massively parallel data processing model that is aware of the network topology as well as network bandwidth. The underlying communication network is represented as a directed graph, where each edge is associated with a cost function that depends on the data transferred between the two endpoints. A subset of the nodes in the network consists of {\em compute nodes}, i.e., nodes that can store data and perform computation---the remaining nodes can only route data to the desired destination. Computation still proceeds in rounds: in each round, each compute node sends data to other compute nodes, receives data, and then performs local computation. There is no limit on the size of the data that can be transmitted per edge; the cost is defined as the sum across all rounds of the maximum cost over all edges in the network at each round. This model is general enough to capture the MPC model as a special case.

\renewcommand{\arraystretch}{1.3}
\begin{table*}[t]
	\centering
	\begin{tabular}{c|c|c|c}
		\hline
		\textbf{Task} & \textbf{Algorithm}  & \textbf{\# Rounds} & \textbf{Optimality Guarantee} \\
		\hline
		Set intersection & randomized & 1 & $O(\log |V| \log N)$ with high probability \\
		\cline{1-4}
		Cartesian product & deterministic &  1  & $O(1)$ \\
		\cline{1-4}
		{Sorting} & randomized  & {$O(1)$} & $O(1)$ with high probability \\
		\hline
	\end{tabular}
	\caption{A summary of our results. The graph network is $G = (V,E)$, while the size of the input data is denoted by $N$.}
	\label{tab:summary}
\end{table*}

In this work, we use the above topology-aware model to prove lower bounds and design algorithms for three fundamental data processing tasks: {\em set intersection}, {\em cartesian product}, and {\em sorting}. These three tasks are the essential building blocks for evaluating any complex analytical query in a data processing system.

In contrast to prior work, which either assumes a worst-case or uniform initial data distribution over the nodes in the network, we study algorithms in a more fine-grained manner by assuming that the cardinality of the initial data placed at each node can be arbitrary and is known in advance. This information allows us to build more optimized algorithms that can take advantage of data placement to discover a more efficient communication pattern.

\medskip \noindent \introparagraph{Our contributions}
We summarize our algorithmic results in Table~\ref{tab:summary}. Our results are restricted to network topologies that have two properties. First, they are {\em symmetric}, i.e., for each link $(u,v)$ there exists a link $(v,u)$ with the same bandwidth. Second, the network graph is a {\em tree}. Even with these two restrictions, we can capture several widely deployed topologies, such as star topologies and fat trees. 
All our algorithms are simple to describe and run either in a single round or in a constant number of rounds, hence requiring minimal synchronization. We thus believe that they form a good starting point for an efficient practical implementation.
We next present our results for each data processing task in more detail.

\medskip \noindent {\bf Set Intersection} (Section~\ref{sec:set-intersection}). In this task, we want to compute the intersection $R \cap S$ of two sets.
Our lower bound for set intersection uses classic results from communication complexity on the lopsided set disjointness problem.
 This lower bound has a rather complicated form (as shown in Section~\ref{sec:set-intersection-lb}), since each link has a different data capacity budget depending on the underlying network as well as the initial data distribution. Since set intersection is a computation-light but communication-heavy task, the challenge is how to effectively route the data according to the capacity of each link. We design a single-round randomized routing strategy for set intersection that matches the lower bound with high probability, losing only a polylogarithmic factor (w.r.t. the input size and network size). Surprisingly, the routing depends only on the topology and initial data placement, but not the bandwidth of the links. 
 
 \medskip \noindent {\bf Cartesian Product} (Section~\ref{sec:cartesian-product}). Here we want to compute the cross product $R \times S$ of two sets. This task is fundamental for various join operators, such as natural join, $\theta$-join, similarity join and set containment join. We derive two lower bounds of different flavor. The first lower bound  has a similar form as that for set intersection. The second lower bound uses instead a counting argument, which states that each pair in the cartesian product must be enumerated by at least one compute node, and the two elements participating in this result should reside on the same node when it is enumerated. We propose a one-round deterministic routing strategy for computing the cartesian product, which has asymptotically optimal guarantees. Our protocol generalizes the HyperCube algorithm that is used to compute the cartesian product in the MPC model~\cite{afrati2011optimizing}.

  \medskip \noindent {\bf Sorting} (Section~\ref{sec:sorting}). We first define a valid ordering of compute nodes as any left-to-right traversal of the underlying network tree, after picking an arbitrary node as the tree root. If the ordering of compute nodes is $v_1, v_2,\cdots,v_{|V_C|}$, at the end of the algorithm all elements on node $v_i$ are in sorted order and smaller than those on node $v_j$ if $i < j$. Our lower bound again has a similar form to the one we derived for set intersection. We present a sampling-based sorting algorithm which runs in a constant number of rounds and matches our lower bound with high probability. The protocol is again independent of the topology and the bandwidth, and depends only on the initial placement of the data.

	\section{The Computational Model}
\label{sec:model}

In this section, we present the computational model we will use for this work.

\medskip \noindent \introparagraph{Network Model}
We model the network topology using a {\em directed} graph $G = (V,E)$. 
Each edge $e \in E$ represents a network link with bandwidth $w_e \geq 0$, where the direction of the edge captures the direction of the data flow.
We distinguish a subset of nodes in the network,  $V_C \subseteq V$, to be {\em compute} nodes.
Compute nodes are the only nodes in the network that can store data and perform computation on their local data. Non-compute nodes can only route data. We only consider connected networks, where every pair of compute nodes is connected through a directed path.

\medskip \noindent \introparagraph{Computation}
A parallel algorithm $\mathbb{A}$ proceeds in sequential {\em rounds} (or {\em phases}). 
We denote by $r \in \mathbb{N}$ the {number of rounds} of the algorithm.
In the beginning, each compute node $v \in V_C$ holds part of the input $I$, denoted $X_0(v) \subseteq I$. In this work, we assume that $\{X_0(v)\}_{v \in V_C}$ forms a partition of the input $I$; in other words, there is no initial data duplication across the nodes. The goal of the algorithm is to compute a function over the input $I$, such that in the end the compute nodes together hold the function output.

We also assume that the algorithm $\mathbb{A}$ has knowledge of the following: $(i)$ the topology of the graph, $(ii)$ the bandwidth of each link, and $(iii)$ $|X_0(v)|$ for each compute node $v \in V_C$. In the case of relational data, we further assume that the algorithm knows the cardinality of the local fragment for each relation.

We use $X_i(v)$ to denote the data stored at compute node $v \in V_C$ after the $i$-th round completes, where $i=1, \dots, r$. At every round, the compute nodes first perform some computation on their local data. Then, they communicate by sending data to other compute nodes in the network. 
We assume that for a data transfer from compute node $u$ to compute node $v$, the algorithm must 
explicitly specify the routing path (or a collection of routing paths). 
We use $Y_i(e)$ to denote the data that is routed through link $e$ during round $i$, and $|Y_i(e)|$ denote its total size measured in {\em bits}.

\medskip \noindent \introparagraph{Cost Model}
Since the algorithm proceeds in sequential rounds, we can decompose the cost of the algorithm, denoted $\cost(\algo)$, 
as the sum of the costs for each round $i$, 
\begin{align*} 
 \cost(\algo) = \sum_{i=1}^r \cost_i(\algo)
\end{align*}

The model captures the cost of each round by considering only the {\em cost of communication}. The cost of the $i$-th round is
\begin{align*} \label{eq:cost:model}
 \cost_i(\algo) := \max_{e \in E(G)} |Y_i(e)|/w_e.
 \end{align*}
In other words, the cost of each round is captured by the cost of transferring data through the most bottlenecked link in the network. 
 In some cases, it will be convenient to express the cost using tuples/elements instead of bits, which we will mention explicitly.

Even though the model does not take into account any computation time in the cost,  it is possible to incorporate computation costs in the model by appropriately transforming the underlying graph -- for more details, see~\cite{blanas14}. We should note here that our model does not capture factors such as congestion on a router node, or communication delays due to large network diameter.

\subsection{Network Topologies}

Even though the model supports general network topologies, computer networks often have a specific structure. When the underlying topology has some structure, several problems (such as routing~\cite{DBLP:journals/talg/BansalFKS14, DBLP:journals/tc/Leiserson85, DBLP:journals/siamcomp/ChekuriKS09, DBLP:conf/icalp/ChekuriEV12}) admit more efficient solutions than what is achievable for general topologies.
It is therefore natural to consider restrictions on the topology that are of either theoretical or practical interest. 

\medskip \noindent \introparagraph{Symmetric Network}
Wired networks support full duplex operation that allows simultaneous communication in both directions of a link.
Furthermore, datacenter networks allocate the same bandwidth for transmitting and receiving data for each node.
These networks are represented in the model using a symmetric network.
We say that a network topology is {\em symmetric} if for every edge $e = (u,v) \in E$, we also have that $e' = (v,u) \in E$ with
$w_e = w_{e'}$. In other words, the cost of sending data from $u$ to $v$ is the same as the cost of sending the same data from $v$ to $u$.

\begin{figure}
	\centering
	\begin{subfigure}[b]{0.4\columnwidth}
		\centering
		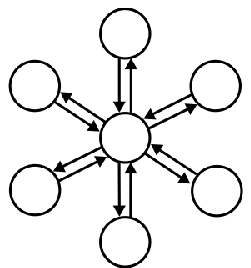
		\caption{Star topology.}
		\label{fig:topology:star}
	\end{subfigure}
	\hfill
	\begin{subfigure}[b]{0.5\columnwidth}
		\centering
		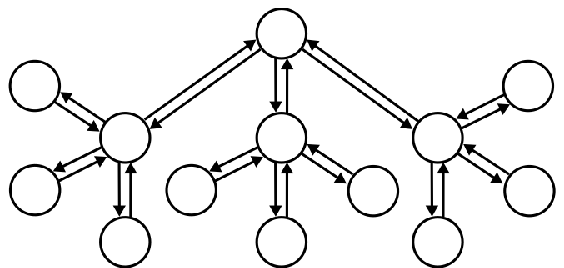
		\caption{Tree topology.}
		\label{fig:topology:tree}
	\end{subfigure}
	\caption{Common computer network topologies have structure, which permits more efficient solutions than what is feasible for arbitrary topologies.}
	\label{fig:topology}
\end{figure}

\medskip \noindent \introparagraph{Star Topology} 
The most common topology for small clusters is the star topology, where all computers are connected to a single switch.
A star network with $p+1$ nodes has $p$ compute nodes $V_C=\{v_1, \dots, v_p\}$ that are all connected to a central node $w$ that only does routing.
Figure~\ref{fig:topology:star} depicts an example of a star network. 
Within a node, a multi-core CPU also exhibits a star topology: individual CPU cores exchange data through a shared cache and memory hierarchy, which implicitly forms the center of the star. 

\medskip \noindent \introparagraph{Tree Topology}
As the network grows, a single router is no longer sufficient to connect all nodes.
A common solution to scale the network further is to arrange $r$ routers $\{w_1, \cdots, w_r\}$ in a star topology, and connect $p$ compute nodes $V_C=\{v_1, \cdots, v_p\}$ to individual routers.
Figure~\ref{fig:topology:tree} shows an example of a tree topology.
A key property in a tree topology is that there exists a unique directed path between any two compute nodes, hence routing is trivial.



\smallskip

In this work, {\em we will focus on symmetric tree topologies}. We make two observations about such topologies:
\begin{itemize}
\item We can assume w.l.o.g. that every compute node is a leaf. Indeed, if we have a non-leaf compute node $v \in V_C$, we can transform $G$ to a new graph $G'$ by adding a new compute node $v'$, introduce a new link between $v,v'$ with bandwidth $+\infty$, and make $v$ a non-routing node.
\item We can assume w.l.o.g. that there are no nodes with degree 2. Indeed, consider a non-leaf node $v$ with two adjacent edges $e_1 = (v,u_1)$, $e_2=(v,u_2)$. We can then remove $v$, and replace the two edges with a single edge $e = (u_1, u_2)$ with bandwidth $\min\{w_{e_1}, w_{e_2}\}$. 	
\end{itemize}

\subsection{Relation to the MPC Model}

We discuss here how the topology-aware model can capture the MPC model~\cite{guide:mpc,beame:communication} as a special case. Recall that in the MPC model we have a collection of $p$ nodes. The MPC model is topology-agnostic: every machine  can communicate with any other machine, and the cost of a round is defined as the maximum amount of data that is received during this round across all machines. The MPC model corresponds to an asymmetric star topology with $p$ compute nodes. For every edge $e = (v_i, o)$ that goes from a compute node to the center $o$ the bandwidth is $w_e = + \infty$, while for the inverse edge $e' = (o,v_i)$ the bandwidth is $w_{e'} = 1$. 

It should be noted that all previous works using the MPC model assume a uniform data distribution, where each node initially receives $N/p$ data, where $N$ is the input size. This assumption has been used both for lower and upper bounds. In contrast, our algorithms and lower bounds take the sizes of the initial data distribution as parameters.

	\section{Set Intersection}
\label{sec:set-intersection}

%

In the set intersection problem,  we are given two sets $R, S$. Our goal is to enumerate all pairs $(r,s) \in R \cap S$. Note that there is no designated node for each output pair, as long as it is emitted by at least one node. We assume that all elements from both sets are drawn from the same domain.

Given an initial distribution $\dstr$ of the data across the compute nodes, we denote
by $R^\dstr_v$, $S^\dstr_v$ the elements from $R$ and $S$ respectively in node $v$.
Let $N^\dstr_v = |R^\dstr_v| + |S^\dstr_v|$, and $N^\dstr = \sum_v N_v = |R| + |S|$. 
Whenever the context is clear, we will drop the superscript $\dstr$ from the notation.


\subsection{Lower Bound for Tree Topologies}
\label{sec:set-intersection-lb}

We present a lower bound on the cost for the case of a symmetric tree topology.
To prove the lower bound, we use a reduction from the {\em lopsided set disjointness} problem in communication complexity. 
In this problem, Alice holds a set $X$ of $n$ elements and Bob holds a set $Y$ of $m$ elements from some common domain. The goal is to decide whether the intersection $X \cap Y$ is empty by minimizing communication. It is known~\cite{Patrascu11,LSD12} that for any multi-round randomized communication protocol, either Alice has to send $\Omega(n)$ bits to Bob, or Bob has to send $\Omega(m)$ bits to Alice. 

To construct the reduction, we observe that any edge $e = (u,v)$ defines a partitioning of the compute nodes in the tree $G$
into two subsets: $V_{e}^-$ and $V_e^+$. Here, $V_e^-$ is the set of compute nodes in the same side as $u$, and $V_e^+$ in the same side as $v$. Hence, any algorithm that computes the set intersection in the tree topology also solves a lopsided set disjointness problem, where Alice holds all data located in $V_e^-$, Bob holds all data located in $V_e^+$, and they can only communicate through the edge $e$. Following this core idea, we can show the following lower bound.

\begin{theorem} \label{thm:lb}
Let $G=(V,E)$ be a symmetric tree topology.
Any algorithm computing the intersection $R \cap S$ has cost $\Omega(C_{LB})$, where
\[C_{LB} = \max_{e \in E} \frac{1}{w_e} \cdot \min \left \{|R|, |S|, \sum_{v \in V_e^-} N_v, \sum_{v \in V_e^+} N_v \right\}. \]
\end{theorem}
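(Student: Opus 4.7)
The plan is to reduce, edge by edge, from the lopsided set disjointness (LSD) problem. Fix an arbitrary edge $e \in E$; removing $e$ splits the tree into two subtrees, and I write $V_e^-, V_e^+$ for the compute nodes on the two sides, with $A = \sum_{v \in V_e^-} N_v$ and $B = \sum_{v \in V_e^+} N_v$ (so $A + B = |R| + |S|$). Because the topology is a tree, every bit that moves between the two sides in any round must traverse $e$, and so the sequence of bits pushed across $e$ (in both directions, summed over all rounds) is exactly the transcript of a two-party randomized protocol in which ``Alice'' jointly simulates every node in $V_e^-$ and ``Bob'' jointly simulates every node in $V_e^+$. Hence any LSD lower bound on this protocol transfers directly to a bound on $\sum_i |Y_i(e)|$, and therefore to $\cost(\algo)$ after dividing by $w_e$.

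The adversarial input for a given $e$ will be parameterized by $R_e^- \in [\max(0, A-|S|),\ \min(A, |R|)]$, the number of $R$-elements placed on the $V_e^-$ side; the remaining $A - R_e^-$ slots on that side are filled by $S$-elements, and symmetrically on the other side. I will draw the ``live'' elements, namely the $R_e^-$ elements of $R$ on the left and the $S_e^+ = |S| - A + R_e^-$ elements of $S$ on the right, from a single shared universe, so that they encode an LSD instance of dimensions $n = R_e^-$ and $m = S_e^+$. The ``padding'' elements (the $S$-elements on $V_e^-$ and the $R$-elements on $V_e^+$) will be drawn from disjoint, private domains so that they contribute nothing to $R \cap S$. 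Any algorithm that outputs $R \cap S$ then also decides the embedded LSD instance, and the cited bound says either Alice sends $\Omega(n)$ or Bob sends $\Omega(m)$ bits across $e$; in particular $\sum_i |Y_i(e)| = \Omega(\min(n,m))$, and the cost of the algorithm is at least $\Omega(\min(n,m)/w_e)$.

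It remains to pick $R_e^-$ inside its feasibility window so as to maximize $\min(n, m)$. I plan to set $R_e^- = \min(|R|, A)$ and run a short case split on the two dichotomies $|R| \lessgtr A$ and $|S| \lessgtr A$ (combined with the identity $A + B = |R| + |S|$) to check that the resulting $\min(n, m)$ equals $\min\{|R|, |S|, A, B\}$ in every case; for instance, if $|R| \leq A \leq |S|$ the choice $R_e^- = |R|$ gives $n = |R|$ and $m = |S| - A + |R| = B \geq |R|$, so the minimum is $|R|$, and the other three configurations are symmetric. Maximizing the resulting per-edge bound over $e \in E$ then produces $\cost(\algo) = \Omega(C_{LB})$ as claimed.

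The step I expect to be most delicate is arguing that the padding is genuinely invisible to any LSD solver built on top of the set-intersection algorithm: Alice and Bob know not only their live elements but also all of the padding, and one must verify that this extra information does not let them shortcut the embedded LSD instance. Drawing the padding from disjoint private domains reduces this to a standard direct-simulation argument, but the bookkeeping is slightly finicky when the live instance is much smaller than $|R|+|S|$ and most of the input is padding. A secondary technicality is converting the one-sided LSD statement (``either Alice sends $\Omega(n)$ or Bob sends $\Omega(m)$'') into a total-bits bound of $\Omega(\min(n,m))$, which is immediate since the larger of the two one-sided transcripts by itself already dominates $\min(n, m)$.
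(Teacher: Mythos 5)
Your reduction machinery is sound: modeling the cut at each tree edge $e$ as a two-party protocol between the two sides, invoking the lopsided set disjointness bound, converting the one-sided statement into a total-communication bound of $\Omega(\min(n,m))$, and using disjoint private domains for padding are all correct, and your case analysis for the choice $R_e^- = \min(|R|,A)$ does yield $\min(n,m) \ge \min\{|R|,|S|,A,B\}$. The problem is one step earlier: you \emph{choose} how each node's $N_v$ elements split between $R$ and $S$. The theorem is stated for a \emph{given} initial distribution $\dstr$, in which the per-node cardinalities $|R_v|$ and $|S_v|$ are fixed (and known to the algorithm); the bound $C_{LB}$ happens to depend only on the sums $N_v = |R_v|+|S_v|$, but it is asserted for every such distribution, not just for a worst-case one consistent with the $N_v$'s. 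Your construction therefore proves only that \emph{some} distribution with the prescribed $N_v$'s forces cost $\Omega(C_{LB})$. That weaker statement does not support the paper's optimality claims: the upper bound for \textsc{TreeIntersect} is shown to be $O(C_{LB}\cdot\mathrm{polylog})$ for \emph{every} distribution with the given cardinalities, so matching it requires the lower bound to hold instance-by-instance in the same sense.

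The fix does not need an adversarial re-split. Keep the actual split and observe that the cut at $e$ embeds \emph{two} lopsided disjointness instances, $R_A \cap S_B$ and $R_B \cap S_A$, where $R_A,S_A$ are the $R$- and $S$-elements on the $V_e^-$ side and $R_B,S_B$ those on the $V_e^+$ side (with worst-case contents, the cardinalities being fixed). These give communication $\Omega(\min\{|R_A|,|S_B|\})$ and $\Omega(\min\{|R_B|,|S_A|\})$ respectively, and the elementary inequality
\[
\max\bigl(\min\{|R_A|,|S_B|\},\ \min\{|R_B|,|S_A|\}\bigr)\ \ge\ \tfrac{1}{2}\min\bigl\{|R_A|+|R_B|,\ |S_A|+|S_B|,\ |R_A|+|S_A|,\ |R_B|+|S_B|\bigr\}
\]
(checked by a four-way case split on which argument attains each inner minimum) recovers exactly $\tfrac12\min\{|R|,|S|,\sum_{v\in V_e^-}N_v,\sum_{v\in V_e^+}N_v\}$ for \emph{arbitrary} values of $|R_A|,|S_A|,|R_B|,|S_B|$. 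Your single-instance construction is essentially the special case of this argument in which one of the two cross-instances has been made vacuous by your choice of split; using both instances is what removes the need to choose.
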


Observe that the above lower bound holds independent of the number of rounds that the algorithm uses.

\begin{proof}
Consider an edge $e \in E$. Any algorithm that computes the set intersection  $R \cap S$ must solve the following
problem. Alice holds two sets, $R_A = \bigcup_{v \in V_e^-} R_v$, and $S_A = \bigcup_{v \in V_e^-} S_v$.
Similarly, Bob holds two sets, $R_B = \bigcup_{v \in V_e^+} R_v$, and $S_B = \bigcup_{v \in V_e^+} S_v$.
Then, Alice and Bob must together compute two set intersections, $R_A \cap S_B$ and $R_B \cap S_A$,
communicating only through the link $e$ with bandwidth $w_e$. 
The lower bound for lopsided disjointness tells us that in order to compute $R_A \cap S_B$ we need to
communicate $\Omega(\min \{|R_A|, |S_B| \})$ bits, and for $R_B\cap S_A$ 
we need at least $\Omega(\min \{|R_B|, |S_A|\})$ bits.
Hence, the cost of any algorithm must be $\Omega(C)$, where:
\begin{align*}
C & =  \frac{1}{w_e} \max(\min \{|R_A|, |S_B| \}, \min \{|R_B|, |S_A|\} )\\
& \geq \frac{1}{2 w_e}  \min\{|R_A| + |R_B|, |S_A| + |S_B|, |R_A| + |S_A|,  |R_B| + |S_B|\} \\
& = \frac{1}{2 w_e} \min \left( |R|, |S|, \sum_{v \in V_e^-} N_v, \sum_{v \in V_e^+} N_v \right)
\end{align*}
Applying the above argument to every edge in the tree $G$, we obtain the desired result.
\end{proof}

\subsection{Warmup on Symmetric Star}

We first consider the star topology to present some of the key ideas. W.l.o.g. we assume $|R| \leq |S|$.
We present a one-round algorithm that is based on randomized hashing.  

Our algorithm (Algorithm~\ref{alg:set-intersection-star}) in its core performs a randomized hash join.
It first partitions the compute nodes into two subsets, $V_\alpha$ and $V_\beta$, depending on the size of the local data.
Define $N' = |R| + \sum_{v \in V_\alpha} |S_v|$.
Let $h$ be a random hash function that maps independently each
 $a$ in the domain to node $v \in V_C$ with the following probability:
\begin{align*}
 Pr[h(a) = v] = \begin{cases}
N_v / N', & v \in V_\alpha \\ 
|R_v| / N', & v \in V_\beta 
\end{cases}
\end{align*}

If $V_\beta = \emptyset$, then the algorithm performs a distributed hash join using the above hash function $h$.
Observe that the algorithm does not hash each value uniformly across the compute nodes, but with probability
proportional to the input data $N_v$ that each node holds.

If $V_\beta \neq \emptyset$, we perform hashing only on a subset of the data using a subset of the nodes. 
In particular, each node $v \in V_\beta$ first gathers all the elements from $R$ (the smallest relation) and locally computes
 $R \cap S_v$, while hashing is used to compute the remaining set intersection.
After the data is communicated, the intersection can be computed locally at each node.

\begin{algorithm}
\caption{{\sc StarIntersect}$(G,\dstr)$}
\label{alg:set-intersection-star}
$V_\alpha \gets  \{v \in V_C \mid \min \{ N_v, N-N_v \} < |R| \}$, $V_\beta \gets V_C \setminus V_\alpha$ \;
\For{$v \in V_C$}{
		  send every $a \in R_v^\dstr$ to all nodes in $V_\beta \cup \{ h(a) \}$ \;
		  \If {$v \in V_\alpha$}{
		    send every $a \in S_v^\dstr$ to $h(a)$ \; }
		  }
\end{algorithm}

We next show that the above algorithm is optimal within a polylogarithmic factor.

\begin{lemma}
Let $G = (V,E)$ be a symmetric star topology, and consider sets $R,S$ with  $N = |R| + |S|$.
Then, {\sc StarIntersect} computes the set intersection $R \cap S$ with cost $O(\log N \log |V|)$ away from the optimal solution with high probability.
\end{lemma}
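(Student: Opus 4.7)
The plan is to establish correctness first and then to show that the random traffic on every edge is within a polylogarithmic factor of the per-edge lower bound of Theorem~\ref{thm:lb}. For correctness: any pair $(r,s)\in R\cap S$ with $s\in S_v$ for some $v\in V_\beta$ is emitted at $v$ because $v$ receives a full copy of $R$ during the broadcast to $V_\beta$. Any remaining pair has $s\in S_v$ for some $v\in V_\alpha$, which sends $s$ to $h(s)$; since $r=s$ is in $R$ and is also sent to $h(r)=h(s)$, the pair meets and is emitted there.

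For the cost, Theorem~\ref{thm:lb} specializes on the star topology to $\Omega(\min\{|R|,N_v,N-N_v\}/w_e)$ for the edge $e$ incident to compute node $v$ (using $|R|\le|S|$). I would bound the expected traffic on each of the two directed copies of this edge by a case analysis on $v$. If $v\in V_\beta$, the outgoing direction carries $|R_v|\le|R|$ elements, and the incoming direction receives $|R|-|R_v|$ broadcast elements of $R$ plus a Bernoulli sum of hashed $S$-values whose expectation $|R_v|\cdot\sum_{v'\in V_\alpha}|S_{v'}|/N'$ is at most $|R_v|$, giving $O(|R|)$ in total. If $v\in V_\alpha$ with $N_v\le N/2$, the outgoing side is trivially at most $N_v$, and the expected incoming amount is $N'\cdot N_v/N'=N_v$, matching $\Omega(N_v/w_e)$. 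Finally, if $v\in V_\alpha$ with $N_v>N/2$, a short argument shows that every other node has $N_u<|R|$ and therefore lies in $V_\alpha$, so $V_\beta=\emptyset$ and $N'=N$; then, accounting for the fact that items with $h(a)=v$ never cross the edge, both the expected outgoing and the expected incoming data equal $N_v(N-N_v)/N\le N-N_v$, again matching the lower bound $\Omega((N-N_v)/w_e)$.

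To turn these expectation bounds into high-probability ones, I would apply a Chernoff inequality on each directed edge: the hashed traffic is a sum of independent Bernoulli variables, so it concentrates to within $O(\mu + \log(|V|N^c))$ of its mean $\mu$ with probability at least $1-|V|^{-1}N^{-c}$, and a union bound over the $O(|V|)$ directed edges gives the claimed $O(\log N\log|V|)$ multiplicative gap uniformly on every edge. The main obstacle I anticipate is the case $v\in V_\alpha$ with $N_v>N/2$: a naive bound on the outgoing side would give $N_v$, which greatly exceeds the lower bound $N-N_v$, so one must simultaneously exploit that $V_\beta=\emptyset$ in this regime (so no broadcast occurs) and that a $N_v/N$ fraction of the hashed items remains at $v$. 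A secondary technicality is that the Chernoff step uses independence of $h$, so the proof needs to assume either a fully random hash function or a sufficiently-wise independent family.
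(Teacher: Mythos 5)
Your proposal is correct and follows essentially the same route as the paper: the same per-edge accounting against the specialized star lower bound, the same expectation bounds for $V_\beta$ (broadcast of $R$ plus hashed $S$-traffic $\le |R_v|$) and for $V_\alpha$, and the same Chernoff-plus-union-bound finish. Your split of the $V_\alpha$ case by whether $N_v \le N/2$ (with the observation that $N_v > N/2$ forces $V_\beta = \emptyset$, so the $N_v/N$ fraction staying local saves the outgoing bound) is just the contrapositive of the paper's claim that $V_\beta \neq \emptyset$ implies $N_v \le N - N_v$ for all $v \in V_\alpha$, combined with its use of $\frac{ab}{a+b} \le \min\{a,b\}$.
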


\begin{proof}
The correctness of the algorithm is straightforward. We will next bound the cost of the algorithm. We will measure the cost using elements of the set; to translate to bits it suffices to add a $\log (N)$ factor which captures the number of bits necessary to represent each element.

To make the notation simpler, we will use $w_v$ to refer to the bandwidth $w_e$ of edge $e = (v,w)$, where $v \in V_C$ and $w$ is the central node of the star topology. We can now reformulate the lower bound from Theorem~\ref{thm:lb} as
\[C_{LB} = \max \left \{\max_{v \in V_\alpha} \frac{\min\{N_v, N-N_v\}}{w_v}, \max_{v \in V_\beta}\frac{|R|}{w_v} \right \}\]
We now distinguish two cases, depending on whether the edge is adjacent to a node in $V_\alpha$ or $V_\beta$.

\paragraph{Case 1: $v \in V_\beta$}
Consider the two edges $(v,w)$ and $(w,v)$. The number of tuples that will be sent through edge $(v,w)$ is $|R_v| \leq |R|$.
As for the tuples received, node $v$ will receive $|R| - |R_v|$ tuples from $R$, as well as some tuples from $S$ which are in expectation:
$ \frac{|R_v|}{N'} \cdot  \sum_{v \in V_\alpha} |S_v| \leq |R_v|$.
Thus, the cost incurred by edges adjacent to $V_\beta$ is:
$
\max_{v \in V_\beta} \frac{|R|}{w_v} \leq C_{LB}
$.
Even though the above analysis just bounds the expectation, we can use Chernoff bounds to show that with probability
polynomially small in the number of compute nodes, the number of tuples will not exceed the expectation by more than an
$O(\log |V|)$ factor for any of the edges.
 
\paragraph{Case 2: $v \in V_\alpha$} We bound separately the number of $R$-tuples and $S$-tuples that go
through each edge.

The expected number of $S$-tuples that go through edge $(w,v)$ is
\begin{align*} 
\left(\sum_{u \in V_\alpha} |S_u| - |S_v|\right) \cdot \frac{N_v}{N'} & \leq {(N'-R- |S_v|) \cdot \frac{N_v}{N'}}   {\leq \frac{ (N' - N_v) N_v}{N'} } \le  \min \{N_v, N - N_v \}
\end{align*}
The third inequality is a direct application of the facts that $\min\{a,b\}  \ge \frac{a \cdot b}{a+b}$ for any $a,b \ge 0$ and $N' < N$.
Similarly, the expected number of $S$-tuples that go through edge $(v,w)$ is
\begin{align*} 
|S_v| \cdot \frac{N' - N_v}{N'}  \leq \frac{ (N' - N_v) N_v}{N'} \leq \min \{N_v, N - N_v \}
\end{align*}

For $R$-tuples, we distinguish two cases.
If $V_\beta = \emptyset$, then we can bound the expected size using the same argument as above for $S$-tuples.
We now turn to the case where $V_\beta \neq \emptyset$.

We first claim that $N_v \leq N-N_v$ for each vertex $v \in V_\alpha$.
Indeed, if not then we must have that $N- N_v < |R|$, which implies that $N_v > |S|$.
However, this is a contradiction since there exists $u \in V_\beta$ with $N_u > |R|$.
Hence, it suffices to bound the $R$-tuples that go through each edge by $N_v$. 

Indeed, the number of $R$-tuples that go through $(v,w)$ for $v \in V_\alpha$ are at most $|R_v| \leq N_v$.
As for the edge $(w,v)$, the expected number of tuples that use the edge is:
\begin{align*}
(|R| - |R_v|)  \cdot \frac{N_v}{N'} \leq \frac{|R|}{N'} \cdot N_v  \leq N_v
\end{align*}

Combining these two cases yields the desired claim. 
Note that all expectation calculations can be extended to high probability statements by losing a factor of $O(\log |V|)$ as mentioned before. 
\end{proof}

\subsection{Algorithm on General Symmetric Tree}
\label{sec:set-intersection-tree}

We now generalize the algorithm for the star topology to an arbitrary (symmetric) tree topology.
W.l.o.g. we assume $|R| \leq |S|$.  
We partition all edges in $E$ into two subsets: 
\begin{align*}
E_\alpha & =  \{e \in E \mid  \min\{\sum_{v \in V_{e}^+} N_v, \sum_{v \in  V_e^-}N_v\} < |R| \} \\
E_\beta & =  \{e \in E \mid \min\{\sum_{v \in V_e^+} N_v, \sum_{v \in  V_e^-}N_v\} \ge |R| \} 
\end{align*}

An edge $e \in E$ is called $\alpha$-edge if $e \in E_\alpha$, and $\beta$-edge if $e \in E_\beta$.
Observe that the definition is symmetric w.r.t. the direction of the edge: if $(u,v)$ is an $\alpha$-edge, so is $(v,u)$. The intuition behind this partition lies in  the lower bound of Theorem~\ref{thm:lb}, where the amount of data that can go through an $\alpha$-edge is $O(\min\{\sum_{v \in V_{e}^+} N_v, \sum_{v \in  V_e^-}N_v\})$ and through a $\beta$-edge is $O(|R|)$. We denote by $G_\beta$ the edge-induced subgraph of the edge set $E_\beta$.

\begin{lemma}
	\label{lem:beta-tree}
The subgraph $G_\beta$ is a connected tree.
\end{lemma}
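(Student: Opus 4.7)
The plan is to prove the two defining properties of a tree separately: (i) $G_\beta$ is acyclic, and (ii) $G_\beta$ is connected. Acyclicity is free, since $G_\beta$ is an edge-induced subgraph of the tree $G$ and therefore a forest. All of the work goes into connectedness.

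For connectedness, I would prove the stronger statement: for any two $\beta$-edges $e_1, e_2 \in E_\beta$, every edge on the unique $G$-path joining them is itself in $E_\beta$. This immediately gives a path inside $G_\beta$ between any two $\beta$-edges, so $G_\beta$ has a single connected component. Fix an edge $f$ strictly between $e_1$ and $e_2$ on this path. Removing $f$ from $G$ splits it into two subtrees $T_L, T_R$, with both endpoints of $e_1$ lying in $T_L$ and both endpoints of $e_2$ lying in $T_R$. Writing $V_f^L = T_L \cap V_C$ and $V_f^R = T_R \cap V_C$ (one of which is $V_f^-$, the other $V_f^+$), it suffices to show $\sum_{v \in V_f^L} N_v \ge |R|$ and $\sum_{v \in V_f^R} N_v \ge |R|$, since this says $f \in E_\beta$.

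For the $V_f^L$ side, I further delete $e_1$ from $T_L$. Because removing two edges from a tree always produces exactly three connected components, $T_L$ breaks into two subtrees $C_1, C_2$, exactly one of which, say $C_1$, does not contain the $T_L$-endpoint of $f$. Then $C_1 \cap V_C$ is precisely one of the two parts $V_{e_1}^-, V_{e_1}^+$ of the partition induced by $e_1$ in the full tree $G$ (the other part is $(C_2 \cap V_C) \cup V_f^R$). Since $e_1 \in E_\beta$, both parts have $N$-sum at least $|R|$, so $\sum_{v \in C_1 \cap V_C} N_v \ge |R|$, and since $C_1 \subseteq T_L$, this gives $\sum_{v \in V_f^L} N_v \ge |R|$. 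The symmetric argument applied to $e_2 \in T_R$ yields $\sum_{v \in V_f^R} N_v \ge |R|$.

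The only thing that requires care is the bookkeeping of which compute nodes end up in which side after deleting two edges from the tree, and matching these pieces back to the two parts $V_{e_i}^-, V_{e_i}^+$ in the definition of a $\beta$-edge. Once that is set up as above, the $\beta$-edge property of $e_1$ and $e_2$ transfers to $f$ essentially by monotonicity: enlarging a set of compute nodes can only increase its $N$-sum. Combined with acyclicity, this establishes that $G_\beta$ is a connected, acyclic subgraph, i.e., a tree.
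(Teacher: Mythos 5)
Your proof is correct and uses the same core argument as the paper: an edge $f$ lying between two $\beta$-edges has, on each side, the entire "far" part $V_{e_i}^{\pm}$ of a $\beta$-edge, so by monotonicity of the $N$-sums both sides of $f$ have weight at least $|R|$ and $f$ is itself a $\beta$-edge. The paper phrases this contrapositively (assuming an $\alpha$-edge separates two vertices of $G_\beta$ and deriving a contradiction via $V_{e_1}^+ \subseteq V_e^+$ and $V_{e_2}^- \subseteq V_e^-$), but the containment step and the conclusion are identical to yours.
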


\begin{proof}
For the sake of contradiction, assume there exist vertices $u,v \in V(G_\beta)$ such that $u,v$ are not connected in $G_\beta$. Then, there exists an $\alpha$-edge $e$ on the unique path that connects $u$ and $v$ in $G$. 
In turn, $e$ splits $G$ into two connected subtrees: $G_e^+$ (that contains all nodes in $V_e^+$), and $G_e^-$ (that contains all nodes in $V_e^-$). 
Suppose w.l.o.g. that $u \in V(G_e^+)$ and $v \in V(G_e^-)$. 

Since $u,v$ belong in the edge-induced subgraph of $E_\beta$, 
there exists $\beta$-edges $e_1 \in G_e^+, e_2 \in G_e^-$.
 We observe that $V_{e_1}^+ \subseteq V_e^+$ and $V_{e_2}^- \subseteq V_e^-$, which implies 
$|R| \le  \sum_{v \in V_e^+} N_v$ and $|R| \le \sum_{v \in V_e^-} N_v$.
In this way, $e$ would be an $\beta$-edge, contradicting our assumption.
\end{proof}

On the other hand, the edge-induced subgraph $G_\alpha$ derived from $E_\alpha$ is not necessarily connected and forms a forest.

\paragraph{Balanced Partition} The first step of our algorithm is to compute a partition $\{ V_C^1, V_C^2, \cdots, V_C^k\}$ of the compute nodes $V_C$.
In particular, the algorithm seeks a {\em balanced partition}, as illustrated in Figure~\ref{fig:balanced-partition}.

\begin{figure}
	\centering
	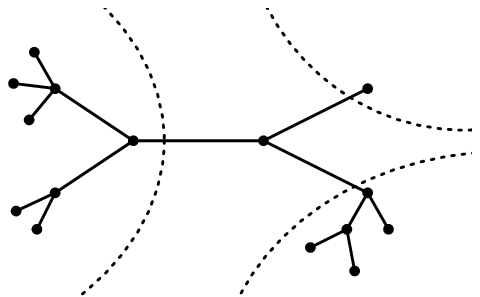
	\caption{An illustration of a balanced partition.}
	\label{fig:balanced-partition}
\end{figure}

\begin{definition}
\label{def:balanced-partition}
A partition $\{V_C^1, V_C^2, \cdots, V_C^k\}$ of the compute nodes $V_C$ is balanced for data distribution $\dstr$
if the following properties hold:
\begin{packed_enum}
\item If two nodes are connected in $G_\alpha$, they belong in the same block of the partition ;
\item Each edge appears in the spanning tree of at most one block of the partition ;
\item For every block $i$, $\sum_{v \in V_C^i} N^\dstr_v \geq |R|$ ;
\item For every $\beta$-edge $e$ in the spanning tree of a block $i$, 
 $\min \{ \sum_{v \in V_C^i \cap V_e^+} N_v, \sum_{v \in V_C^i \cap V_e^-} N_v \} \le |R|$.
\end{packed_enum}
\end{definition} 

Before we show how to find a balanced partition, we first discuss how we can use it to compute the set intersection.

\paragraph{The Algorithm} 
Let $\{V_C^1, V_C^2, \cdots, V_C^k\}$ be a balanced partition.
For every block $V_C^i$, we define a random hash function $h^i$ that maps independently each
value $a$ in the domain to node $v \in V_C^i$ with probability:
\[ Pr[h^i(a) = v] = \frac{N_v}{ \sum_{u \in V_C^i} N_u} \]
Using the above probabilities, we can now describe the detailed algorithm (Algorithm~\ref{alg:set-intersection-tree}), which works in a single round.  Each $R$-tuple is hashed across all blocks of the partition (hence it may be replicated), while each $S$-tuple is hashed only in the block that contains the node it belongs in. After all data is communicated, each node locally computes the set intersection. 
\begin{algorithm}
	\caption{{\sc TreeIntersect}$(G,\dstr)$}
	\label{alg:set-intersection-tree}
	
	Find a balanced partition $\{V_C^1, V_C^2, \cdots, V_C^k\}$\;
	\For{$v \in V_C$}{
		\For {$i=1, \dots, k$}{
		  send every $a \in R_v^\dstr$ to $h^{i}(a)$ \;
		  \If {$v \in V_C^i$}{
		    send every $a \in S_v^\dstr$ to $h^{i}(a)$ \; }
		  }
	}
\end{algorithm}

\begin{theorem} \label{thm:tree:intersect}
	On a symmetric tree topology $G=(V,E)$, the set intersection $R \cap S$ with $|R| + |S| = N$ can be computed in a single round with cost $O(\log N \log |V|)$ away from the optimal solution with high probability.
\end{theorem}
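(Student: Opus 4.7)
The plan is to (i) verify correctness of \textsc{TreeIntersect} by a meeting argument, (ii) bound the expected load on each edge against $C_{LB}$ using the four balanced-partition properties, and (iii) upgrade the expected bounds to a high-probability guarantee via Chernoff and a union bound over edges. I assume throughout that a balanced partition as in Definition~\ref{def:balanced-partition} exists for $\dstr$; its construction is a separate combinatorial lemma.

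\textbf{Correctness.} Take any pair $(a,a)\in R\cap S$ witnessed by $a\in R_v^\dstr$ and $a\in S_u^\dstr$, and let $i$ be the unique index with $u\in V_C^i$. The inner loop of Algorithm~\ref{alg:set-intersection-tree} sends the $R$-copy from $v$ to $h^i(a)$ (since it ranges over every block, including $i$), and the conditional line sends the $S$-copy from $u$ to the same $h^i(a)$. The receiving node emits the pair locally.

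\textbf{Expected per-edge load.} Fix an edge $e$ with sides $V_e^+,V_e^-$. By Property~2 at most one block $V_C^i$ has a spanning tree containing $e$; every other block $V_C^j$ then lies entirely on one side of $e$. The expected number of element transmissions crossing $e$ splits into (a) $h^i$-routed $R$-tuples, expected $|R\cap V_e^\mp|\cdot p^i_\pm$ where $p^i_\pm=\sum_{u\in V_C^i\cap V_e^\pm}N_u/\sum_{u\in V_C^i}N_u$; (b) $h^i$-routed $S$-tuples from $V_C^i$, expected $\sum_{v\in V_C^i\cap V_e^\mp}|S_v|\cdot p^i_\pm$; and (c) for each foreign block $V_C^j\subseteq V_e^\pm$ a deterministic flow of $|R\cap V_e^\mp|$ tuples. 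I then split on the type of $e$. If $e\in E_\alpha$, Property~1 forces $e$ into block $i$'s spanning tree and, WLOG assuming $\sum_{v\in V_e^+}N_v<|R|$, Property~3 rules out any foreign block residing inside $V_e^+$, so contribution (c) vanishes on that side; the remaining contributions reduce to the star-case identity $AB/(A+B)\le\min(A,B)\le\sum_{v\in V_e^+}N_v$, matching $C_{LB}\cdot w_e$. If $e\in E_\beta$, then $C_{LB}\ge |R|/w_e$; contributions (a) and (b) are capped via Property~4 (which bounds the smaller side of $V_C^i$ across $e$ by $|R|$) through the same star-case inequality, and contribution (c) is controlled by combining Property~3 ($\ge|R|$ data per block) with $|R\cap V_e^\mp|\le|R|$, yielding $O(|R|)$ per direction.

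\textbf{Concentration and bit conversion.} Each edge load is a sum of independent indicator variables, one per tuple and block, so a Chernoff bound gives a load within an $O(\log|V|)$ factor of its expectation with probability $1-|V|^{-c}$; a union bound over the $O(|V|)$ edges preserves this guarantee, and multiplying by $\log N$ to translate element counts to bits yields the claimed $O(\log N\log|V|)$ approximation of $C_{LB}$. The main obstacle will be contribution (c) in the $\beta$-edge case, where a naive count scales with the number of foreign blocks on one side of $e$; matching the $|R|$ budget requires carefully combining Properties~3 and~4 with the global tree structure of $G_\beta$ (Lemma~\ref{lem:beta-tree}) so that the cross-block $R$-traffic is charged against the bandwidth budget on $e$ rather than against the much larger total data $\sum_{v\in V_e^\pm}N_v$.
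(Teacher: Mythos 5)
Your decomposition into the straddling block versus foreign blocks, and your use of Properties 1--4 for the $\alpha$-edge case and the $S$-traffic, track the paper's argument closely. But your contribution (c) for $\beta$-edges is a genuine gap, and you say so yourself: you count a separate flow of $|R\cap V_e^\mp|$ tuples for \emph{each} foreign block on the far side of $e$, observe that this scales with the number of such blocks, and defer the fix to an unspecified combination of Properties 3 and 4 with Lemma~\ref{lem:beta-tree}. No such combination closes it under your accounting: Property 3 only gives that each block holds at least $|R|$ data, so the number of foreign blocks on one side of a $\beta$-edge can be as large as $\Theta(N/|R|)$, and with all of $R$ on the other side your count becomes $\Theta(N)$, far exceeding the $O(|R|)$ budget that $C_{LB}$ allows on a $\beta$-edge. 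The paper dispatches this in one line --- ``the $R$-tuples that go through $e$ are at most $|R|$'' --- because the per-edge load $|Y_i(e)|$ is the size of the \emph{data} routed through the link: when a single element $a\in R_v$ is addressed to $h^1(a),\dots,h^k(a)$, the union of the $k$ routing paths in a tree crosses any given edge at most once in each direction, so replication across blocks does not multiply the edge load. Your per-copy accounting silently adopts a stricter cost model under which the theorem's bound would actually be false for this algorithm, so the missing step is not a technicality but the crux of the $\beta$-edge estimate (and the same issue resurfaces in your $\alpha$-edge case for $R$-tuples travelling from the small side toward several foreign blocks on the large side, which the paper handles by the same single-crossing observation combined with a three-way case split on which side of $e$ contains $\beta$-edges).

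Two smaller points. First, you assume a balanced partition exists; the theorem as stated needs that existence (the paper supplies it via an explicit greedy construction on $G_\beta$), so your proof is conditional until that lemma is in place. Second, your claim that Property 1 alone forces an $\alpha$-edge into the straddling block's spanning tree is incomplete: you need Property 1 to put all compute nodes of the small side $V_e^-$ into one block (using that every edge of $G_e^-$ is an $\alpha$-edge) \emph{and} Property 3 to force that block to reach across $e$, since $\sum_{v\in V_e^-}N_v<|R|$. Neither of these is fatal, but the $\beta$-edge $R$-traffic bound must be repaired before the argument stands.
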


\begin{proof}
	The correctness of the algorithm comes from the fact that each subset of nodes $V_C^i$ computes $R \cap \bigcup_{v \in V_C^i} S_v$. Since $S = \bigcup_{i=1}^k \bigcup_{v \in V_C^i} S_v$, it follows that the algorithm computes all results in $R \cap S$. 
	
	We next analyze the cost. As before, we will measure the cost in number of tuples, and then pay a $O(\log N)$ factor to translate to bits. We first rewrite the lower bound as:
	\[ C_{LB} = \max \left \{ \max_{e \in E_\alpha} \frac{1}{w_e} \min\{\sum_{v \in V_e^+} N_v, \sum_{v \in V_e^-} N_v\}, \max_{e \in E_\beta} \frac{|R| }{w_e}\right\} \]
	
	We analyze the cost for the edges in $E_\alpha, E_\beta$ separately. 
	
	\paragraph{Case: $e \in E_\beta$} We will bound the amount of data that goes through $e$ by $O(|R|)$. The $R$-tuples that go through $e$ are at most $|R|$, so it suffices to bound the number of $S$-tuples that cross edge $e$. By property (2) of a balanced partition, $e$ is included in at most one spanning tree, say of block $V_C^i$.
	Then, w.h.p. the expected amount of $S$-tuples that goes through $e$ is at most 
	\begin{align*}
	&\frac{1}{\sum_{v \in V_C^i} N_v} \cdot \left(\sum_{v \in V_C^i \cap V_e^-} N_v \right) \cdot \left( \sum_{v \in V_C^i \cap V_e^+} N_v \right) 
	\leq  \min \left\{  \sum_{v \in V_C^i \cap V_e^-} N_v, \sum_{v \in V_C^i \cap V_e^+} N_v \right\} \leq  |R|
	\end{align*}
	The first inequality comes from the fact that $\frac{a \cdot b}{a+b} \le \min\{a,b\}$ for any $a, b > 0$. The
	second inequality is implied directly by property (4) of a balanced partition.
	
	\paragraph{Case: $e \in E_\alpha$} We will bound the amount of data that goes through $e$ by 
	$\min \left \{\sum_{v \in V_e^-} N_v, \sum_{v \in V_e^+} N_v \right \}$.
	To bound the number of $S$-tuples, we again notice that $e$ can belong in the spanning tree of at most one block,
	say $V_C^i$. Hence, as in the previous case, w.h.p. the expected amount of $S$-tuples that goes through $e$ is at most 
	\begin{align*}
	& \frac{1}{\sum_{v \in V_C^i} N_v} \cdot \left(\sum_{v \in V_C^i \cap V_e^-} N_v \right ) \cdot  \left( \sum_{v \in V_C^i \cap V_e^+} N_v \right) 
	\leq  \min  \left \{  \sum_{v \in V_C^i \cap V_e^-} N_v, \sum_{v \in V_C^i \cap V_e^+} N_v\right\} 
	\leq \min  \left \{  \sum_{v \in V_e^-} N_v, \sum_{v \in V_e^+} N_v \right\} 
	\end{align*}
	We can bound the number of $R$-tuples that go through $e$ by distinguishing three cases:
	\begin{itemize}
		\item none of $G_e^-, G_e^+$ contain $\beta$-edges. Then, the partition consists of a single block, and the number of $R$-tuples can be bounded as we did above with the $S$-tuples.
		\item $G_e^+$ contains $\beta$-edges but $G_e^-$ not. Then, all vertices in $G_\beta$ are in $V_e^+$. The $R$-data that goes through $e$ is sent by nodes in $V_e^-$, so its size is bounded by $\sum_{v \in V_e^-} |R_v| \le  \sum_{v \in V_e^-} N_v = \min \left \{\sum_{v \in V_e^-} N_v, \sum_{v \in V_e^+} N_v \right \}$. Here, the last equality follows from the fact that $G_e^+$ contains at least one $\beta$-edge, which implies that
		$\sum_{v \in V_e^+} N_v \geq |R| > \sum_{v \in V_e^-} N_v$.
		\item $G_e^-$ contains $\beta$-edges but $G_e^+$ not. Then, all nodes in $V_e^+$ belong in the same block $V_C^i$. We can abound the expected amount of $S$-tuples with:
		\begin{align*}
		\ \ \ \ \ \ \ \ \ \ \ \ \ & \frac{1}{\sum_{v \in V_C^i} N_v} \cdot \left( \sum_{v \in V_e^-} |R_v| \right) \cdot \left ( \sum_{v \in V_C^i \cap V_e^+} N_v \right) \\
		\leq & \frac{ \sum_{v \in V_e^-} |R_v| + \sum_{v \in V_C^i \cap V_e^+} N_v }{\sum_{v \in V_C^i} N_v}  \min \left \{  \sum_{v \in V_e^-} |R_v|, \sum_{v \in V_C^i \cap V_e^+} N_v \right \} \\
		\leq & \frac{ |R| + \sum_{v \in V_C^i } N_v }{\sum_{v \in V_C^i} N_v}  \min \left \{  \sum_{v \in V_e^-} N_v, \sum_{v \in V_e^+} N_v \right \} 
		\leq  2  \min \left\{  \sum_{v \in V_e^-} N_v, \sum_{v \in V_e^+} N_v \right \}
		\end{align*}
		where the last inequality is from property (3) of Definition~\ref{def:balanced-partition}.
	\end{itemize}
	
	This completes the proof.
\end{proof}


\paragraph{Finding a Balanced Partition} 
Finally, we present how we can compute a balanced partition \xiao{in Algorithm~\ref{alg:balanced-partition}.} We say that two vertices in $G$ are $\alpha$-connected
if there exists a path that uses only $\alpha$-edges that connects them. 
For the algorithm below, denote $\Gamma(x)$ as the set of nodes that are $\alpha$-connected with node $x$ in $G$ \xiao{(line 2)}. Moreover, we use $w(x)$ to denote
the quantity $\sum_{x \in \Gamma(x)} N_x$, i.e. the total amount of data in the nodes from $\Gamma(x)$. 

\begin{algorithm}
	\caption{{\sc BalancedPartition$(G, \dstr)$}}
		\label{alg:balanced-partition}

	\For{$x \in V(G_\beta)$}{
		$\Gamma(x) \gets \{ v \in V_C \mid v,x  \text{ are } \text{$\alpha$-connected in } G \}$  \;}
	$\mathcal{P} \gets \emptyset$ \;
	\While{$|V(G_\beta)| > 0$}{
		pick the leaf vertex $x \in G_\beta$ with the smallest $w(x)$\;
		\uIf{$w(x) \ge |R|$}{
			add $\Gamma(x)$ to $\mathcal{P}$\;
		}
		\Else{
			$y \gets$ unique neighbor of $x$ in $G_\beta$\;
			$\Gamma(y) \gets \Gamma(y) \cup \Gamma(x)$\; 
		}
		$G_\beta \gets G_\beta \setminus \{x\}$\;
	}
	\Return $\mathcal{P}$ \;
\end{algorithm}

The algorithm initially creates a group for each set of compute nodes that are connected through $\alpha$-edges.
Then, it starts merging the groups (starting from the leaves of the tree) as long as the total number of the elements in the group
is less than $|R|$. We show 
that the above algorithm indeed creates the desired balanced partition.

\begin{lemma} \label{lem:balanced}
{
	Algorithm~\ref{alg:balanced-partition} outputs a balanced partition of compute nodes $V_C$ in $O(|V|)$ time.}
\end{lemma}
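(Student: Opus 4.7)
The plan is to verify each of the four properties in Definition~\ref{def:balanced-partition} in turn, and then argue the linear running time separately. Properties (1) and (2) should follow essentially by inspection. For (1), the initial groups $\Gamma(x)$ are exactly the $\alpha$-connected components of compute nodes, and subsequent merges can only further combine these groups, so any two $\alpha$-connected compute nodes remain in the same block. For (2), each $\alpha$-edge lies inside a single $\alpha$-component (hence inside the spanning tree of at most one block), while a $\beta$-edge $(x,y)$ becomes internal to some block's spanning tree only when it is the edge used in the one merge step that folds $\Gamma(x)$ into $\Gamma(y)$; once $x$ is removed from $G_\beta$ the edge is never revisited.

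Property (4) then follows by locating, for each internal $\beta$-edge $e=(x,y)$ of $V_C^i$'s spanning tree, the merge step that placed it there. In that step $x$ was a leaf of the \emph{current} $G_\beta$, so every other $\beta$-edge on $x$'s side of $e$ has already been processed, and the current $\Gamma(x)$ coincides exactly with $V_C^i \cap V_e^-$: any compute nodes on $x$'s side of $e$ that were emitted into earlier blocks are disjoint from $V_C^i$, while the rest have been absorbed into $\Gamma(x)$ by prior merges. The condition of the else-branch gives $w(x)<|R|$, and hence $\sum_{v \in V_C^i \cap V_e^-} N_v = w(x) < |R|$, as required.

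The main obstacle is property (3): I must rule out the algorithm terminating with a final singleton group of weight less than $|R|$, which is precisely where the smallest-weight-leaf rule becomes indispensable. I would prove by induction the invariant that, at the start of every iteration, the total weight of the current $G_\beta$ is either $0$ or at least $|R|$. This holds initially because $\sum_{v \in V_C} N_v = N \geq 2|R|$, and merges preserve total weight trivially. In an add-step, $x$ is the smallest-weight leaf with $w(x) \geq |R|$, so every other leaf of $G_\beta$ has weight at least $|R|$; since any tree on at least two vertices has at least two leaves, the total weight remaining after removing $x$ is still at least $|R|$. Consequently, when the loop reaches a single vertex its weight must be $\geq |R|$, so it is added legitimately; combined with the explicit check at every add-step, this gives property (3).

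Finally, the $\alpha$-components and the initial values $w(x)$ can be computed in a single linear-time traversal of $G$, and the while loop touches each vertex of $G_\beta$ exactly once, performing only constant-work updates per iteration with a suitable data structure for the leaf set and its weights. This gives the claimed $O(|V|)$ bound overall.
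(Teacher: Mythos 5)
Your proof is correct and follows essentially the same route as the paper's: it verifies the four properties of Definition~\ref{def:balanced-partition} one by one, with the crux being that the smallest-weight-leaf rule prevents the algorithm from being stranded with a final group of weight below $|R|$. Your total-weight invariant is a cleaner packaging of the paper's argument for that point (the paper instead considers the last group emitted into $\mathcal{P}$ and uses that merges preserve the remaining weight), and your treatment of properties (2) and (4) matches the paper's in substance.
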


\begin{proof}
	First, we notice that in lines 1-2 each compute node $V_C$ belongs in exactly one $\Gamma(x)$. In the remaining algorithm, every vertex in $G_\beta$ with $w(x) > 0$ is put into exactly one block, thus $\mathcal{P}$ is a partition of $V_C$.  Indeed, the only issue may occur when we are left with a single vertex $x$: we claim that in this case we always have $w(x) \geq |R|$. Suppose $w(x) < |R|$, and consider the last vertex $u$ for which $\Gamma(u)$ was added in $\mathcal{P}$ (such a vertex always exists, since every leaf vertex of $G_\beta$ initially has weight at least $|R|$). But then, the algorithm could not have picked $u$ at this point, since all other leaf vertices have smaller weight, a contradiction.
	
	We now prove that the output partition satisfies all properties of a balanced partition (Definition~\ref{def:balanced-partition}).
	
	(1) The first condition is trivial. From lines 1-2, two compute nodes that are connected in $G_\alpha$ will be in
	the same initial $\Gamma(x)$, hence they will appear together in a block of the partition. 
	
	(2) By contradiction, assume there exists an edge $e = (u,v)$ appearing in the spanning trees of $V_C^i$ and $V_C^j$ for $i \neq j$. By the definition of spanning trees, there exists one pair of vertices $x,y \in V_C^i$ and one pair of vertices $x',y' \in V_C^j$ such that $x,x' \in G_e^+$ and $y, y' \in G_e^-$.  When Algorithm~\ref{alg:balanced-partition} visits $e$ in line 9, w.l.o.g. assume $u$ is visited before $v$. Since $x,x'$ are placed in different blocks of the partition, it cannot be that both $x, x' \in \Gamma(u)$. W.l.o.g., $x' \notin \Gamma(u)$. This implies that $x'$ has already been put into one block with  vertices from $G_e^-$. Then  $x', y'$ won't appear in the same block, contradicting our assumption.
	
	(3) It is easy to see that the algorithm adds a set of nodes to $\mathcal{P}$ only if their total weight is at least $|R|$.
	
	(4) Consider a block $V_C^i$ in the partition. Let $e = (u,v)$ be a $\beta$-edge in the spanning tree of $V_C^i$. Then,
	Algorithm~\ref{alg:balanced-partition} visits $e$ in line 9: w.l.o.g. assume $u$ is visited before $v$. At this point, we 
	have $w(u) < |R|$, since $\Gamma(u)$ was merged with $\Gamma(v)$. The key observation is that we have $\Gamma(u) = V_C^i \cap V_e^-$, since no other compute nodes will be added to the "left" of $e$ (since $u$ is a leaf node). Hence, 
	\begin{align*}
	\min \{ \sum_{v \in V_C^i \cap V_e^+} N_v, \sum_{v \in V_C^i \cap V_e^-} N_v \} \leq \sum_{v \in V_C^i \cap V_e^-} N_v
	= w(u) < |R|
	\end{align*}
	This completes the proof.	
\end{proof}

\paragraph{Remark} Interestingly, the algorithm we described above does not use the link bandwidths to decide what to send and where to send to. Instead, what matters is the connectivity of the network and how the data is initially partitioned across the compute nodes. This is a significant practical advantage because bandwidth information may be imprecise or have high variability at runtime, such as when sharing a cluster with other users.

	\section{Cartesian Product}
\label{sec:cartesian-product}

In the cartesian product problem, we are given two sets $R, S$ with $|R| = |S|= N/2$. (We will discuss in the end why the unequal case is challenging, even on the simple symmetric star topology).
Our goal is to enumerate all pairs $(r,s)$ for any $r \in R, s \in S$, such that the output pairs are distributed among the compute nodes by the end of the algorithm. {Similar to set intersection, there is no designated node for each output pair, as long as it is emitted by at least one node.}
We assume that all elements are drawn from the same domain, and that initially the input data is partitioned across the compute nodes.


\subsection{Lower Bounds on Symmetric Trees}

We present two lower bounds on cost for the case of a symmetric tree topology.  The first one as stated in Theorem~\ref{thm:lb-cp1} has the same form as the one in Theorem~\ref{thm:lb} when $|R| = |S| = N/2$, but uses a slightly different argument. Both lower bounds are expressed in terms of elements, and not bits. 

\begin{theorem}
	\label{thm:lb-cp1}
	Let $G=(V,E)$ be a symmetric tree topology.
	Any algorithm computing $R \times S$ has (tuple) cost $\Omega(C_{LB})$, where
	\[C_{LB} = \max_{e \in E} \frac{1}{w_e} \cdot \min \left \{\sum_{v \in V_e^-} N_v, \sum_{v \in V_e^+} N_v \right\}. \]
\end{theorem}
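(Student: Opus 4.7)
The plan is to adapt the edge-cut argument of Theorem~\ref{thm:lb}, but replace the reduction from lopsided set disjointness with a direct combinatorial argument that exploits the enumeration requirement of the cartesian product. Fix an edge $e \in E$ and let $V_e^-, V_e^+$ be the two sides it induces; let $R_-, S_-$ (resp.\ $R_+, S_+$) denote the $R$- and $S$-elements initially placed on the $-$ (resp.\ $+$) side, and let $N^\pm = \sum_{v \in V_e^\pm} N_v$. My goal is to show that the per-edge cost at $e$ is $\Omega(\min\{N^-,N^+\}/w_e)$.

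The key observation is that for every cross pair $(r,s) \in R_- \times S_+$, the compute node that eventually enumerates $(r,s)$ lies on exactly one side of $e$ (since the network is a tree and $e$ cleanly partitions it) and must hold both $r$ and $s$ at that moment; therefore at least one of $r, s$ must have traversed $e$. Let $C_{R\to +}$ and $C_{S\to -}$ denote the $R$- and $S$-elements that traverse $e$ from $-$ to $+$ and from $+$ to $-$, respectively. The above covering condition implies that either $C_{R\to +} = R_-$ or $C_{S\to -} = S_+$ (else a pair $(r^*,s^*)$ with $r^* \in R_- \setminus C_{R\to +}$ and $s^* \in S_+ \setminus C_{S\to -}$ is uncovered), so $|C_{R\to +}| + |C_{S\to -}| \geq \min(|R_-|, |S_+|)$. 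The symmetric argument applied to $R_+ \times S_-$ gives $|C_{R\to -}| + |C_{S\to +}| \geq \min(|R_+|, |S_-|)$.

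Since the cost charged by the two directed copies of $e$ is at least $\max\{|C_{R\to +}|+|C_{S\to +}|,\ |C_{R\to -}|+|C_{S\to -}|\}/w_e \geq \tfrac{1}{2w_e}\bigl(|C_{R\to +}|+|C_{S\to -}|+|C_{R\to -}|+|C_{S\to +}|\bigr)$, combining the two inequalities and then using $\min(a,b)+\min(c,d)\geq \min\{a+c, a+d, b+c, b+d\}$ yields a per-edge cost lower bound of $\Omega(\min\{|R|, |S|, N^-, N^+\}/w_e)$. Under the assumption $|R|=|S|=N/2$, together with $N^- + N^+ = N$, at least one of $N^-,N^+$ is $\leq N/2$, so $\min\{N^-, N^+\} \leq |R| = |S|$ and the bound collapses to $\Omega(\min\{N^-,N^+\}/w_e)$. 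Taking the maximum over all edges then gives $C_{LB}$.

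The main obstacle I anticipate is stating the covering step crisply enough to make its novelty visible: its strength over Theorem~\ref{thm:lb} is that no communication-complexity result is required, because cartesian product must \emph{produce} every cross pair as output, whereas set intersection only has to decide membership. The remaining algebra mirrors the final inequality chain in the proof of Theorem~\ref{thm:lb} and should be routine.
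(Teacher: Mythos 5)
Your proposal is correct and takes essentially the same route as the paper: the paper's own proof of this theorem already abandons the lopsided set disjointness reduction in favor of the same dichotomy (if some $r \in R_-$ never crosses $e$, then all of $S_+$ must cross, giving $\copt \cdot w_e \ge \min\{|R_-|,|S_+|\}$ and symmetrically $\copt \cdot w_e \ge \min\{|R_+|,|S_-|\}$), followed by the same summation and the observation that $|R|=|S|=N/2$ collapses the four-way minimum to $\min\{N^-,N^+\}$. Your covering formulation is just a slightly more explicit phrasing of the identical argument.
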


\begin{proof}
	Let $\copt$ be the cost of any algorithm computing $R \times S$ on the tree topology $G$. Consider an edge $e \in E$. Suppose that $C_{opt} \cdot w_e \le \sum_{v \in V_e^-} |R_v|$. Then, at least one element in $R_u$ for some $u \in V_e^-$ does not go through $e$, i.e., entering into any vertex in $V_e^+$. In this case, in order to guarantee correctness, all data in $S$ must be sent to $u$, hence $\copt \cdot w_e \ge \sum_{v \in V_e^+} |S_v|$. Thus $\copt \cdot w_e \ge \min\{\sum_{v \in V_e^-} |R_v|, \sum_{v \in V_e^+} |S_v|\}$. Using a symmetric argument, $\copt  \cdot w_e \ge  \min \{ \sum_{v \in V_e^-} |S_v|, \sum_{v \in V_e^+} |R_v|\}$. Summing up the two inequalities, and 
	observing that $\min \{\sum_{v \in V_e^-} N_v, \sum_{v \in V_e^+} N_v\} \leq |R| (=|S| = N/2)$, we obtain the lower bound on edge $e$.

	Applying the above argument to every edge in the tree $G$, we obtain the desired result.
\end{proof}

The second lower bound uses a different argument that depends on the underlying tree topology. To state the lower bound, we first define a "directed" version $G^\dagger$ of the symmetric tree $G$ as follows. $G^\dagger$ has the same vertex set as $G$. Recall that each edge $e = (u,v)$ in $G$ partitions the nodes of $V$ into $V_e^+$ and $V_e^-$. Then, if $\sum_{x \in V_e^-} N_x \leq \sum_{x \in V_e^+} N_x $, $G^\dagger$ contains only an edge from $u$ to $v$, otherwise only an edge from $v$ to $u$. As the next lemma shows, the resulting directed graph $G^\dagger$ has a very specific structure. 

\begin{lemma}
\label{lem:directed-alpha}
$G^\dagger$ satisfies the following properties:
\begin{packed_enum}
\item The out-degree of every node is at most one.
\item There exists exactly one node with out-degree zero.	
\end{packed_enum}
\end{lemma}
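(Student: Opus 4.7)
My plan is to establish the two properties separately, deriving property (2) from property (1) together with the tree structure of $G$.

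For property (1), I will argue by contradiction. Suppose some node $x$ has two out-edges in $G^\dagger$, pointing to neighbors $y_1$ and $y_2$. Because $G$ is a tree, deleting the edges $(x,y_1)$ and $(x,y_2)$ splits it into exactly three connected pieces: the subtree $T_1$ containing $y_1$, the subtree $T_2$ containing $y_2$, and the remaining piece $T_0$ that contains $x$ (together with any other branches of $x$ and the rest of the tree). Writing $w(T) := \sum_{v \in T} N_v$, the defining rule for $G^\dagger$ says the orientation $x \to y_1$ holds because the $x$-side has weight no more than the $y_1$-side, i.e., $w(T_0) + w(T_2) \le w(T_1)$, and similarly $w(T_0) + w(T_1) \le w(T_2)$ from the other edge. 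Adding these two inequalities gives $2 w(T_0) \le 0$, forcing $w(T_0) = 0$ and equality throughout; assuming ties in the orientation rule are broken consistently (e.g., by a fixed ordering on vertices), this equality case still prevents the same node from having two out-edges, giving the desired contradiction.

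For property (2), I will first observe that $G^\dagger$ is a DAG, since the undirected graph $G$ has no cycles at all, hence no directed cycles either. Any finite DAG contains at least one sink: starting from any vertex and repeatedly following out-edges must terminate at a node with out-degree zero. The remaining task is to prove uniqueness. Suppose for contradiction that two distinct sinks $a,b$ exist, and let $a = x_0, x_1, \dots, x_k = b$ be the unique path between them in the tree $G$. Since $a$ is a sink, the edge between $x_0$ and $x_1$ must be oriented $x_1 \to x_0$; since $b$ is a sink, the edge between $x_{k-1}$ and $x_k$ must be oriented $x_{k-1} \to x_k$. Reading along the path, the first edge points backward (toward $a$) while the last points forward (toward $b$), so there must be some index $i^* \in \{0, \dots, k-2\}$ at which the direction flips: edge $(x_{i^*}, x_{i^*+1})$ is oriented toward $x_{i^*}$ while edge $(x_{i^*+1}, x_{i^*+2})$ is oriented toward $x_{i^*+2}$. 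But then $x_{i^*+1}$ has two out-edges in $G^\dagger$, contradicting property (1).

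I expect the main subtlety to lie in property (1), specifically the degenerate case where $w(T_0) = 0$ would make the contradiction from adding the two inequalities non-strict. Handling this cleanly will require either breaking orientation ties by a canonical rule (so that equal-weight sides still induce at most one out-edge per vertex), or simply observing that a node $x$ whose entire complementary region $T_0$ carries zero weight plays no role in either the lower bound or the forthcoming algorithm and can be absorbed into one of its neighbors without loss of generality. Once this is pinned down, the flipping argument for property (2) is immediate and essentially topological.
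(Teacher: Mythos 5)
Your proof follows the same overall strategy as the paper's: property (1) by a weight-counting contradiction on the disjoint far sides of the out-edges, and property (2) exactly as in the paper (a sink exists because $G^\dagger$ is acyclic; uniqueness follows because a path between two sinks would contain a direction-flip node with two out-edges, contradicting (1)). The one place you genuinely diverge is the boundary case of property (1), and your treatment is the more careful one. The paper invokes the no-degree-2 normalization to list \emph{three} outgoing edges $e_1,e_2,e_3$ at $u$, each of whose far sides carries weight at least $N/2$, and concludes $N \ge 3N/2$; but two out-edges only force $\deg(u)\ge 3$, not that the third incident edge is also outgoing, so that step is unjustified as written. Your version correctly shows that two out-edges yield $w(T_0)+w(T_2)\le w(T_1)$ and $w(T_0)+w(T_1)\le w(T_2)$, hence $w(T_0)=0$ and $w(T_1)=w(T_2)=N/2$ --- precisely the tie configuration in which the paper's orientation rule for $G^\dagger$ is ambiguous (read literally, it would orient both edges outward). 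So you have located the real subtlety, but your resolution is asserted rather than proved: an arbitrary ``fixed ordering on vertices'' does not obviously prevent two out-edges at a tied node. The clean fix is to make the tie-break globally consistent, e.g., perturb each $N_v$ by a distinct infinitesimal so that no edge is ever tied, after which your strict-inequality argument closes the case (and property (2) is unaffected). With that one sentence made precise, your proof is complete and is in fact tighter than the paper's at this step.
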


\begin{proof}
	By contradiction, assume there exists one node $u \in V$ with at least two out-going edges. Since $G$ has no vertices with degree 2, this means that 
	 $G^\dagger$ has three edges ${e_1} = (u, v_1)$, ${e_2} = (u, v_2)$, ${e_3} = (u, v_3)$. For each such edge, we have $\sum_{x \in V^{+}_{e_i}} N_x \geq \sum_{x \in V^{-}_{e_i}} N_x$, and thus $\sum_{x \in V^{+}_{e_i}} N_x \geq N/2$. Observe that because $G$ is a tree, it also holds that the vertex sets $V^+_{e_i}$ are disjoint. 
 Then we come to the following contradiction
$$ N = \sum_{x \in V} N_x \geq \sum_{i=1}^3 \sum_{x \in V^{+}_{e_i}} N_x  \geq 3N/2$$
	thus (1) is proved. 
	
Since $G^\dagger$ is a directed tree, it is easy to see that there must exist at least one node with no outgoing edges; otherwise, there would be a cycle in the graph, a contradiction.	Hence, it suffices to show that there is at most one such a node. By contradiction, assume two nodes $u,v$ with outdegree 0. 
 Consider the unique path between $u,v$: then, there must be a node in the path with out-degree at least two. However, this contradicts (1), thus (2) is proven as well.
\end{proof}

We denote the single node with out-degree zero as $r$, and call it the {\em root} of the tree. Every other node in $G^\dagger$ will point towards $r$, as the example in Figure~\ref{fig:directed-tree} illustrates. Observe that the root $r$ of the tree could be a compute node. But in this case, the algorithm that simply routes all the data to the root is asymptotically optimal, since the cost matches the cost of the lower bound in Theorem~\ref{thm:lb-cp1}. Hence, we will focus on the case where the root is not a compute node; in this case, it is easy to observe that all the nodes in $G^\dagger$ with in-degree 0 are exactly the compute nodes.

\begin{figure}
	\centering
	\includegraphics[scale=1.2]{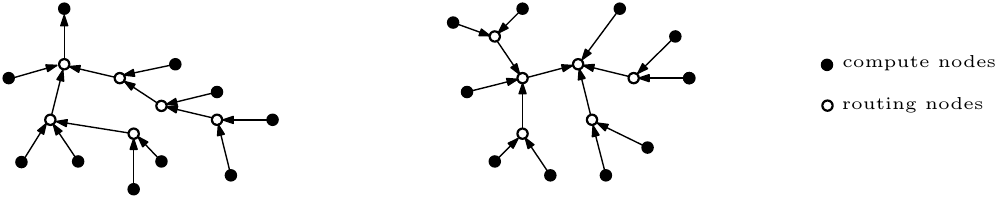}
	\caption{Two examples of a directed graph $G_\dagger$. The left one is rooted at a compute node and the right one is rooted at a router.  }
	\label{fig:directed-tree}
\end{figure}

A {\em cover} of $G^\dagger$ is a subset $S \subseteq V$ such that every leaf node has some ancestor in $S$. We will be interested in {\em minimal covers} of $G^\dagger$. Observe that the singleton set $\{r\}$ is trivially a minimal cover.   

\begin{theorem} \label{thm:lb-cp2}
	Let $G=(V,E)$ be a symmetric tree topology.
	Let $U$ be a minimal cover of $G^\dagger$ such that $U \neq \{r\}$, where $r$  
	is the root of $G^\dagger$. Then, any algorithm computing the cartesian product $R \times S$ for $|R| = |S| = N/2$ has (tuple) cost $\Omega(C_{LB})$, where
	\[C_{LB} =  \frac{N}{\sqrt{\sum_{v \in U} w^2_v}}, \]
	where $w_v$ is the capacity of the unique outgoing edge of $v$ in $G^\dagger$.
\end{theorem}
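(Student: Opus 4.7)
The plan is to use a pair-counting lower bound on the compute-node partition induced by $U$, in the spirit of the HyperCube-style bound for Cartesian product in MPC, with $w_v$ playing the role of per-machine capacity.

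First I would define, for each $v \in U$, the set $B_v$ of all compute nodes whose path to $r$ in $G^\dagger$ passes through $v$, and prove a structural lemma: if $U$ is minimal then no element of $U$ is a proper descendant of another in $G^\dagger$, so the sets $\{B_v\}_{v \in U}$ are pairwise disjoint and together partition $V_C$. The reason is that any $u \in U$ that is a proper descendant of some $v \in U$ would be redundant, because every leaf under $u$ is already covered by $v$, contradicting minimality. Since $U \neq \{r\}$, every $v \in U$ has an outgoing edge $e_v$ of bandwidth $w_v$ in $G^\dagger$; by the defining property of $G^\dagger$ we have $N^*_v := \sum_{x \in B_v} N_x \leq N/2$, and disjointness yields $\sum_{v \in U} N^*_v \leq N$.

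Next I would count pairs. Each of the $N^2/4$ pairs in $R \times S$ is enumerated at some compute node, hence at some node inside some $B_v$. Any pair enumerated in $B_v$ must have both its $R$- and $S$-element appear in $B_v$ at some point; the number of distinct $R$-elements ever in $B_v$ is at most $R^0_{B_v} + f^R_v$, where $R^0_{B_v}$ is the initial $R$-data in $B_v$ and $f^R_v$ is the total $R$-data crossing $e_v$ into $B_v$ across all rounds, and analogously for $S$. Since the total flow across $e_v$ is at most $C\, w_v$ and $R^0_{B_v} + S^0_{B_v} = N^*_v$, AM-GM bounds the number of pairs enumerated in $B_v$ by $(N^*_v + C w_v)^2/4$; summing over $v \in U$ then yields $N^2 \leq \sum_{v \in U}(N^*_v + C w_v)^2$.

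To finish, I would expand this sum as $\sum_v (N^*_v)^2 + 2C \sum_v N^*_v w_v + C^2 \sum_v w_v^2$, use $\sum_v (N^*_v)^2 \leq (\max_v N^*_v)(\sum_v N^*_v) \leq (N/2)\cdot N = N^2/2$, and apply Cauchy--Schwarz to the cross term: $\sum_v N^*_v w_v \leq \sqrt{\sum_v (N^*_v)^2}\,\sqrt{\sum_v w_v^2} \leq (N/\sqrt 2)\sqrt{\sum_v w_v^2}$. Substituting reduces the problem to the quadratic $N^2/2 \leq \sqrt{2}\, C N \sqrt{\sum_v w_v^2} + C^2 \sum_v w_v^2$ in the variable $C \sqrt{\sum_v w_v^2}$, whose positive root gives $C = \Omega(N/\sqrt{\sum_{v \in U} w_v^2})$, matching the claimed bound. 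The main obstacle I expect is the structural lemma about disjointness of $\{B_v\}_{v \in U}$ under minimality of $U$: without it $\sum_v (N^*_v)^2$ could inflate to $|U|(N/2)^2$ and the Cauchy--Schwarz step collapses; the remainder is a clean interplay of AM-GM (bounding pairs per subtree) and Cauchy--Schwarz (combining across subtrees).
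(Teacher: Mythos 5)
Your proposal is correct and follows the paper's strategy almost exactly: minimality makes the subtrees below the cover $U$ pairwise disjoint, each subtree's output is charged to the elements initially present there plus those crossing its unique outgoing edge (at most $\copt \cdot w_v$ in total), and summing the per-subtree products against $|R|\cdot|S| = N^2/4$ gives the bound. The one place you diverge is the treatment of the initial data: the paper applies Theorem~\ref{thm:lb-cp1} to the edge $e_u$ (whose lighter side is exactly $T_u$, by the orientation of $G^\dagger$) to get $|R'_u|+|S'_u| \le \copt \cdot w_u$, so every per-subtree term collapses to $4\copt^2 w_u^2$ and the conclusion is immediate; you instead keep the initial masses $N^*_v$ explicit, use only $N^*_v \le N/2$ and $\sum_{v \in U} N^*_v \le N$, and finish with Cauchy--Schwarz and a quadratic inequality in $\copt\sqrt{\sum_{v} w_v^2}$. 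Your route is self-contained---it does not lean on the first lower bound---at the price of a slightly longer closing computation; both arguments rest on the same structural lemma (incomparability of the elements of a minimal cover, hence disjointness of their subtrees), which you identify as the crux and prove correctly.
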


\begin{proof}

	
Let $e_u$ be the outgoing edge of $u \in U$ in $G^\dagger$, with capacity cost $w_u$. Let $T_u$ be the subtree rooted at $u$. From minimality of $U$, it follows that $T_u, T_v$ have disjoint vertex sets. Moreover, from the definition of a node cover, every compute node belongs in some (unique) subtree. This means that we can bound the output result by at most the union of the outputs in the compute nodes of each subtree. In the following, we will bound the maximum output size of a given subtree $T_u$.

Let $R_u',S_u'$ denote the elements of $R,S$ respectively that are in some compute node of $T_u$. Moreover, let $R_u'', S_u''$ be the elements of $R,S$ that go through link $e_u$ respectively.
	Then, the size of the results that can be produced at subtree $T_ u$ is at most $|R'_u \cup R''_u| \cdot |S'_u \cup S''_u|$.
	Observe the following:
	\begin{itemize}
		\item $|R''_u| \leq \copt  \cdot w_u$ and $|S''_u| \leq \copt  \cdot w_u$;
		\item $|R'_u| \leq \copt  \cdot w_u$ and $|S'_u| \leq \copt  \cdot w_u$.
		Indeed, since $w_u$ is an outgoing edge of $u$ in $G^\dagger$, Theorem~\ref{thm:lb-cp1} tells us that $\copt \cdot w_u\geq |R'_u| + |S'_u|$.
	\end{itemize}
	Hence, we can bound the number of outputs in $T_u$ as:
	\begin{align*}
		|R'_u \cup R''_u| \cdot |S'_u \cup S''_u| 
		& \leq (|R'_u| + |R''_{u}|) (|S'_u| + |S''_u|) \\
		& \leq (2 \cdot \copt  \cdot w_u) (2 \cdot \copt \cdot w_u) 
		 = 4 \cdot C_{opt}^2 \cdot w_u^2
	\end{align*}

	In order for the algorithm to be correct, the total size of the output must be at least $|R| \cdot |S|$. Summing over all nodes in the minimal cover $U$, we obtain $|R| \cdot |S |\geq 4 \cdot C_{opt}^2 \cdot \sum_{u \in U} w_u^2$.
	This concludes the proof.  
\end{proof}

\subsection{The Weighted HyperCube Algorithm}
\label{sec:whc}

 In this section, we present a deterministic one-round protocol on a symmetric star topology, named {\em weighted HyperCube} (wHC), which generalizes the HyperCube algorithm~\cite{afrati2011optimizing}. We assume that the data statistics $|R_v|$, $|S_v|$ are known to all compute nodes.

We give a strict ordering $\leq$ on the compute nodes in $V_C$. Each node assigns consecutive numbers to its local data. More specifically, node $v$ labels its data in $R_v$ from $1 + \sum_{u < v} |R_u| $ to $ \sum_{u \le v} |R_u|$, and data in $S_v$ from $1+ \sum_{u < v} |S_u|$ to $ \sum_{u \le v} |S_u|$. In this way, each element from $R$ is labeled with a unique index, as well as each one from $S$. In this way, each answer in the cartesian product can be uniquely mapped to a point in the grid $\square = \{1, 2, \dots, |R| \} \times \{1, 2, \dots, |S| \}$.

The wHC protocol assigns to each compute node $v$ a square $\square_v$ centered at  $(x_v, y_v)$ with dimensions $l_v \times l_v$. Then, a tuple $r_i \in R$ will be sent to $v$ if $ x_v - l_v \le i \le x_v + l_v$, and a tuple $s_j \in S$ will be sent to $v$ if $y_v - l_v \le j \le y_v + l_v$. After all tuples are routed, the cartesian product will be computed locally at each compute node. To guarantee correctness, we have to make sure that $\bigcup_v \square_v = \square$, i.e., the squares assigned to each node fully cover the grid.

We first compute the dimensions $l_v$ of the square assigned to each node. Intuitively, we want to make sure that $l_v$ is proportional to the capacity of the link. However, to make sure that we can {\em pack} the resulting set of squares without any overlap, we consider squares that are powers of 2. Specifically, 
\begin{align}
 l_v = 	\arg \min_k \{ 2^k \geq w_v \cdot L\}, \qquad \qquad L =  \frac{N}{\sqrt{\sum_u w_u^2}}
\end{align}

Second, we need to specify the positions of the squares, i.e. determine how they can be {\em packed} without any overlap. An example of such a packing is given in Figure~\ref{fig:whp}.
To pack the squares, we will make use of the following lemma.

\begin{lemma}[Packing Squares]
\label{lem:square:packing}
Let $S$ be a set of squares $d_i \times d_i$, where each $d_i$ is a power of two. Then, we can pack the squares in $S$ such that they fully cover a square of size at least  $1/2 \sqrt{ \sum_i d_i^2}$. 	
\end{lemma}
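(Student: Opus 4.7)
The plan is to use a \emph{bottom-up consolidation} procedure that repeatedly combines four squares of the same size into a virtual square of double the side length, and then to argue (by a geometric series bound) that the largest resulting virtual square has side at least $\tfrac{1}{2}\sqrt{\sum_i d_i^2}$.

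First, I would let $N_k$ denote the number of squares in $S$ with side $2^k$, and set $M = \sum_i d_i^2 = \sum_k N_k \cdot 4^k$. Then I perform the following consolidation procedure: for $k = 0, 1, 2, \ldots$, while $N_k \geq 4$, I pick four squares of side $2^k$, arrange them in a $2\times 2$ block to form a single \emph{virtual} square of side $2^{k+1}$, and update $N_k \leftarrow N_k - 4$, $N_{k+1} \leftarrow N_{k+1} + 1$. Since $4 \cdot 4^k = 4^{k+1}$, this step preserves total area, and the procedure terminates with $N_k \in \{0,1,2,3\}$ for every $k$.

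Let $m$ be the largest index with $N_m \geq 1$ after consolidation. The key counting step bounds the total area:
\[
M \;=\; \sum_{k=0}^{m} N_k \cdot 4^k \;\leq\; 3 \sum_{k=0}^{m} 4^k \;=\; 4^{m+1} - 1 \;<\; 4^{m+1},
\]
so $2^m > \sqrt{M}/2$. Thus, after consolidation, at least one virtual square of side $L := 2^m \geq \tfrac{1}{2}\sqrt{M}$ exists.

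Finally, I unfold the recursion: a virtual square of side $2^{k+1}$ was assembled from four (virtual or physical) squares of side $2^k$, each recursively built from physical squares in $S$. Placing each physical square at the position prescribed by this recursive tiling yields an overlap-free packing whose union is exactly an $L \times L$ square, with $L \geq \tfrac{1}{2}\sqrt{M}$, as required. The main (though mild) obstacle is simply to keep the bookkeeping of the recursive tiling clean; the arithmetic bound in step 3 is the only non-trivial content and is a direct geometric-series estimate enabled by the power-of-two hypothesis.
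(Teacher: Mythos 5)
Your proof is correct and follows essentially the same route as the paper's: repeatedly merge four equal squares into one of double the side, observe that at most three squares of each size remain, and conclude that the largest resulting square is fully tiled by physical squares and has side at least $\tfrac{1}{2}\sqrt{\sum_i d_i^2}$. The only (cosmetic) difference is that you obtain the bound $\sum_i d_i^2 < 4^{m+1}$ directly from the geometric series $3\sum_{k\le m}4^k$, whereas the paper derives the same inequality by packing all leftover squares inside a square of side $2^{i^*+1}$; both yield the identical conclusion.
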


\begin{proof}
We provide an algorithm for the packing. We start the following procedure in an increasing order of $i \geq 0$: for each $i$, if there are $4$ squares of size $2^i \times 2^i$ in $S$, we pack them into a larger square of size $2^{i+1} \times 2^{i+1}$. In this way, we can transform $S$ into a new set of squares $S'$, where for every $i$, there are at most $3$ squares of size $2^i \times 2^i$. It is now easy to see that, by induction starting from the smaller size, all squares of size $\leq 2^{i-1}$ can be packed inside a square of size $2^{i}$. 
Hence, we can pack all squares in $S'$ inside a square of size $2^{i^*+1}$, where $2^{i^*}$ is the dimension of the largest square in $S'$. To conclude the argument, observe that the square with dimension $2^{i^*}$ is fully packed. Also, $2^{i^*+1} \geq \sqrt{\sum_i d_i^2}$. Hence, we can fully pack a square of size at least $1/2 \sqrt{\sum_i d_i^2}$. 
\end{proof}

%



\begin{figure}
	\centering
	\includegraphics[scale=0.45]{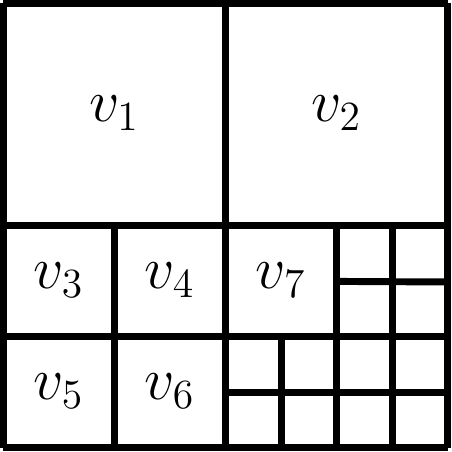}
	\caption{An example of packing squares.}
	\label{fig:whp}
\end{figure}

The next lemma bounds the cost of the wHC protocol.

\begin{lemma} \label{lem:whc}
Let $G$ be a symmetric star topology. Then, the wHC algorithm correctly computes the cartesian product $R \times S$ for $|R| = |S| = N/2$  with (tuple) cost $O(C)$, where
\[ C =  \max \left\{ \max_v \frac{N_v}{w_v}, \frac{N}{\sqrt{\sum_v w_v^2}} \right\}\]
\end{lemma}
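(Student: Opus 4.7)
The proof splits into a correctness check---that the squares $\{\square_v\}$ actually cover the $|R|\times|S|$ output grid $\square$---and a per-edge cost analysis. The plan is to handle correctness by a single application of Lemma~\ref{lem:square:packing}, and then, writing $o$ for the central router of the star, to bound the traffic on the outgoing edge $(v,o)$ and the incoming edge $(o,v)$ separately for each compute node $v$.

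For correctness, since $l_v$ is the smallest power of two with $l_v \ge w_v L$, we have $w_v L \le l_v \le 2 w_v L$. The lower bound gives
\[
\sum_v l_v^2 \;\ge\; L^2 \sum_v w_v^2 \;=\; \frac{N^2}{\sum_v w_v^2}\cdot \sum_v w_v^2 \;=\; N^2.
\]
Applying Lemma~\ref{lem:square:packing} to the side lengths $l_v$ (all powers of two) produces a placement of the squares that fully covers an axis-aligned square of side at least $\tfrac{1}{2}\sqrt{\sum_v l_v^2} \ge N/2$, which matches the side of $\square$ since $|R|=|S|=N/2$. Hence every grid point lies in some $\square_v$, and the corresponding pair $(r_i,s_j)$ is enumerated at node $v$.

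For the cost, fix $v \in V_C$. Adopting the same multicast convention used implicitly in the earlier analyses of this paper---a tuple sent from $v$ to several destinations crosses the link $(v,o)$ only once and is replicated at $o$---the outgoing edge transmits at most $|R_v|+|S_v|=N_v$ tuples, giving cost $N_v/w_v \le \max_u N_u/w_u$. On the incoming edge $(o,v)$, an $R$-tuple $r_i$ is received only if $i\in[x_v-l_v,x_v+l_v]$ and likewise for $S$-tuples, so the traffic is at most $2(2l_v+1)=O(l_v)$. Since $l_v \le 2 w_v L$, this is $O(l_v/w_v)=O(L)=O(N/\sqrt{\sum_u w_u^2})$. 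Maximizing over all edges of the star then yields total cost $O(C)$.

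The main subtlety is the calibration of $L$: Lemma~\ref{lem:square:packing} loses a factor of $\tfrac{1}{2}$, so $L$ must be large enough for the packed square to reach side $N/2$, yet any further slack in $L$ inflates the incoming-edge cost $O(l_v/w_v)$. The value $L = N/\sqrt{\sum_u w_u^2}$ is precisely where the correctness inequality $\sum_v l_v^2 \ge N^2$ and the cost bound $l_v/w_v = O(L)$ meet, so both parts of the argument go through with no slack to spare.
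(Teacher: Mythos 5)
Your proof is correct and follows essentially the same route as the paper's: correctness via a single application of Lemma~\ref{lem:square:packing} using $\sum_v l_v^2 \ge L^2\sum_v w_v^2 = N^2$, and a per-node cost split into outgoing traffic $N_v/w_v$ and incoming traffic $O(w_v L)/w_v = O(L)$. Your explicit remarks on the multicast convention at the router and on why $L$ is calibrated exactly at $N/\sqrt{\sum_u w_u^2}$ are points the paper leaves implicit, but they do not change the argument.
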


\begin{proof}
To prove correctness, we apply Lemma~\ref{lem:square:packing} with $S = \{\l_v \times l_v \mid v \in V_C \}$. Then, the squares fully pack a square of area at least 
\[ \frac{1}{4} \sum_{v \in V_C} (2^{l_v})^2 \geq \frac{1}{4} \sum_{v \in V_C} (L \cdot w_v)^2  = (N/2)^2 = |R| \cdot |S| \]
Hence, the whole grid can be covered.

Next, we analyze the cost of the algorithm. First, the cost of sending data is $\max_v {N_v/w_v}$.
For the cost of receiving, observe that node $v$ receives at most $2 \cdot (2 L \cdot w_v) = 4 w_v L$ tuples.
Hence, the cost of receiving is bounded by $4 L$. Combining these two costs obtains the desired result.
\end{proof}

\subsection{Warm-up on Symmetric Star}

Before we present the general algorithm for symmetric trees, we warm up by studying the simpler symmetric star case (Algorithm~\ref{alg:cartesian-lower-bound}). 

The algorithm checks whether the maximum data that some node holds exceeds $N/2$. If so, it is easy to observe that the strategy where every compute node sends their data to that node is optimal. If every node holds at most $N/2$ data initially, then in $G^\dagger$ all compute nodes of the star are directed to the central node $o$, which becomes the root of $G^\dagger$. In this case, running the wHC algorithm on the whole topology can be proven optimal.

\begin{algorithm}
	\caption{{\sc StarCartesianProduct}$(G, \dstr)$}
	\label{alg:cartesian-lower-bound}
	
	\If{$\max_u N_u > N/2$}{
		all compute nodes send their data to $\arg\max_u N_u$\;
	}
	\Else{
		run the wHC algorithm \;
	}
\end{algorithm}


\begin{lemma} \label{lem:star:cp:optimal}
On a symmetric star topology, the {\sc StarCP} algorithm correctly computes the cartesian product $R \times S$ for $|R| = |S| = N/2$ in a single round deterministically and with cost $O(1)$ away from the optimal.
\end{lemma}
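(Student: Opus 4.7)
The plan is to show correctness and then match the algorithm's cost against the two lower bounds of Theorem~\ref{thm:lb-cp1} and Theorem~\ref{thm:lb-cp2} separately in each branch of the protocol. Correctness is immediate in both cases: in the first branch all of $R$ and $S$ are collected at a single node $v^\star$, which can then locally emit $R \times S$; in the second branch correctness follows from Lemma~\ref{lem:whc}. Both branches use one communication round.

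For Case 1, let $v^\star = \arg\max_u N_u$, so $N_{v^\star} > N/2$. The protocol routes every other node's data through the center $o$ to $v^\star$, so the load is $N_u$ on each edge $(u,o)$ and $N - N_{v^\star}$ on $(o,v^\star)$, giving per-edge costs $N_u/w_u$ and $(N-N_{v^\star})/w_{v^\star}$. Applying Theorem~\ref{thm:lb-cp1} to the edge adjacent to $v^\star$ yields a lower bound $\min(N_{v^\star}, N - N_{v^\star})/w_{v^\star} = (N - N_{v^\star})/w_{v^\star}$. For any other compute node $u$, since $N_u \le N - N_{v^\star} < N/2$, Theorem~\ref{thm:lb-cp1} forces a lower bound of $N_u/w_u$ on that edge. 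Hence the algorithm matches the lower bound edge-by-edge, and in particular the maximum cost over all edges is $O(C_{LB})$.

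For Case 2, we have $N_u \le N/2$ for every compute node. By Lemma~\ref{lem:whc} the wHC cost is $O(\max\{\max_v N_v/w_v,\ N/\sqrt{\sum_v w_v^2}\})$. The first term is matched by Theorem~\ref{thm:lb-cp1}, which reduces to $\max_v N_v/w_v$ since $\min(N_u, N-N_u) = N_u$ under the Case 2 assumption. For the second term, consider $G^\dagger$: for each compute node $u$, the edge between $u$ and the center $o$ is oriented $u \to o$ because $N_u \le N - N_u$, so $o$ is the unique node of out-degree zero, i.e.\ the root $r$. Choose $U = V_C$; this is a cover of $G^\dagger$ different from $\{r\}$, and it is minimal because every leaf of $G^\dagger$ is itself a compute node and its only strict ancestor is $r$. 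Theorem~\ref{thm:lb-cp2} then gives a lower bound $N/\sqrt{\sum_{v \in V_C} w_v^2}$, matching the second term up to constants.

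The two parts of the argument are essentially independent and mostly routine once the case split is set up; the only point requiring care is verifying in Case 2 that the reoriented graph $G^\dagger$ has $o$ as its unique root and that $U = V_C$ is a valid minimal cover distinct from $\{r\}$, so that Theorem~\ref{thm:lb-cp2} is applicable. Both conditions are forced by the hypothesis $\max_u N_u \le N/2$, which is exactly the branch condition.
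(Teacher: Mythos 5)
Your proposal is correct and follows essentially the same route as the paper's proof: the same case split on whether $\max_u N_u > N/2$, matching the gather-at-$u^\star$ protocol against Theorem~\ref{thm:lb-cp1} edge by edge in the first case, and in the second case combining Lemma~\ref{lem:whc} with Theorem~\ref{thm:lb-cp1} and Theorem~\ref{thm:lb-cp2} applied to the minimal cover $U = V_C$ of $G^\dagger$ rooted at the center. Your justification that $V_C$ is a minimal cover (each leaf's only strict ancestor is the root, which is not in $V_C$) is slightly more explicit than the paper's, but the argument is the same in substance.
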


\begin{proof}[Proof of Lemma~\ref{lem:star:cp:optimal}]
	We distinguish the analysis into two cases, depending on whether $\max_u N_u > N/2$ or not.
	
	First, suppose that $\max_u N_u > N/2$. Let $u^* = \arg\max_u N_u$.
	For node $u^*$, we have $N- N_{u^*} < N/2 < N_{u^*}$.
	For every other node $v \neq u^*$,  it holds $N_v < N/2$, hence $N_v < N - N_v$.
	Hence, we can write Theorem~\ref{thm:lb-cp1} as:
	\[ C_{opt} \geq \max_v \frac{1}{2 w_v} \min \{N_v, N-N_v\} \ge \max \left\{ \frac{N-N_{u^*}}{2 w_{u^*}},  \max_{v \neq u^*} \frac{N_v}{2w_v} \right\} \]
	But this is exactly half the cost of the protocol where all nodes send their data to $u^*$.
	
	Suppose now that $\max_u N_u \leq N/2$. From Theorem~\ref{thm:lb-cp1}, we obtain the lower bound
	$C_{opt} \ge \max_{v} \frac{N_v}{2w_v}$. Additionally, observe that in $G^\dagger$ all compute nodes of the star are directed to the central node $o$, and hence $V_C$ is a minimal cover of $G^\dagger$. Indeed, if we add $\{o\}$ to $V_C$, the cover is not minimal, since $\{o\}$ is a minimal cover by itself. Plugging this cover in Theorem~\ref{thm:lb-cp2}, we obtain that $C_{opt} \ge N / \sum_v w_v^2$. To conclude, notice that these two lower bounds on $C_{opt}$ match the upper bound of wHC in 
	Lemma~\ref{lem:whc} within a constant factor.
\end{proof}

\subsection{Algorithm on Symmetric Tree}

We now generalize the techniques for the star topology to an arbitrary tree topology.

\paragraph{The Algorithm} Assume that the data statistics $|R_v|$, $|S_v|$ are known to all compute nodes. Similar to the wHC algorithm, each tuple from $R$ is labeled with a unique index, as well as each one from $S$. In this way, each answer in the cartesian product can be uniquely mapped to a point in the grid $\square = \{1, \dots, |R| \} \times \{1, \dots, |S| \}$. For simplicity, we split the routing phase into two steps.

Let $r$ be the root of the directed graph $G^\dagger$. In the first step, each compute node $v \in V_C$ sends its local data to $r$.

In the second step, we assign to each compute node $v \in V_C$ a square $\square_v$ such that every result $t = (t_r, t_s)$ is computed on some $v$. To compute $t$, associated tuples $t_r, t_s$ will be sent to $v$ at least once. In this step, every tuple sent to $v$ will be sent from the root $r$, which has gathered all necessary data in the first step. Next, we show how to find a balanced assignment on a tree and analyze its capacity cost with respect to the lower bound in Theorem~\ref{thm:lb-cp2}.

\paragraph{Balanced Packing on Symmetric Tree} 
Let $\zeta(u)$ be the set of children nodes of $u$ in $G^\dagger$, and $p_u$ the unique parent of $u$ in $G^\dagger$. To simplify notation, we use $w_v$ to denote the quantity $w(v,p_v)$.

The algorithm is split into two phases. First, it computes a quantity $\tilde{w}_v$ for each node $v$ in $G^\dagger$. For the leaf nodes, we have $\tilde{w}_v = w_v$, while for the internal nodes $\tilde{w}_v$ is computed in a bottom-up fashion (through a post-order traversal).
In the second phase, the algorithm computes a quantity $l_v$ for each node, but now in a top-down fashion (through a pre-order traversal).
As a final step, each compute node $v$ rounds up $(N/2)\cdot l_v$ to the closest power of 2, and then gets assigned a square of that dimension.

\begin{algorithm}	
	\caption{{\sc BalancedPackingTree}$(G)$}
	\label{alg:balanced-packing-tree}
	
	\ForAll{$v \in V \setminus \{r\}$ in post-order}{
		\If{$v$ is a leaf}{
			$\tilde{w}_v \gets w_v$\;
		}
		\Else{
		$\tilde{w}_v \gets  \min \{ w_v, \sqrt{\sum_{u \in \zeta(v)} \tilde{w}^2_u} \} $
		}
		}
	$\tilde{w}_r \gets \sqrt{\sum_{u \in \zeta(r)} \tilde{w}^2_u} $ \;	
	$l_r \gets 1 $\;
	\ForAll{$v \in V \setminus \{r\}$  in pre-order}{
 		 	$l_v \gets l_{p_v} \cdot  \tilde{w}_v / \sqrt{\sum_{u \in \zeta(p_v)} \tilde{w}^2_u}$\;
 	 	}
 	\ForAll{$v \in V_C$}{
 	  $d_v \gets \arg \min_k \{ 2^k \geq N \cdot l_v \}$\;
 	  assign to $v$ a square $d_v \times d_v$\;
 	}	
\end{algorithm}

The next lemma shows that Algorithm~\ref{alg:balanced-packing-tree} guarantees certain properties for the computed quantities.

\begin{lemma}
	\label{lem:properties}
	The following properties hold:
	\begin{packed_enum}
	\item For every non-root vertex $v$, $\tilde{w}_v \leq w_v$.
	\item For every vertex $v$, $l_v \leq \tilde{w}_v/\tilde{w}_r$.	
	\item There exists a minimal cover $U$ of $G^\dagger$ such that $\tilde{w}_r = \sqrt{\sum_{u \in U} w_u^2}$.
	\item For every vertex $u$, $l_u = \sqrt{\sum_{v \in T_u \cap V_C} l^2_v}$ where $T_u$ is the subtree rooted at $u$.
	\end{packed_enum}

\end{lemma}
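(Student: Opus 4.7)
My plan is to establish the four properties in the order given, as each later one builds on intuition (and in one case the result) of the earlier ones.

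For property (1), I would simply invoke the definition on line 5 of Algorithm~\ref{alg:balanced-packing-tree}: for any non-root internal vertex we take a minimum of which $w_v$ is one argument, and for a leaf $\tilde w_v = w_v$ by line 3. For property (2), I plan an induction on the depth of $v$ in $G^\dagger$, the base case being $l_r = 1 = \tilde w_r/\tilde w_r$. For the inductive step, I would substitute the recursive formula $l_v = l_{p_v}\cdot \tilde w_v/\sqrt{\sum_{u \in \zeta(p_v)} \tilde w_u^2}$, use the inductive hypothesis $l_{p_v} \le \tilde w_{p_v}/\tilde w_r$, and then note that by the definition of $\tilde w_{p_v}$ we have $\tilde w_{p_v} \le \sqrt{\sum_{u \in \zeta(p_v)} \tilde w_u^2}$, so the ratio cancels and we are left with $\tilde w_v/\tilde w_r$.

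For property (3), I would give an explicit recursive construction of a candidate cover $U_v \subseteq T_v$ together with the invariant $\tilde w_v^2 = \sum_{u \in U_v} w_u^2$ and the additional invariant that $U_v$ is an antichain in $T_v$ that covers every leaf of $T_v$. Set $U_v = \{v\}$ when $v$ is a leaf, set $U_v = \{v\}$ when $v$ is an internal non-root vertex and the algorithm picks $\tilde w_v = w_v$, and otherwise set $U_v = \bigcup_{u \in \zeta(v)} U_u$. In the first two cases both invariants are immediate; in the third case, the antichain property is preserved because the subtrees $T_u$ for $u \in \zeta(v)$ are disjoint, and the sum identity follows by summing the inductive identities over the children since in this case $\tilde w_v^2 = \sum_{u \in \zeta(v)} \tilde w_u^2$. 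At the root, the recurrence of line 6 guarantees we are in the ``take children'' case, so $U := U_r = \bigcup_{u \in \zeta(r)} U_u$ is the desired minimal cover. I would spend the most care in checking minimality: an antichain that covers all leaves is automatically a minimal cover, because deleting any vertex $u$ from it leaves the leaves in $T_u$ uncovered (no other element of the antichain can be an ancestor of such a leaf). This is the step I expect to be the subtlest, since ``minimal cover'' is the one place where the geometry of $G^\dagger$ really enters.

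Finally for property (4), I would again induct on the subtree, noting that since compute nodes are leaves in the (transformed) tree, the base case $u \in V_C$ gives $T_u \cap V_C = \{u\}$ and the identity holds trivially. For an internal $u$, I would first reduce to the single-level claim $l_u^2 = \sum_{v \in \zeta(u)} l_v^2$ by applying the inductive hypothesis to each child of $u$ and collecting the sums. This single-level identity is then a one-line computation: from $l_v = l_u \cdot \tilde w_v/\sqrt{\sum_{u' \in \zeta(u)} \tilde w_{u'}^2}$ we get
\[
\sum_{v \in \zeta(u)} l_v^2 \;=\; l_u^2 \cdot \frac{\sum_{v \in \zeta(u)} \tilde w_v^2}{\sum_{u' \in \zeta(u)} \tilde w_{u'}^2} \;=\; l_u^2,
\]
which closes the induction and completes the proof.
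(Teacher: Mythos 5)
Your proposal is correct and follows essentially the same route as the paper's proof: property (1) by inspection, properties (2)--(4) by induction on the tree with the same recursive identities. The only difference is one of detail: you explicitly justify the minimality of the constructed cover in (3) via the antichain invariant and write out the one-line cancellation behind the single-level identity $l_u^2 = \sum_{v \in \zeta(u)} l_v^2$ in (4), both of which the paper asserts without proof.
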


\begin{proof}
Property (1) is straightforward from the algorithm. 
\smallskip

We prove property (2) by induction. For the base case, $v$ is the root. In this case, $l_r=1$, so the inequality holds with equality. 
Consider now any non-root vertex $v$ with parent $p_v$. We then have:
	\begin{align*}
	l_v = \frac{ l_{p_v} \cdot \tilde{w}_v}{\sqrt{\sum_{u \in \zeta(p_v)} \tilde{w}^2_{u}}} \le  
	\frac{\tilde{w}_v}{\tilde{w}_r} \cdot \frac{\tilde{w}_{p_v}}{\sqrt{\sum_{u \in \zeta(p_v)} \tilde{w}^2_{u}}} \le \frac{\tilde{w}_v}{\tilde{w}_r}
	\end{align*}
The first inequality holds from the inductive hypothesis for the parent node $p_v$. The second inequality comes from line 5 of the algorithm, which implies that $\tilde{w}_v \le \sqrt{\sum_{u \in \zeta(v)} \tilde{w}^2_{u}}$ for every non-leaf vertex $v$.
\smallskip

We also use induction to show property (3). For a subtree rooted at leaf node $v$, $U = \{v\}$ is a minimal cover. In this case, $\tilde{w}_v = w_v = \sqrt{\sum_{u \in U} w_u^2}$. For the induction step, consider some non-leaf node $v$. If $\tilde{w}_v = w_v$, then $\tilde{w}_v = \sqrt{\sum_{u \in U} w_u^2}$ holds for the minimal cover $U = \{v\}$. Otherwise, $\tilde{w}_v = \sqrt{\sum_{u \in \zeta(v)} \tilde{w}^2_u}$. From the induction hypothesis, there exists a minimal cover $U_u$ for the subtree rooted at $u \in \zeta(v)$ such that $\tilde{w}_u^2 = \sum_{t \in U_u} {w}^2_t $. Moreover, it is easy to see that the set $U = \bigcup_{u \in \zeta(v)} U_u$ is a minimal cover for the subtree rooted at $v$. Hence, we can write:
	\begin{align*}
\tilde{w}_v = \sqrt{\sum_{u \in \zeta(v)} \tilde{w}^2_u}
= \sqrt{\sum_{u \in \zeta(v)} \sum_{t \in U_u} {w}^2_t}
= \sqrt{\sum_{t \in U} {w}^2_t}
	\end{align*}

The property (4) directly follows the Algorithm~\ref{alg:balanced-packing-tree}.  The base case for $u \in V_C$ always holds. Conasider any non-leaf node $u$. By induction, assume $l_x = \sqrt{\sum_{v \in T_x \cap V_C} l^2_v}$ for each node $x \in \zeta(u)$. Implied by line 9 in Algorithm~\ref{alg:balanced-packing-tree}, we have
\[l_u = \sqrt{\sum_{x \in \zeta(u)} l^2_x} = \sqrt{\sum_{x \in \zeta(u)} \sum_{v \in T_x \cap V_C} l^2_v}  =  \sqrt{\sum_{v \in T_u \cap V_C} l^2_v}.\]
This concludes our proof.	
\end{proof}

It still remains to specify the position in the grid for each square assigned to a compute node.

%
%

\paragraph{Packing squares} 
In this part, we discuss how we can pack each square of dimension $d_v$ assigned to leaf node $v$ inside $\square$. 
Our goal is to find an assignment (packing) of each square to compute nodes $V_C$ such that for each vertex $u$, the number of elements that cross the link $(u,p_u)$ is bounded by $O(N \cdot l_u)$. 

%
%
We visit all vertices in bottom-up way, starting from the leaves. We recursively assign to each node $v$ a set of squares in the form of $S_v = \{(2^i, c_i): c_i \in \{0,1,2,3\}\}$, meaning that there are $c_i$ squares of dimensions $2^i \times 2^i$.

For every leaf node $v \in V_C$, only one square is assigned to $v$ by Algorithm~\ref{alg:balanced-packing-tree}. 
Consider some non-leaf node $u$. Each of its children $v \in \zeta(u)$ is assigned with a set of squares $S_v$. We start the following procedure in an increasing order of $i \geq 0$: for each $i$, if there are $4$ squares of size $2^i \times 2^i$ in $\bigcup_{v \in \zeta(u)} S_v$, we pack them into a larger square of size $2^{i+1} \times 2^{i+1}$. In this way, we can transform $\bigcup_{v \in \zeta(u)} S_v$ into a new set of squares $S_u$, where for every $i$, there are at most $c_i \le 3$ squares of dimensions $2^i \times 2^i$. 


Next we bound the number of elements that cross the link $(u, p_u)$ for each node $u \in V$, which is assigned with the set of squares $S_u$. Let $T_u$ be the subtree rooted $u$. Let $i^*$ be the largest integer such that $c_{i^*} \neq 0$. Note that each square of dimensions $2^i \times 2^i$ includes $2^i$ elements from both $R$ and $S$. Then, the total number of elements for all squares in $S_u$ is $\sum_i c_i \cdot 2 \cdot 2^i \le  2 \cdot (c_{i^*} +1) \cdot 2^{i^*} \le 8 \cdot 2^{i^*}$, which can be further bounded by
\begin{align*} 
\le & \ 8 \cdot \sqrt{\sum_{v \in T_u \cap V_C} d^2_v} \le 16 \cdot N \cdot \sqrt{\sum_{v \in T_u \cap V_C} l^2_v}  = 16 \cdot N \cdot l_u 
\end{align*}
The second inequality is implied by Algorithm~\ref{alg:balanced-packing-tree}, while the third inequality comes from the fact that  $d_v \le 2 N \cdot l_v$ for each compute node $v \in V_C$. The last equality is implied by Lemma~\ref{lem:properties}.


\begin{theorem} \label{thm:tree:cp:optimal}
	\xiao{On a symmetric tree topology $G=(V,E)$, the cartesian product $R \times S$ for $|R| = |S| = N/2$ can be computed deterministically in a single round optimally. }
\end{theorem}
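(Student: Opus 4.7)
The plan is to argue correctness and optimality of the two phases separately, relying on the two lower bounds of Theorem~\ref{thm:lb-cp1} and Theorem~\ref{thm:lb-cp2} and on the four properties of Lemma~\ref{lem:properties}. I first handle the case where the root $r$ of $G^\dagger$ is not a compute node; if it is, the earlier discussion observed that the trivial protocol that routes all data to $r$ already matches the lower bound of Theorem~\ref{thm:lb-cp1} up to a constant.

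For correctness I would show that the squares of side $d_v$ assigned to the compute nodes can be packed so as to cover the whole grid $\square$. Property~(4) of Lemma~\ref{lem:properties} applied at the root gives $1 = l_r^2 = \sum_{v \in V_C} l_v^2$, and since $d_v \ge N \cdot l_v$ by construction, the total area of the assigned squares is $\sum_{v \in V_C} d_v^2 \ge N^2 \ge 4|R||S|$. The constructive packing described after Algorithm~\ref{alg:balanced-packing-tree} merges, at every internal vertex, four equal squares of dimension $2^i \times 2^i$ into one of dimension $2^{i+1} \times 2^{i+1}$, so at every vertex we retain at most three squares per power-of-two dimension. A routine induction (in the spirit of Lemma~\ref{lem:square:packing}) then shows that the collection of squares arriving at $r$ fits inside $\square$, so every grid cell lies in some $\square_v$ and each Cartesian-product pair is produced somewhere.

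The cost analysis handles the two phases separately. In phase~1 data flows along edges directed toward $r$ in $G^\dagger$; on link $(u, p_u)$ the total number of elements transferred is $\sum_{v \in T_u \cap V_C} N_v$, which by the construction of $G^\dagger$ equals $\min\{\sum_{x \in V_e^-} N_x, \sum_{x \in V_e^+} N_x\}$, matching Theorem~\ref{thm:lb-cp1} up to a constant. In phase~2 data flows along the reverse link $(p_u, u)$, which by symmetry has the same capacity $w_u$; the analysis preceding the theorem already gives at most $16\,N\,l_u$ elements crossing it. Combining properties~(1) and~(2) of Lemma~\ref{lem:properties} yields $l_u \le \tilde{w}_u / \tilde{w}_r \le w_u / \tilde{w}_r$, so the per-link cost in phase~2 is $O(N/\tilde{w}_r)$. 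Property~(3) supplies a minimal cover $U$ of $G^\dagger$ with $\tilde{w}_r = \sqrt{\sum_{u \in U} w_u^2}$, and the inductive construction in that property never adds the root to $U$, so $U \neq \{r\}$ and Theorem~\ref{thm:lb-cp2} applies.

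The main obstacle is the packing argument: one must verify that the squares, assembled recursively along the tree while preserving the invariant of at most three squares per dimension, actually fit inside $\square$ and, more importantly, that the total volume of squares flowing across a single link $(p_u, u)$ in phase~2 is bounded by $O(N \cdot l_u)$ rather than by some cumulative quantity built up from below. Once Lemma~\ref{lem:properties} and the packing invariant are in hand, the rest reduces to matching the phase-wise cost against the two lower bounds edge by edge.
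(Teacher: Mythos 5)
Your proposal is correct and follows essentially the same route as the paper's proof: correctness via the recursive power-of-two packing together with $\sum_{v\in V_C} l_v^2 = 1$ and $d_v \ge N\cdot l_v$, and optimality by matching phase~1 against Theorem~\ref{thm:lb-cp1} edge by edge and phase~2 against Theorem~\ref{thm:lb-cp2} via $l_u \le w_u/\tilde{w}_r$ and property~(3) of Lemma~\ref{lem:properties}. The only quibble is the phrase ``fits inside $\square$'' --- what the packing argument actually yields (and what you need) is that the merged squares \emph{fully cover} a square of side at least $N/2$, hence cover the grid; your explicit check that the minimal cover $U$ from property~(3) satisfies $U \neq \{r\}$ is a detail the paper leaves implicit.
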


\begin{proof}
To prove the correctness of the algorithm, we need to show that the packing of the squares fully covers the $|R| \times |S|$ grid. Indeed, consider the largest square $2^{i^*} \times 2^{i*}$ that occurs in the set of squares $S_r$ assigned to the root node. Observe first that we can pack all squares in $S_r$ inside a $2^{i^*+1} \times 2^{i^*+1}$ square, and thus
\[2^{2(i^*+1)} \geq \sum_v d_v^2 \geq N^2  \sum_{v \in V_C} l_v^2 = N^2\]
Hence, $2^{2i^*} \geq (N/2)\cdot(N/2) = |R| \cdot |S|$, which means that the grid is fully packed by the largest square in $S_r$.

We next show that the cost is asymptotically close to the lower bounds in Theorem~\ref{thm:lb-cp1} and Theorem~\ref{thm:lb-cp2}.	
 It can be easily checked that the number of elements transmitted through any link $e$ at the first step is at most $O\left(\min \{\sum_{v \in V_e^-} N_v, \sum_{v \in V_e^+} N_v \}\right)$, matching the lower bound in Theorem~\ref{thm:lb-cp1}.
For the second step, we have bounded the number of elements that cross link $(u,p_u)$ by $O(N \cdot l_v)$. Lemma~\ref{lem:properties} implies that $N \cdot l_v \leq N \cdot w_v/ \sqrt{\sum_{u \in U} w_u^2}$ for some minimal cover $U$ of $G^\dagger$, hence matching the lower bound in  	Theorem~\ref{thm:lb-cp2}.
\end{proof}

\subsection{Discussion on Unequal Case}
\label{sec:discussion}

At last, we discuss the difficulty of computing the cartesian product $R \times S$ with $|R| \neq |S|$ on a symmetric star topology. W.l.o.g., assume $|R| < |S|$. 
The first lower bound following the same arguement in Theorem~\ref{thm:lb-cp1} is $\Omega(C_{LB})$ where \[C_{LB} = \max_{v \in V_C} \frac{1}{w_v} \cdot \min \big \{N_v, N - N_v, |R| \big \}\] 

We next see how the counting argument yields the second lower bound under the condition $\max_v N_v < \frac{N}{2}$. Let $C$ be the cost of any correct algorithm.
Let $R_u', S_u'$ be the elements of $R,S$ received by $u$.  Then, the size of the results that can be produced at $u$ is $|R_u \cup R'_u| \cdot |S_u \cup S'_u|$. Observe the following:
\begin{itemize}
	\item $|R'_u| \leq \copt  \cdot w_u$ and $|S'_u| \leq \copt  \cdot w_u$;
	\item If $N_u < |R|$, $|R_u \cup R'_u| \leq 2 \copt  \cdot w_u$ and $|S_u \cup S_u'| \leq 2 \copt  \cdot w_u$ 
	\item If $N_u \ge |R|$, $\copt  \cdot w_u \ge |R|$.
\end{itemize}
Summing over all node,  the total size of the output must be at least $|R| \cdot |S|$. We then obtain  
\begin{align*}
|R| \cdot |S| \le & \sum_{u \in V_C} (|R'_u| + |R''_{u}|) (|S'_u| + |S''_u|) \\
	\leq  & \sum_{u \in V_C: N_u < |R|} 2 \min\{C\cdot w_v, |R|\} \cdot 2 \min\{C \cdot w_v, |S|\} 
	+  \sum_{u \in V_C: N_u \ge |R|} |R| \cdot  \{C \cdot w_v + S_u, |S|\}
\end{align*}
whose minimizer gives the second lower bound, which becomes rather complicated without a clean form as Theorem~\ref{thm:lb-cp2}.

This is just an intuition of why the unequal case would make the lower bound hard even on the symmetric star. In Appendix~\ref{appendix:cp-unequal}, we give a more detailed analysis on the lower bound, as well as an optimal algorithm. Extending our current result to the general symmetric tree topology is left as future work.

	\section{Sorting}
\label{sec:sorting}

In the sorting problem, we are given a set $R$ whose elements are drawn from a totally ordered domain. 
We first define an ordering of compute nodes in the following way: after picking an arbitrary node as the root, any left-to-right traversal of the underlying network tree is a valid ordering of compute nodes.  The goal is to redistribute the elements of $R$ such that on an ordering of compute nodes as $v_1, v_2,\cdots,v_{|V_C|}$,  elements on node $v_i$ are always smaller than those on node $v_j$ if $i < j$. 

Given an initial distribution $\dstr$ of the data across the compute nodes, we denote by 
$N^\dstr_v$ the initial data size in node $v$. 
Whenever the context is clear, we drop the superscript $\dstr$ from the notation.

\subsection{Lower Bound}
\label{sec:sorting-lb}

Our lower bound for sorting has the same form as the one for set intersection, with the only difference that the cost is expressed as tuples, and not bits. 

\begin{theorem} \label{thm:lb:sorting}
	Let $G=(V,E)$ be a symmetric tree topology.
	Any algorithm sorting elements in a set $R$ has (tuple) cost $\Omega(C_{LB})$, where
	\[C_{LB} = \max_{e \in E} \frac{1}{w_e} \cdot \min \left \{\sum_{v \in V^-_e} N_v,  \sum_{v \in V^+_e} N_v \right\}. \]
\end{theorem}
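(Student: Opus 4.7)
The proof would follow the edge-by-edge template of Theorem~\ref{thm:lb}, but in place of the lopsided set disjointness reduction I would use a direct adversarial counting argument tailored to sorting. The essential structural observation is that for any edge $e$, removing $e$ disconnects the tree $G$ into two subtrees whose compute nodes are precisely $V_e^-$ and $V_e^+$, and any left-to-right traversal of the whole tree visits the compute nodes of each subtree as a contiguous block. Hence $V_e^-$ and $V_e^+$ appear as two consecutive runs in the sorting ordering; without loss of generality $V_e^-$ precedes $V_e^+$.

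Fix such an edge $e$ and write $L = \sum_{v \in V_e^-} N_v$ and $R = \sum_{v \in V_e^+} N_v$. I would construct an adversarial input consisting of $L + R$ distinct values $a_1 < a_2 < \cdots < a_{L+R}$, placing the $L$ largest of them on $V_e^-$ (spread among its nodes so that each $v \in V_e^-$ holds $N_v$ elements) and the $R$ smallest on $V_e^+$. Any correct algorithm must produce a sorted output: if it finishes with $L'$ elements on $V_e^-$ (and $L+R-L'$ on $V_e^+$), then by the ordering constraint those must be exactly $\{a_1, \ldots, a_{L'}\}$ and the remaining $\{a_{L'+1}, \ldots, a_{L+R}\}$ must sit on $V_e^+$. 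Comparing initial and final placements, the total number of elements forced to cross $e$ (in either direction, summed over all rounds) equals $\min(L', R) + (L + R - \max(L', R))$, and a short case split on whether $L' \leq R$ or $L' > R$ shows this quantity is at least $\min(L, R)$ for every feasible $L' \in \{0, 1, \ldots, L+R\}$.

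By symmetry of the link, $e$ carries data in two directions, each with bandwidth $w_e$; hence at least one direction must carry $\Omega(\min(L, R))$ tuples in total, giving $\sum_i |Y_i(e)|/w_e = \Omega(\min(L, R)/w_e)$ for that direction. Since $\cost(\algo) = \sum_i \max_{e'} |Y_i(e')|/w_{e'} \geq \sum_i |Y_i(e)|/w_e$ for every fixed edge, this yields an $\Omega(\min(L,R)/w_e)$ lower bound attributable to $e$, regardless of the number of rounds. Maximizing over all edges in $E$ produces the claimed $\Omega(C_{LB})$ bound. The only mildly delicate step is the structural claim that $V_e^-$ and $V_e^+$ are contiguous runs in the ordering, which I would justify directly from the definition of a left-to-right tree traversal; once that is in hand, the rest reduces to the elementary counting above.
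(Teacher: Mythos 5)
There is a genuine gap, and it sits exactly at the step you flagged as ``mildly delicate.'' For an edge $e$, only the side of $e$ lying \emph{away} from the chosen root is guaranteed to be a contiguous block of positions in a left-to-right traversal; the other side is the complement of that block, which in general splits into a prefix \emph{and} a suffix of the ordering (it is contiguous only cyclically). So the claim that $V_e^-$ and $V_e^+$ ``appear as two consecutive runs, with $V_e^-$ preceding $V_e^+$'' fails for every edge whose block side does not touch position $1$ or position $|V_C|$.

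This is not cosmetic: your adversarial placement collapses on such edges. Suppose $V_e^+$ occupies positions $i,\dots,j$ with $1<i$ and $j<|V_C|$, and you place the $\sum_{v\in V_e^+}N_v$ smallest values on $V_e^+$ and the largest values on $V_e^-$. The problem puts no constraint on how many elements each node holds at the end, so the algorithm may finish with positions $1,\dots,i-1$ empty, keep all the small values inside $V_e^+$, and shuffle the large values only among the nodes of $V_e^-$ --- whose pairwise connecting paths lie entirely on one side of $e$ and never use it. That is a valid sorted output with \emph{zero} traffic across $e$. Your counting identity $\min(L',R)+(L+R-\max(L',R))\ge\min(L,R)$ is correct as algebra, but it presupposes that the final contents of $V_e^-$ form a rank prefix, which is exactly what fails in the wrap-around case. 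The paper's proof avoids this by using an \emph{interleaved} initial placement (all odd ranks first, then all even ranks): whatever rank interval the algorithm assigns to the contiguous side of $e$, roughly half of that interval is missing from it initially and roughly half of the initial contents fall outside it, so $\Omega(\min\{\sum_{v\in V_e^-}N_v,\sum_{v\in V_e^+}N_v\})$ crossings are forced no matter how the final loads are chosen; the long case analysis in the paper is precisely the bookkeeping for the cyclic-interval structure your argument assumes away. The remaining pieces of your proposal --- reducing crossing count to $\cost(\algo)$ via a single fixed edge, and the observation that it suffices to exhibit a bad input for the maximizing edge --- are fine.
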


\begin{figure}
	\centering
	\includegraphics[scale=1.2]{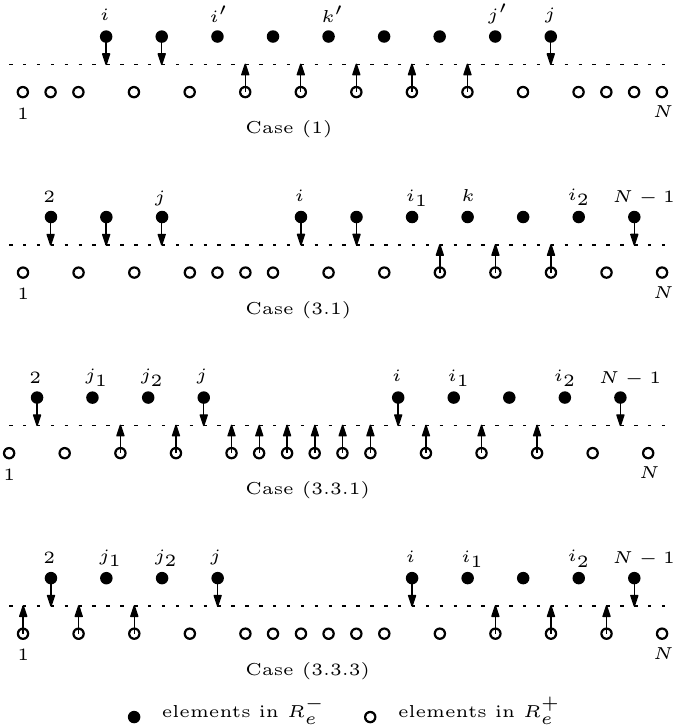}
	\caption{Data exchange between $V^-_e, V^+_e$.}
	\label{fig:case}
\end{figure}

\begin{proof}
	We construct an initial data distribution as follows. Assume elements in $R$ are ordered as $r_1, r_2, \cdots, r_N$, where $i$ is the rank of element $r_i$ in $R$. Without loss of generality, assume $N$ is even. We assign elements to compute nodes in the ordering of $\{r_1, r_3, \cdots, r_{N-1}, r_2, r_4, \cdots, r_{N}\}$. Moreover, we pick one arbitrary route node of $G$ as the root, where all compute nodes are leaves of the tree. All compute nodes in $V_C$ are also labeled as $v_1, v_2,\cdots, v_{|V_C|}$ in an left-to-right traversal ordering, i.e., recursively traversing the leaves in the left subtree and then the right subtree. For example, the node $v_1$ with initial data size $N_1$ will be assigned with elements $\{r_1, r_3, \cdots, r_{2N_1-1}\}$ if $N_1 \le \frac{N}{2}$, and $\{r_1, r_3, \cdots, r_{N-1}, r_2, r_4,\cdots, r_{2N_1 - N}\}$ otherwise. We need to argue that any algorithm correctly sorting $R$ under this initial distribution must have a cost $\Omega(C_{LB})$. 
	
	Consider an arbitrary edge $e \in E$. Removing $e$ defines a partition of $V_C$ as $V^-_e, V^+_e$. Denote $R^-_e = \bigcup_{v \in V^-_e} R_v$ and $R^+_e = \bigcup_{v \in V^+_e} R_v$. It should be noted that $R^-_e$ or $R^+_e$ is a sub-interval of $\{r_1, r_3, \cdots,r_{N-1}, r_2, r_4,\\ \cdots, r_N\}$, or a  sub-interval of $\{r_2, r_4, \cdots, r_N, r_1, r_3, \cdots, r_{N_1}\}$. Note that every element transmitted between $V^-_e$ and $V^+_e$ must go through edge $e$.  Without loss of generality, assume $|R^-_e| \le \frac{N}{2} \le |R^+_e|$. Then it suffices to show that the total number of elements exchanged between $V^-_e$ and $V^+_e$ is at least $\Omega(|R^-_e|)$. 

	In the extreme case, there is only one element in $R^-_e$, say $R^-_e = \{r_i\}$. If $r_i$ is not sent through $e$, at least one element in $R^+_e$ must be sent to $V^-$; otherwise, no comparison between $r_i$ and any element $r_j \in R^+_e$ is performed, contradicting to the correctness of algorithms. So at least one element is transmitted through edge $e$. In general, at least two elements are in $R^-_e$. We further distinguish four cases: (1) $r_2 \notin R^-_e, r_{N} \notin R^-_e$; (2) $r_1 \notin R^-_e, r_{N-1} \notin R^-_e$; (3) $r_2 \in R^-_e, r_{N-1} \in R^-_e$; (4) $r_1 \in R^-_e, r_{N} \in R^-_e$. Note that (2) can be argued symmetrically with (1) and (4) can be argued symmetrically with (3). 
	
	Case (1): $r_2 \notin R^-_e, r_{N} \notin R^-_e$. In this case, the $R^-_e$ must be a subset of $\{r_1, r_{3}, \cdots, r_{N-1}\}$. Let $i, j$ be the smallest and largest rank of elements in $R^-_e$. If all elements in $R^-_e$ have been sent from $V^-_e$ to $V^+_e$, then we are done. Otherwise, let $i', j'$ be the smallest and largest rank of elements in $R^-_e$ which are not sent from $V^-_e$ to $V^+_e$. Furthermore, if all elements in $R^-_e - \{r_{i'}, r_{j'}\}$ are sent from $V^-_e$ to $V^+_e$, it can be easily checked that the number of such elements is at least $\frac{|R^-_e|}{2}$. Otherwise, there is $r_{k'} \in R^-_e - \{r_{i'}, r_{j'}\}$ not sent from $V^-_e$ to $V^+_e$. By the definition, $r_{i'} < r_{k'} < r_{j'}$. Implied by the ordering of compute nodes, all elements in $[r_{i'}, r_{j'}]$ should reside on $V^-_e$ when the algorithm terminates. In this case, each element in $[r_{i'}, r_{j'}] - R^-_e$ should be sent from $V^+_e$ to $V^-_e$, and each in $\{r_i, r_{i+2}, \cdots, r_{i'-2}\} \cup \{r_{j'+2}, r_{j'+4}, \cdots, r_j\}$ are sent from $V^-_e$ to $V^+_e$, as illustrated in Figure~\ref{fig:case} (Due to the page limit, the figure is moved to Appendix~\ref{appendix:lb:sorting}). So the number of elements transmitted through edge $e $ is at least $\frac{i'-i}{2} + \frac{j-j'}{2} + \frac{j'-i'}{2} = \frac{j - i}{2} \ge |R^-_e| -1 \ge \frac{|R^-_e|}{2}$. 
	
	Case (3): $r_{N-1} \in R^-_e, r_2 \in R^-_e$. Let $i$ be the smallest odd rank and $j$ be the largest even rank of elements in $R^-_e$. Note that $j < i$ since $|R^-_e|  \le \frac{N}{2}$. We further consider three cases as below. 
	
	Case (3.1): all elements in $\{r_2, r_4, \cdots, r_j\}$ are sent from $V^-_e$ to $V^+_e$. If all elements in $\{r_i,r_{i+2}, \cdots,\\r_{N-1}\}$ are also sent from $V^-_e$ to $V^+_e$, then we are done. Otherwise, let $i_1, i_2$ be the smallest and largest rank of elements in $\{r_i,r_{i+2}, \cdots,r_{N-1}\}$ not sent from $V^-_e$ to $V^+_e$. Furthermore, if all elements in $\{r_i,r_{i+2}, \cdots,r_{N-1}\} - \{r_{i_1}, r_{i_2}\}$ are sent from $V^-_e$ to $V^+_e$, it can be easily checked that the number of elements sent from $V^-_e$ to $V^+_e$ is at least $\frac{|R^-_v|}{2}$. Otherwise, there is $r_{k'} \in \{r_i,r_{i+2}, \cdots,r_{N-1}\} - \{r_{i_1}, r_{i_2}\}$ not sent from $V^-_e$ to $V^+_e$. By the definition, $r_{i_1} < r_{k} < r_{i_2}$. Implied by the ordering of compute nodes, all elements in $[r_{i_1}, r_{i_2}]$ should reside on $V^-_e$ when the algorithm terminates. In this case, each element in $[r_{i_1}, r_{i_2}] - R^-_e$ should be sent from $V^+_e$ to $V^-_e$, and each in $\{r_2, r_4, \cdots, r_j\} \cup \{r_i, r_{i+2}, \cdots, r_{i_1-2}\} \cup \{r_{i_2+2}, r_{i_2+4}, \cdots, r_{N-1}\}$ are sent from $V^-_e$ to $V^+_e$, as illustrated in Figure~\ref{fig:case}. So the number of elements transmitted through edge $e $ is at least $\frac{j}{2} + \frac{i_1 - i}{2} + \frac{N-1-i_2}{2} + \frac{i_2 - i_1}{2} = \frac{N-1+j-i}{2}  = |R^-_e| -1 \ge \frac{|R^-_e|}{2}$. 
	
	Case (3.2): all elements in $\{r_i,r_{i+2}, \cdots,r_{N-1}\}$ are sent from $V^-_e$ to $V^+_e$, which can be argued symmetrically.
	
	Case (3.3): at least one element in $\{r_2, r_4, \cdots, r_j\}$ and one element in $\{r_i,r_{i+2}, \cdots,r_{N-1}\}$ are not sent from $V^-_e$ to $V^+_e$. Let $j_1, j_2$ be the smallest and largest even rank of elements in $R^-_e$ not sent from $V^-_e$ to $V^+_e$. Let $i_1, i_2$ be the smallest and largest odd rank of elements in $R^-_e$ not sent from $V^-_e$ to $V^+_e$. Note that each element in $\{r_2, r_4, \cdots, r_{j_1 -2}\} \cup \{r_{j_2 + 2}, r_{j_2+4}, \cdots, r_j\} \cup \{r_i, r_{i+2}, \cdots, r_{i_2-2}\} \cup\{r_{i_2 + 2}, r_{i_2 + 4}, \cdots, r_{N-1}\}$ is sent from $V^-_e$ to $V^+_e$.  
	
	By the ordering of compute nodes, (3.3.1) all elements in $[r_{j_1}, r_{i_2}]$ or (3.3.2) all elements in $[r_1, r_{j_2}] \cup [r_{i_1}, r_{N}]$ should reside on $V^-_e$ when the algorithm terminates. In (3.3.1), each element in $[r_{j_1}, r_{i_2}] - R^-_e$ should be sent from $V^+_e$ to $V^-_e$, as illustrated in Figure~\ref{fig:case}. The number of elements transmitted through edge $e$ is at least $i_2 - j_1 + 1 - \frac{j-j_1}{2} - \frac{i_2 - i}{2} + \frac{j_1 -2}{2} + \frac{j-j_2}{2} + \frac{i_1 -i}{2} + \frac{N-1 - i_2}{2} \ge \frac{N-1 + i_1 - j_2}{2} \ge \frac{N}{2} \ge |R^-_e|$. In (3.3.2), each element in $\{r_1,r_3,\cdots,r_{j_2-1}\} \cup \{r_{i_1+1},r_{i_1+3}, \cdots, r_{N-1}\}$ should be sent from $V^+_e$ to $V^-_e$, as illustrated in Figure~\ref{fig:case}. The number of elements transmitted through edge $e$ is at least
	$\frac{j_2+1}{2} +  \frac{N - i_1+1}{2} + \frac{j_1 -2}{2} + \frac{j-j_2}{2} + \frac{i_1 -i}{2} + \frac{N-1 - i_2}{2}= \frac{N-1-i_2+j_1}{2} \ge \frac{N}{2} \ge |R^-_e|$. 
\end{proof}

\subsection{A Sampling-based Algorithm}

In the MPC model, the theoretically optimal sorting algorithm inherited from~\cite{goodrich1999communication} is rather complicated. Instead, sampling-based techniques, such as TeraSort~\cite{o2008terabyte}, are more amenable to be extended to more complex networks.  In this section, we present a randomized communication protocol for a symmetric tree topology, named {\em weighted TeraSort} (wTS), which generalizes the TeraSort algorithm in three fundamental ways. {First, TeraSort is designed for the MapReduce~\cite{dean2004mapreduce} framework, which is an instantiation of the theoretical MPC model (with star topology), and we extend it to the general tree topology.}
Second, not all nodes participate in the splitting of the data, but only the ones that initially have a substantial amount of data. Third, we do not split the data uniformly, but proportionally to the size of the initial data. Before introducing our algorithm, we revisit the TeraSort algorithm.

{
\paragraph{TeraSort Algorithm} It picks an arbitrary node as the coordinator.  Set $\rho =4 \cdot  \frac{|V_C|}{N} \ln(|V_C| \cdot N)$. 
\begin{description}
\item[Round 1:] Each node $u \in V_C$ samples each element from its local storage with uniform probability $\rho$, and sends all sampled elements to the coordinator.  Let $s$ be the number of samples generated in total.
\item[Round 2:]The coordinator sorts all sampled elements received. Let $b_i$ be the $i \cdot \lceil \frac{s}{|V_C|}\rceil$-th smallest object in the sorted samples for $i \in \{1,2,\cdots, |V_C|-1\}$, $b_0 = -\infty$ and $b_{|V_C|} = +\infty$. It then broadcasts $|V_C| + 1$ splitters  $b_0, b_1, \cdots, b_{|V_C|}$ to all nodes.
\item[Round 3:] Upon receiving all splitters, each node scans it own elements. For each element $x$, the node finds the two consecutive splitters $b_i$ and $b_{i+1}$ such that $b_i \le x < b_{i+1}$ and then sends $x$ to $v_{i+1}$. Finally, each node locally sorts all elements that it has received.
\end{description}
}

%

\begin{algorithm}[t]
	\caption{{\sc Proportional}($V_H, u$)}
	\label{alg:proportional-distribute}
	
	$\Delta \gets 0$, $i \gets 1$\;
	\While{$i \le k$}{
		$x \gets \frac{N_{v_i}}{\sum_{v \in V_H} N_v} \cdot N_u$\;
		\If{$\Delta \ge x-\lfloor x \rfloor$}{
			$N^i_u \gets \lfloor x \rfloor, \quad \Delta \gets \Delta - (x-\lfloor x \rfloor)$\;
		}
		\Else{
			$N^i_u \gets \lfloor x \rfloor+1, \quad \Delta \gets \Delta +1 -(x-\lfloor x \rfloor)$\;
		}
		$i \gets i+1$;
	}
	\Return $N^1_u,  N^2_u, \cdots, N^k_u$\;
\end{algorithm}

{Now we describe our algorithm.}
Assume that the data statistics $N_v$'s are known to all compute nodes. A compute node $v \in V_C$ is {\em heavy} if $N_v \ge |V_C|$ and {\em light} otherwise. Let $V_H, V_L \subseteq V_C$ be the set of heavy and light compute nodes respectively. For simplicity, we pick an arbitrary non-compute node as the root and label heavy nodes in $V_H$ from left to right as $v_1, v_2, \cdots, v_{k}$.
\begin{description}
	\item[Round 1:] 
	Each light node $u \in V_L$ sends its local data to heavy nodes proportional to $N_{v_i}$'s. More specifically, node $u$ sends $N^i_u$ local elements to $v_i$ for each $i \in \{1,2,\cdots,k\}$, where $N^i_u$ is computed by Algorithm~\ref{alg:proportional-distribute}. 
	Let $M_j$ be the number of elements residing on heavy node $v_j$ after round 1.

	\item[Round 2:]  Each heavy node samples each element from its local storage with the same uniform probability $\rho$ 
	independently and then sends the sampled elements to $v_1$. Let $s$ be the number of samples generated in total.
	
	\item[Round 3:] Node $v_1$ sorts all samples received.  
	Let $t_i$ be the $i \cdot \lceil \frac{s}{|V_C|} \rceil $-th smallest element among all samples. Let $c_j = \lceil \frac{|V_C|}{N} \cdot M_j \rceil$. It chooses $k+1$ splitters as follows: (1) $b_0 = -\infty$; (2) $b_i = t_j$ where $j = c_1 + c_2 +\cdots +c_i$; (3) $b_{k}= +\infty$. Then, $v_1$ broadcasts $b_0, b_1, \cdots, b_k$ to the remaining heavy compute nodes. 
	
	\item[Round 4:] Upon receiving all splitters, each heavy node scans its own elements. For each element $x$, the node finds the two consecutive splitters $b_i$ and $b_{i+1}$ such that $b_i \le x < b_{i+1}$ and then sends $x$ to $v_{i+1}$. Finally, each node locally sorts all elements that it has received.
	\end{description}

A  possible improvement is that if the maximum data that some node holds exceeds $N/2$, every node just sends their data to that node. Otherwise, we simply run the wTS routine on the whole topology.

 \subsection{Analysis}
 Before proving the complexity for our algorithm,  we first point out some important properties.

 \begin{lemma}
 	\label{lem:light-distribute}
 	Consider a light node $u \in V_L$. Then, the following hold true in Algorithm~\ref{alg:proportional-distribute}:
 	\begin{enumerate}
 		\item  for any $i \in [k]$, $\sum_{j=1}^i N^j_u -1 \le  \frac{\sum_{j=1}^i N_{v_j}}{\sum_{j=1}^k N_{v_j}} \cdot N_u \le \sum_{j=1}^i N^j_u$;
 		\item for any $i_1, i_2 \in [k]$ with $i_1 < i_2$, $\sum_{j=i_1}^{i_2} N^j_u  \leq \frac{\sum_{j=i_1}^{i_2} N_{v_j}}{\sum_{j=1}^k N_{v_j}} \cdot N_u  + 1$. 
 		\item $\sum_{j=1}^k N^j_u \ge N_u$.
 	\end{enumerate}
 \end{lemma}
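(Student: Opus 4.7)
The plan is to introduce the shorthand $S_i = \sum_{j=1}^i N^j_u$ and the real-valued prefix $T_i = \sum_{j=1}^i x_j$, where $x_j = \frac{N_{v_j}}{\sum_{v \in V_H} N_v} \cdot N_u$ is the fractional target computed on line 3 of Algorithm~\ref{alg:proportional-distribute}. Observe that by telescoping, $T_k = N_u$. All three claims will follow from one joint invariant: after the $i$-th iteration of the while loop, the variable $\Delta$ equals $S_i - T_i$ and satisfies $\Delta \in [0,1)$.

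I would establish the invariant by induction on $i$. The base case $i=0$ is immediate since the algorithm initializes $\Delta \gets 0$ and $S_0 = T_0 = 0$. For the inductive step, write $f_i = x_i - \lfloor x_i \rfloor \in [0,1)$ and split on the algorithm's test $\Delta_{i-1} \ge f_i$. In the rounding-down branch, $N^i_u = \lfloor x_i \rfloor$ so the increment to $S_i - T_i$ is exactly $-f_i$, which matches the update $\Delta \gets \Delta - f_i$; moreover $\Delta_i = \Delta_{i-1} - f_i \in [0, \Delta_{i-1}] \subseteq [0,1)$. In the rounding-up branch, $N^i_u = \lfloor x_i \rfloor + 1$, so the increment to $S_i - T_i$ is $1 - f_i$, matching the update $\Delta \gets \Delta + 1 - f_i$; and $\Delta_i = \Delta_{i-1} + (1-f_i)$ lies in $[1-f_i, 1) \subseteq [0,1)$ since we entered this branch with $\Delta_{i-1} < f_i$. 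This closes the induction.

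The three claims then drop out:
\begin{itemize}
\item Claim 1: From $0 \le S_i - T_i < 1$ we get $T_i \le S_i$ and $S_i - 1 < T_i$, which is the desired double inequality.
\item Claim 2: Writing $\sum_{j=i_1}^{i_2} N^j_u - \frac{\sum_{j=i_1}^{i_2} N_{v_j}}{\sum_{j=1}^k N_{v_j}} \cdot N_u = (S_{i_2} - T_{i_2}) - (S_{i_1-1} - T_{i_1-1}) = \Delta_{i_2} - \Delta_{i_1-1}$, and this lies in $(-1,1)$ since both $\Delta$ values are in $[0,1)$, giving the upper bound of $1$.
\item Claim 3: Since $T_k = N_u$, the invariant $T_k \le S_k$ gives $\sum_{j=1}^k N^j_u \ge N_u$.
\end{itemize}

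The only subtlety — and the step I would be careful about — is verifying that the algorithm's ad hoc update of $\Delta$ genuinely tracks the running rounding error $S_i - T_i$ in both branches; once that bookkeeping is confirmed, the case analysis bounding $\Delta_i$ inside $[0,1)$ is routine and the three claims are immediate corollaries.
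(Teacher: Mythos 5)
Your proof is correct and follows essentially the same route as the paper: an induction establishing that $\Delta$ tracks the cumulative rounding error $\sum_{j\le i} N^j_u - \frac{\sum_{j \le i} N_{v_j}}{\sum_{j\le k}N_{v_j}}N_u$ and stays in $[0,1)$, from which all three claims follow. Your derivation of claim~2 via $\Delta_{i_2}-\Delta_{i_1-1}\in(-1,1)$ is in fact slightly tidier than the paper's (which subtracts prefix sums up to $i_1$ rather than $i_1-1$), but the underlying argument is the same.
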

 
 \begin{proof}
 	We first prove (1) by induction. The base case $i =1$ follows since $N^1_u = \lfloor\frac{N_{v_1}}{\sum_{j=1}^k N_{v_j}} \cdot N_u \rfloor + 1$. For the inductive step, assume the claim holds for $i$. Let $\Delta_i$ be the value of $\Delta$ after being updated during the $i$-th iteration of the while loop. Observe that the invariant $\Delta_i = \sum_{j=1}^i N^j_u - \frac{\sum_{j=1}^i N_{v_j}}{\sum_{j=1}^k N_{v_j}} \cdot N_u$ always holds. It can also be checked that $\Delta_i \ge 0$ since $0 \le x-\lfloor x \rfloor \le 1$. 
 	
 	Consider the $(i+1)$-th iteration of while loop. When it goes into line 4, we have: 
 	\[\sum_{j=1}^{i+1} N^j_u = N^{i+1}_{u} + \sum_{j=1}^i N^j_u =  \left \lfloor \frac{N_{v_{i+1}}}{\sum_{j=1}^k N_{v_j}} \cdot N_u  \right \rfloor + \Delta_i + \frac{\sum_{j=1}^i N_{v_j}}{\sum_{j=1}^k N_{v_j}} \cdot N_u = (\Delta_i -x + \lfloor x \rfloor ) + \frac{\sum_{j=1}^{i+1} N_{v_j}}{\sum_{j=1}^k N_{v_j}} \cdot N_u\]
 	In this case, $0< \Delta_i - x+ \lfloor x \rfloor < 1$, so the claim holds. When the algorithm goes into line 6, 
 	\[ \sum_{j=1}^{i+1} N^j_u = N^{i+1}_{u} + \sum_{j=1}^i N^j_u = \left \lfloor \frac{N_{v_{i+1}}}{\sum_{j=1}^k N_{v_j}} \cdot N_u  \right \rfloor + 1 + \Delta_i + \frac{\sum_{j=1}^i N_{v_j}}{\sum_{j=1}^k N_{v_j}} \cdot N_u = (\Delta_i + 1 - x + \lfloor x \rfloor) + \frac{\sum_{j=1}^{i+1} N_{v_j}}{\sum_{j=1}^k N_{v_j}} \cdot N_u \]
 	
 	\smallskip
 	
 	We prove (2) based on (1). Observe that 
 	\[ \sum_{j=i_1}^{i_2} N^j_u = \sum_{j=1}^{i_2} N^j_u - \sum_{j=1}^{i_1} N^j_u \le \frac{\sum_{j=1}^{i_2} N_{v_j}}{\sum_{j=1}^k N_{v_j}} \cdot N_u + 1  - \frac{\sum_{j=1}^{i_1} N_{v_j}}{\sum_{j=1}^k N_{v_j}} \cdot N_u \le \frac{\sum_{j=i_1}^{i_2} N_{v_j}}{\sum_{j=1}^k N_{v_j}} \cdot N_u  + 1 \]
 	We can obtain a similar expression for $i_2$; then the claim holds by adding the two inequalities.
 	
 	\smallskip
 	Property (3) follows immediately from (1) by setting $i=k$.
 \end{proof}

\begin{theorem} \label{thm:wst}
	Let $G = (V, E)$ be a symmetric tree topology and $R$ be an ordered set of $N$ elements. If $N \ge 4 |V_C|^2 \cdot \ln (|V_C| \cdot N)$, with probability $1 - \frac{1}{N}$,  the wST algorithm sorts $R$ in 4 rounds with cost $O(1)$ away from the optimal.
\end{theorem}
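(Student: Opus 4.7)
The plan is to prove correctness and the cost bound separately, with the cost broken down round by round. For correctness, the first observation is that Lemma~\ref{lem:light-distribute}(3) guarantees each element at a light node is forwarded to some heavy node during Round~1, so after Round~1 all data resides on heavy nodes. A standard Chernoff bound on the independent samples drawn in Round~2 with rate $\rho = 4|V_C|\ln(|V_C|N)/N$ then shows that the splitters $b_0, \ldots, b_k$ computed in Round~3 partition the dataset so that each interval $[b_{i-1}, b_i)$ contains a number of elements within a constant factor of $M_i$ with probability at least $1 - 1/N$. Round~4 routes each element to the heavy node designated by its splitter interval, and the local sort at each heavy node yields a globally sorted output consistent with the left-to-right ordering.

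For Round~1 cost, I will fix an edge $e$ and denote the heavy and light masses on its two sides by $a,c$ and $b,d$ respectively; because heavy nodes are labeled left-to-right, the heavy nodes on each side of $e$ form a contiguous index range, so Lemma~\ref{lem:light-distribute}(2) applies and bounds the crossing on $e$ by $(bc+ad)/(a+c) + |V_L|$. Since each light node holds fewer than $|V_C|$ elements, the total light mass is at most $|V_C|^2 \leq N/4$ (using the assumption on $N$), which gives $b, d \leq N/2$; a short algebraic manipulation using $(a+c)(a+b) - (bc+ad) = a(N - 2d) \ge 0$ then yields $(bc+ad)/(a+c) \leq \min(a+b, c+d)$, matching Theorem~\ref{thm:lb:sorting} up to the additive $|V_C|$ term (absorbed whenever the smaller side contains any heavy node, since then the smaller side has mass $\geq |V_C|$, and handled directly when one side consists entirely of light nodes). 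Round~3 broadcasts $O(|V_C|)$ splitters and has negligible cost. For Round~4, once the splitter concentration is in hand, the destinations $v_1, \ldots, v_k$ are again left-to-right, so an analogous argument bounds the per-edge crossing by $O(\min(\sum_{v \in V_e^-} N_v, \sum_{v \in V_e^+} N_v))$.

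The main obstacle will be Round~2, where all samples funnel to $v_1$, potentially creating a bottleneck on edges near $v_1$. For any edge $e$, the crossing in Round~2 equals $\rho$ times the heavy data on the side of $e$ not containing $v_1$. The assumption $N \geq 4|V_C|^2\ln(|V_C|N)$ gives $\rho \leq 1/|V_C|$, which is enough to conclude that whenever the side not containing $v_1$ is the smaller side of $e$, the crossing is $O(\min(\cdot,\cdot))$ directly via the bound $\rho \le 1$. When $v_1$ lies on the smaller side of some edge $e$, the delicate step is to use that $v_1$ is heavy (so $N_{v_1} \geq |V_C|$) together with the assumption to absorb the sample volume into the per-edge lower bound; choosing $v_1$ to be the heaviest heavy node (rather than merely the leftmost) may be needed to make this argument clean, in which case $N_{v_1} \geq N/|V_C|$ and $\rho N / N_{v_1} \le \rho |V_C| \le 1$. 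Chernoff concentration of both total and per-edge sample counts supplies the claimed high-probability guarantee of $1 - 1/N$.
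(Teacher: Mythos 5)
Your proposal is correct and follows essentially the same round-by-round decomposition as the paper's proof: Lemma~\ref{lem:light-distribute} drives the Round-1 analysis, Chernoff concentration of the sample count and of the splitter intervals drives Rounds 2--4, and every edge is charged $O(\min\{\sum_{v\in V_e^-}N_v,\sum_{v\in V_e^+}N_v\})$ to match Theorem~\ref{thm:lb:sorting}, with your direct verification of $(bc+ad)/(a+c)\le\min(a+b,c+d)$ playing the role of the paper's $\frac{ab}{a+b}\le\min\{a,b\}$ step. The one point where you add something the paper glosses over is the Round-2 funnel to $v_1$: closing that case requires the side of an edge containing $v_1$ to have mass $\Omega(\rho N)=\Omega(|V_C|\ln(|V_C|N))$, which holds under the heavy threshold $N_v\ge N/(2|V_C|)$ that the paper's proof implicitly uses (despite its stated definition $N_v\ge|V_C|$), or alternatively under your fix of designating the heaviest node as coordinator.
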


\begin{proof}
	From property (3) of Lemma~\ref{lem:light-distribute}, it follows that all the data of the light nodes is sent to the heavy nodes during the first round. Hence, the algorithm will produce the correct sorting. We complete the proof of Theorem~\ref{thm:wst} by analyzing the cost of the wTS algorithm.
	
	First, we observe that at least half the data is distributed across heavy nodes initially, i.e., $\sum_{j =1}^k N_{v_j} \ge \frac{N}{2}$. Indeed, the size of initial data distributed across all light node is strictly smaller than $\frac{N}{2|V_C|} \cdot |V_C| = \frac{N}{2}$, so the remaining data with size at least $\frac{N}{2}$ must reside on heavy nodes. We next analyze the cost for each round separately.
	
	\medskip \noindent \introparagraph{Round 1}
	Consider an arbitrary edge $e \in E$, which defines a partition of compute nodes $V^-_e, V^+_e$. If $V_H \cap V^+_e \neq \emptyset$, it holds that $V_H \cap V^+_e = \{v_i, v_{i+1}, \cdots, v_j\}$ or $\{v_1, v_2, \cdots, v_i\} \cup \{v_j, v_{j+1}, \cdots, v_k\}$ for some $i,j \in [k]$ and $i \le j$. For any light node $u \in V_L$, the number of data sent to the nodes in $V_H \cap V^+_e$ can then be bounded as follows using Lemma~\ref{lem:light-distribute}(2):
	\[\sum_{v \in V^+_e \cap V_H} N^v_u \le 2 + \sum_{v \in V^+_e \cap V_H}   \frac{N_{v}}{\sum_{v' \in V_H} N_{v'}} \cdot N_u\] 
	In this way, the number of data sent from light nodes in $V^-_e$ to heavy nodes in $V^+_e$ can be bounded as 
	\begin{align*}
	& \sum_{u \in V^-_e \cap V_L} \left(2 + \sum_{v \in V^+_e \cap V_H}   \frac{N_{v}}{\sum_{v' \in 	V_H} N_{v'}} \cdot N_u \right) 
	\le \sum_{u \in V^-_e \cap V_L} 2 + \sum_{u \in V^-_e \cap V_L} \sum_{v \in V^+_e \cap V_H} \frac{2N_{v}}{N} \cdot N_u \\
	& \le 2 \min\left\{\sum_{u \in V^-_e} N_u, |V_C|\right\} + \frac{2}{N} \cdot \left(\sum_{u \in V^-_e} N_u \right) \cdot \left(\sum_{v \in V^+_e} N_v\right)  
	\le 4 \min\left\{\sum_{u \in V^-_e} N_u, \sum_{v \in V^+_e} N_v\right\}
	\end{align*}
	The rationale behind the third inequality is that $|V_C| \le \frac{N}{2|V_C|} \le \sum_{v \in V^+_e \cap V_H} N_v \le \sum_{v \in V^+_e} N_v$ and $\frac{a \cdot b}{a+b} \le \min\{a,b\}$ holds for any $a,b \ge 1$. If $V_H \cap V^-_e \neq \emptyset$, we can make a symmetric argument.

	We observe here that the number of data received by any heavy node $v \in V_H$ in round 1 is at most 
	\begin{align*}
	\sum_{u \in V_L} \left \lceil \frac{N_v}{\sum_{v' \in V_H} N_{v'}} \cdot N_u \right \rceil =  \sum_{u \in V_L}  \frac{N_v}{\sum_{v' \in V_H} N_{v'}} \cdot N_u +  \sum_{u \in V_L} 1
	\le  \frac{2N_v}{N} \cdot  \sum_{u \in V_L} N_u + |V_C| \le 3 N_v
	\end{align*}
	where the rationale behind the first inequality is that $\sum_{v' \in V_H} N_{v'} \ge \frac{N}{2}$ and that behind the second inequality is that $|V_C| \le \frac{N}{2|V_C|} \le N_v$. 
	Hence, for every heavy node $v$, $M_v \leq 3 N_v + N_v = 4 N_v$.	
	
	\medskip \noindent \introparagraph{Rounds 2, 3}
	During sampling, each element is an independent Bernoulli sample, so we have $E[s] = \rho N$. Applying the Chernoff bound, $\Pr[s \ge 2 \rho N] \le \exp\left(-\Omega(\rho N)\right)$. In round 2 and round 3, the number of elements received or sent by any node is at most $s$, which is smaller than $2\rho N$ with probability at least $1- \exp\left(-\Omega(\rho N)\right) \ge 1 - (\frac{1}{|V_C| \cdot N})^{4|V_C|}$. Observe that $2 \rho N \leq N / |V_C|$. Since there is a heavy node at each side of an edge that has data getting through, we have $2 \rho N \leq \min\{\sum_{u \in V^-_e} N_u, \sum_{v \in V^+_e} N_v\} $.
	
	\medskip \noindent \introparagraph{Round 4} In this round, each heavy node $v_i$ sends out at most $M_i$ elements and receives all the elements falling into the interval $[b_i, b_{i+1})$, i.e., $R \cap [b_{i-1}, b_i)$. Let $t_0 = -\infty$ and $t_{|V_C|} = +\infty$. Under the condition that $s \le 2\rho N$, we first observe that for any $j \in \{1,2,\cdots, |V_C|\}$, $|R \cap [t_{j-1}, t_j)| \le 8 \cdot \frac{N}{|V_C|}$, which holds with probability at least $1-\frac{1}{N}$, following a similar analysis to~\cite{tao2013minimal}. Together, the probability that all these assumptions hold is 
	\[ \left(1 - (\frac{1}{|V_C| \cdot N})^{4|V_C|} \right) \cdot \left(1- \frac{1}{4N}\right) \ge 1- \frac{1}{N}\]

	Conider any heavy compute node $v_j$. The number of intervals allocated to $v_j$ is exactly $c_j$, thus the number of elements recieved by $v_j$ in the last round is at most 
	\[\lceil \frac{M_j}{N} \cdot |V_C|\rceil  \cdot 8 \cdot \frac{N}{|V_C|} \le (\frac{|M_j|}{N} \cdot |V_C| +1 )  \cdot 8 \cdot \frac{N}{|V_C|} \le M_j + 8 \frac{N}{|V_C|} \le 4 N_{v_j}+ 16 N_{v_j} = O(N_{v_j}) \]
	with probability at least $1- 1/N$. 
	
	Next we bound the amount of data transmitted on every link $e \in E$. Removing $e$ will partition the compute nodes in $V^{-}_e, V^{+}_e$. W.l.o.g., assume $\sum_{v \in V^{-}_e \cap V_H} N_v \le \sum_{v \in V^{+}_e \cap V_H} N_v$. The size of data sent from the heavy nodes in $V^{-}_e$ to $V^{+}_e$ is always bounded by the total size of data sitting in $v \in V^{-}_e \cap V_H$, with $O(\sum_{v \in V^{-}_e \cap V_H} N_v) = O(\min\{\sum_{v \in V^{-}_e \cap V_H} N_v, \sum_{v \in V^{+}_e \cap V_H} N_v\})$.  The size of data sent from the heavy nodes in $V^{+}_e$ to $V^{-}_e$ is at most the number of elements recieved by all compute nodes in $V^{-}_e \cap V_H$, thus bounded by $O(\sum_{v \in {V^-_e \cap  V_H}} N_v) = O(\min\{\sum_{v \in V^{-}_e \cap V_H} N_v, \sum_{v \in V^{+}_e \cap V_H} N_v\})$. In either way, the capacity of each edge $e$ is matched by its lower bound, thus completing the proof.
\end{proof}

	\section{Related work}

The fundamental difference of the topology-aware model we use with other parallel models (e.g., BSP~\cite{BSP},  MPC~\cite{beame:communication}, LogP~\cite{LogP}) is that the cost depends both on the topology and properties of the network and the nodes. Prior models view the network as a star topology, where each link and each node have exactly the same cost functions. In this sense, our model can be viewed as a generalization, where the topology and the node heterogeneity is taken into account. 

There have already been some efforts to introduce topology-aware models, including~\cite{chattopadhyay2017tight, langberg2019topology} as mentioned in the introduction.

One line of work in distributed computing on networks are the classical LOCAL and CONGEST models~\cite{linial1992locality, peleg2000distributed}, where distributed problems are also considered in networks modeled as an arbitrary graph. These two models differentiate from ours in two important aspects. First, in each round, each node can only communicate with its neighbors; instead, in our model we can send messages to other nodes that may be located several hops away. Second, the target is to design algorithms that minimize the number of rounds. As a combination of both aspects, the diameter of the communication network cannot be avoided as a cost in these models.  Moreover, system synchronization after each round is a huge bottleneck of modern massively parallel systems; thus, any algorithm in these two models running in non-constant number of rounds would become hard to implement efficiently in practice. 

Network routing has been studied in the context of parallel algorithms
(see \cite{leighton1983complexity,leighton2014introduction}),
distributed computing (see, e.g.~\cite{leighton1994packet}), and
mobile networks \cite{madden2003design}.
Several general-purpose optimization methods for network problems have been proposed \cite{palomar2006tutorial}.
Our proposed research deviates from prior literature by considering a ``distribution-aware'' setting, and tasks that have not been
considered before.

The topology-aware model we use in this paper has been previously used to design algorithms for aggregation~\cite{LiuSBS18}. However, only star topologies were considered. Madden et al.~\cite{madden02osdi, madden02mcsa} also proposed a tiny aggregation service which does
topology-aware in-network aggregation in sensor networks.
Culhane et al.~\cite{Culhane2014hotcloud,Culhane2015infocomm}
propose LOOM, a system that builds an aggregation tree with fixed fan-in for all-to-one aggregations, and
assigns nodes to different parts of the plan according to the amount of data reduced during aggregation. 
Chowdhury et al.~\cite{orchestra11sigcomm} propose Orchestra, a system to manage network activities
in MapReduce systems. Both systems are cognizant of the network topology, but agnostic to the distribution of
the input data. They also lack any theoretical guarantees.

	\section{Conclusion}
\label{sec:conclusion}

In this paper, we studied three fundamental data processing tasks in a topology-aware massively parallel computational model. We derived lower bounds based on the cardinality of the initial data distribution at each node and we designed provably optimal algorithms for each task with respect to the initial data distribution. Interestingly, these problems have different dependency on the topology structure, the cost functions (bandwidth), as well as the data distribution. 

There are several exciting directions for future research. For one, we would like to extend our algorithms and lower bounds to non-symmetric and general (non-tree) topologies. General topologies (e.g., grid, torus) are particularly challenging because there are multiple routing paths between two compute nodes, and thus a topology-aware algorithm needs to consider all nodes in the routing path, instead of just the destination. Looking further ahead, it would be interesting to study more complex tasks that have so far been analyzed only in the context of the MPC model, starting from a simple join between two relations, and continuing to ensembles of tasks in more complex queries.

	\bibliographystyle{abbrv}
	\bibliography{other,master,bibfile}

\begin{thebibliography}{10}

\bibitem{afrati2011optimizing}
F.~N. Afrati and J.~D. Ullman.
\newblock Optimizing multiway joins in a map-reduce environment.
\newblock {\em IEEE Transactions on Knowledge and Data Engineering},
  23(9):1282--1298, 2011.

\bibitem{agarwal2016parallel}
P.~K. Agarwal, K.~Fox, K.~Munagala, and A.~Nath.
\newblock Parallel algorithms for constructing range and nearest-neighbor
  searching data structures.
\newblock In {\em Proceedings of the 35th ACM SIGMOD-SIGACT-SIGAI Symposium on
  Principles of Database Systems}, pages 429--440. ACM, 2016.

\bibitem{andoni2014parallel}
A.~Andoni, A.~Nikolov, K.~Onak, and G.~Yaroslavtsev.
\newblock Parallel algorithms for geometric graph problems.
\newblock In {\em Proceedings of the forty-sixth annual ACM symposium on Theory
  of computing}, pages 574--583, 2014.

\bibitem{AssadiSW19}
S.~Assadi, X.~Sun, and O.~Weinstein.
\newblock Massively parallel algorithms for finding well-connected components
  in sparse graphs.
\newblock In {\em Proceedings of the 2019 {ACM} Symposium on Principles of
  Distributed Computing, {PODC} 2019, Toronto, ON, Canada, July 29 - August 2,
  2019}, pages 461--470, 2019.

\bibitem{DBLP:journals/talg/BansalFKS14}
N.~Bansal, Z.~Friggstad, R.~Khandekar, and M.~R. Salavatipour.
\newblock A logarithmic approximation for unsplittable flow on line graphs.
\newblock {\em {ACM} Trans. Algorithms}, 10(1):1:1--1:15, 2014.

\bibitem{barbosa2016new}
R.~d.~P. Barbosa, A.~Ene, H.~L. Nguyen, and J.~Ward.
\newblock A new framework for distributed submodular maximization.
\newblock In {\em 2016 IEEE 57th Annual Symposium on Foundations of Computer
  Science (FOCS)}, pages 645--654. Ieee, 2016.

\bibitem{beame:communication}
P.~Beame, P.~Koutris, and D.~Suciu.
\newblock {Communication Steps for Parallel Query Processing}.
\newblock In {\em PODS}, 2013.

\bibitem{beame:skew}
P.~Beame, P.~Koutris, and D.~Suciu.
\newblock {Skew in Parallel Query Processing}.
\newblock In {\em PODS}, 2014.

\bibitem{blanas2020topology}
S.~Blanas, P.~Koutris, and A.~Sidiropoulos.
\newblock Topology-aware parallel data processing: Models, algorithms and
  systems at scale.
\newblock In {\em CIDR}, 2020.

\bibitem{blanas14}
S.~Blanas, K.~Wu, S.~Byna, B.~Dong, and A.~Shoshani.
\newblock Parallel data analysis directly on scientific file formats.
\newblock In {\em SIGMOD Conference}, pages 385--396, 2014.

\bibitem{chattopadhyay2017tight}
A.~Chattopadhyay, M.~Langberg, S.~Li, and A.~Rudra.
\newblock Tight network topology dependent bounds on rounds of communication.
\newblock In {\em Proceedings of the Twenty-Eighth Annual ACM-SIAM Symposium on
  Discrete Algorithms}, pages 2524--2539. SIAM, 2017.

\bibitem{chattopadhyay2014topology}
A.~Chattopadhyay, J.~Radhakrishnan, and A.~Rudra.
\newblock Topology matters in communication.
\newblock In {\em 2014 IEEE 55th Annual Symposium on Foundations of Computer
  Science}, pages 631--640. IEEE, 2014.

\bibitem{DBLP:conf/icalp/ChekuriEV12}
C.~Chekuri, A.~Ene, and A.~Vakilian.
\newblock Node-weighted network design in planar and minor-closed families of
  graphs.
\newblock In A.~Czumaj, K.~Mehlhorn, A.~M. Pitts, and R.~Wattenhofer, editors,
  {\em Automata, Languages, and Programming - 39th International Colloquium,
  {ICALP} 2012, Warwick, UK, July 9-13, 2012, Proceedings, Part {I}}, volume
  7391 of {\em Lecture Notes in Computer Science}, pages 206--217. Springer,
  2012.

\bibitem{DBLP:journals/siamcomp/ChekuriKS09}
C.~Chekuri, S.~Khanna, and F.~B. Shepherd.
\newblock Edge-disjoint paths in planar graphs with constant congestion.
\newblock {\em {SIAM} J. Comput.}, 39(1):281--301, 2009.

\bibitem{orchestra11sigcomm}
M.~Chowdhury, M.~Zaharia, J.~Ma, M.~I. Jordan, and I.~Stoica.
\newblock Managing data transfers in computer clusters with orchestra.
\newblock In {\em Proceedings of the ACM SIGCOMM 2011 Conference}, SIGCOMM '11,
  pages 98--109, New York, NY, USA, 2011. ACM.

\bibitem{Culhane2014hotcloud}
W.~Culhane, K.~Kogan, C.~Jayalath, and P.~Eugster.
\newblock Loom: Optimal aggregation overlays for in-memory big data processing.
\newblock In {\em Proceedings of the 6th USENIX Conference on Hot Topics in
  Cloud Computing}, HotCloud'14, pages 13--13, Berkeley, CA, USA, 2014. USENIX
  Association.

\bibitem{Culhane2015infocomm}
W.~Culhane, K.~Kogan, C.~Jayalath, and P.~Eugster.
\newblock Optimal communication structures for big data aggregation.
\newblock In {\em 2015 {IEEE} Conference on Computer Communications, {INFOCOM}
  2015, Kowloon, Hong Kong, April 26 - May 1, 2015}, pages 1643--1651, 2015.

\bibitem{LogP}
D.~E. Culler, R.~M. Karp, D.~A. Patterson, A.~Sahay, K.~E. Schauser, E.~E.
  Santos, R.~Subramonian, and T.~von Eicken.
\newblock {LogP: Towards a Realistic Model of Parallel Computation}.
\newblock In {\em PPOPP}, 1993.

\bibitem{LSD12}
A.~Dasgupta, R.~Kumar, and D.~Sivakumar.
\newblock Sparse and lopsided set disjointness via information theory.
\newblock In A.~Gupta, K.~Jansen, J.~Rolim, and R.~Servedio, editors, {\em
  Approximation, Randomization, and Combinatorial Optimization. Algorithms and
  Techniques}, pages 517--528, Berlin, Heidelberg, 2012. Springer Berlin
  Heidelberg.

\bibitem{dean2004mapreduce}
J.~Dean and S.~Ghemawat.
\newblock Mapreduce: Simplified data processing on large clusters.
\newblock 2004.

\bibitem{ghaffari2018improved}
M.~Ghaffari, T.~Gouleakis, C.~Konrad, S.~Mitrovi{\'c}, and R.~Rubinfeld.
\newblock Improved massively parallel computation algorithms for mis, matching,
  and vertex cover.
\newblock In {\em Proceedings of the 2018 ACM Symposium on Principles of
  Distributed Computing}, pages 129--138, 2018.

\bibitem{GGKMR18}
M.~Ghaffari, T.~Gouleakis, C.~Konrad, S.~Mitrovic, and R.~Rubinfeld.
\newblock Improved massively parallel computation algorithms for mis, matching,
  and vertex cover.
\newblock In {\em Proceedings of the 2018 {ACM} Symposium on Principles of
  Distributed Computing, {PODC} 2018, Egham, United Kingdom, July 23-27, 2018},
  pages 129--138, 2018.

\bibitem{goodrich1999communication}
M.~T. Goodrich.
\newblock Communication-efficient parallel sorting.
\newblock {\em SIAM Journal on Computing}, 29(2):416--432, 1999.

\bibitem{goodrich2011sorting}
M.~T. Goodrich, N.~Sitchinava, and Q.~Zhang.
\newblock Sorting, searching, and simulation in the mapreduce framework.
\newblock In {\em International Symposium on Algorithms and Computation}, pages
  374--383. Springer, 2011.

\bibitem{hu:similarity-joins}
X.~Hu, Y.~Tao, and K.~Yi.
\newblock {Output-optimal Parallel Algorithms for Similarity Joins}.
\newblock In {\em PODS}, 2017.

\bibitem{hu2019output}
X.~Hu, K.~Yi, and Y.~Tao.
\newblock Output-optimal massively parallel algorithms for similarity joins.
\newblock {\em ACM Transactions on Database Systems (TODS)}, 44(2):6, 2019.

\bibitem{ketsman2017worst}
B.~Ketsman and D.~Suciu.
\newblock A worst-case optimal multi-round algorithm for parallel computation
  of conjunctive queries.
\newblock In {\em Proceedings of the 36th ACM SIGMOD-SIGACT-SIGAI Symposium on
  Principles of Database Systems}, pages 417--428. ACM, 2017.

\bibitem{koutris2016worst}
P.~Koutris, P.~Beame, and D.~Suciu.
\newblock Worst-case optimal algorithms for parallel query processing.
\newblock In {\em 19th International Conference on Database Theory (ICDT
  2016)}. Schloss Dagstuhl-Leibniz-Zentrum fuer Informatik, 2016.

\bibitem{koutris2011parallel}
P.~Koutris and D.~Suciu.
\newblock Parallel evaluation of conjunctive queries.
\newblock In {\em Proceedings of the thirtieth ACM SIGMOD-SIGACT-SIGART
  symposium on Principles of database systems}, pages 223--234. ACM, 2011.

\bibitem{guide:mpc}
P.~Koutris and D.~Suciu.
\newblock A guide to formal analysis of join processing in massively parallel
  systems.
\newblock {\em {SIGMOD} Record}, 45(4):18--27, 2016.

\bibitem{langberg2019topology}
M.~Langberg, S.~Li, S.~V.~M. Jayaraman, and A.~Rudra.
\newblock Topology dependent bounds for faqs.
\newblock 2019.

\bibitem{leighton1983complexity}
F.~T. Leighton.
\newblock {\em Complexity issues in VLSI: optimal layouts for the
  shuffle-exchange graph and other networks}.
\newblock MIT press, 1983.

\bibitem{leighton2014introduction}
F.~T. Leighton.
\newblock {\em Introduction to parallel algorithms and architectures:
  Arrays{\textperiodcentered} trees{\textperiodcentered} hypercubes}.
\newblock Elsevier, 2014.

\bibitem{leighton1994packet}
F.~T. Leighton, B.~M. Maggs, and S.~B. Rao.
\newblock Packet routing and job-shop scheduling ino (congestion+ dilation)
  steps.
\newblock {\em Combinatorica}, 14(2):167--186, 1994.

\bibitem{DBLP:journals/tc/Leiserson85}
C.~E. Leiserson.
\newblock Fat-trees: Universal networks for hardware-efficient supercomputing.
\newblock {\em {IEEE} Trans. Computers}, 34(10):892--901, 1985.

\bibitem{linial1992locality}
N.~Linial.
\newblock Locality in distributed graph algorithms.
\newblock {\em SIAM Journal on computing}, 21(1):193--201, 1992.

\bibitem{LiuSBS18}
F.~Liu, A.~Salmasi, S.~Blanas, and A.~Sidiropoulos.
\newblock Chasing similarity: Distribution-aware aggregation scheduling.
\newblock {\em {PVLDB}}, 12(3):292--306, 2018.

\bibitem{madden02osdi}
S.~Madden, M.~J. Franklin, J.~M. Hellerstein, and W.~Hong.
\newblock {TAG:} {A} tiny aggregation service for ad-hoc sensor networks.
\newblock In {\em 5th Symposium on Operating System Design and Implementation
  {(OSDI} 2002), Boston, Massachusetts, USA, December 9-11, 2002}, 2002.

\bibitem{madden2003design}
S.~Madden, M.~J. Franklin, J.~M. Hellerstein, and W.~Hong.
\newblock The design of an acquisitional query processor for sensor networks.
\newblock In {\em Proceedings of the 2003 ACM SIGMOD international conference
  on Management of data}, pages 491--502. ACM, 2003.

\bibitem{madden02mcsa}
S.~Madden, R.~Szewczyk, M.~J. Franklin, and D.~E. Culler.
\newblock Supporting aggregate queries over ad-hoc wireless sensor networks.
\newblock In {\em 4th {IEEE} Workshop on Mobile Computing Systems and
  Applications {(WMCSA} 2002), 20-21 June 2002, Callicoon, NY, {USA}}, pages
  49--58, 2002.

\bibitem{o2008terabyte}
O.~O’Malley.
\newblock Terabyte sort on apache hadoop.
\newblock 2008.

\bibitem{palomar2006tutorial}
D.~P. Palomar and M.~Chiang.
\newblock A tutorial on decomposition methods for network utility maximization.
\newblock {\em IEEE Journal on Selected Areas in Communications},
  24(8):1439--1451, 2006.

\bibitem{Patrascu11}
M.~Patra\c{s}cu.
\newblock Unifying the landscape of cell-probe lower bounds.
\newblock {\em SIAM J. Comput.}, 40(3):827--847, June 2011.

\bibitem{peleg2000distributed}
D.~Peleg.
\newblock Distributed computing.

\bibitem{tao2013minimal}
Y.~Tao, W.~Lin, and X.~Xiao.
\newblock Minimal mapreduce algorithms.
\newblock In {\em Proceedings of the 2013 ACM SIGMOD International Conference
  on Management of Data}, pages 529--540. ACM, 2013.

\bibitem{BSP}
L.~G. Valiant.
\newblock {A Bridging Model for Parallel Computation}.
\newblock {\em Communications of the ACM}, August 1990.

\bibitem{valiant1990bridging}
L.~G. Valiant.
\newblock A bridging model for parallel computation.
\newblock {\em Communications of the ACM}, 33(8):103--111, 1990.

\bibitem{tao2020binary}
T.~Yufei.
\newblock A simple parallel algorithm for natural joins on binary relations.
\newblock ICDT, 2020.

\end{thebibliography}
	
	\clearpage
	\onecolumn
	\appendix

\section{Omitted Proofs}

\cut{
\subsection{Proof of Theorem~\ref{thm:tree:intersect}}
\label{appendix:tree-set-intersect}

\begin{proof}
	The correctness of the algorithm comes from the fact that each subset of nodes $V_C^i$ computes $R \cap \bigcup_{v \in V_C^i} S_v$. Since $S = \bigcup_{i=1}^k \bigcup_{v \in V_C^i} S_v$, it follows that the algorithm computes all results in $R \cap S$. 
	
	We next analyze the cost. As before, we will measure the cost in number of tuples, and then pay a $O(\log N)$ factor to translate to bits. We first rewrite the lower bound as:
	\[ C_{LB} = \max \left \{ \max_{e \in E_\alpha} \frac{1}{w_e} \min\{\sum_{v \in V_e^+} N_v, \sum_{v \in V_e^-} N_v\}, \max_{e \in E_\beta} \frac{|R| }{w_e}\right\} \]
	
	We analyze the cost for the edges in $E_\alpha, E_\beta$ separately. 
	
	\paragraph{Case: $e \in E_\beta$} We will bound the amount of data that goes through $e$ by $O(|R|)$. The $R$-tuples that go through $e$ are at most $|R|$, so it suffices to bound the number of $S$-tuples that cross edge $e$. By property (2) of a balanced partition, $e$ is included in at most one spanning tree, say of block $V_C^i$.
	Then, w.h.p. the expected amount of $S$-tuples that goes through $e$ is at most 
	\begin{align*}
	&\frac{1}{\sum_{v \in V_C^i} N_v} \cdot ( \sum_{v \in V_C^i \cap V_e^-} N_v ) \cdot ( \sum_{v \in V_C^i \cap V_e^+} N_v ) \\
	\leq & \min \{  \sum_{v \in V_C^i \cap V_e^-} N_v, \sum_{v \in V_C^i \cap V_e^+} N_v \} \leq  |R|
	\end{align*}
	The first inequality comes from the fact that $\frac{a \cdot b}{a+b} \le \min\{a,b\}$ for any $a, b > 0$. The
	second inequality is implied directly by property (4) of a balanced partition.
	
	\paragraph{Case: $e \in E_\alpha$} We will bound the amount of data that goes through $e$ by 
	$\min \left \{\sum_{v \in V_e^-} N_v, \sum_{v \in V_e^+} N_v \right \}$.
	To bound the number of $S$-tuples, we again notice that $e$ can belong in the spanning tree of at most one block,
	say $V_C^i$. Hence, as in the previous case, w.h.p. the expected amount of $S$-tuples that goes through $e$ is at most 
	\begin{align*}
	& \frac{1}{\sum_{v \in V_C^i} N_v} \cdot ( \sum_{v \in V_C^i \cap V_e^-} N_v ) \cdot ( \sum_{v \in V_C^i \cap V_e^+} N_v ) \\
	\leq & \min \{  \sum_{v \in V_C^i \cap V_e^-} N_v, \sum_{v \in V_C^i \cap V_e^+} N_v \} 
	\leq \min \{  \sum_{v \in V_e^-} N_v, \sum_{v \in V_e^+} N_v \} 
	\end{align*}
	We can bound the number of $R$-tuples that go through $e$ by distinguishing three cases:
	\begin{itemize}
		\item none of $G_e^-, G_e^+$ contain $\beta$-edges. Then, the partition consists of a single block, and the number of $R$-tuples can be bounded as we did above with the $S$-tuples.
		\item $G_e^+$ contains $\beta$-edges but $G_e^-$ not. Then, all vertices in $G_\beta$ are in $V_e^+$. The $R$-data that goes through $e$ is sent by nodes in $V_e^-$, so its size is bounded by $\sum_{v \in V_e^-} |R_v| \le  \sum_{v \in V_e^-} N_v = \min \left \{\sum_{v \in V_e^-} N_v, \sum_{v \in V_e^+} N_v \right \}$. Here, the last equality follows from the fact that $G_e^+$ contains at least one $\beta$-edge, which implies that
		$\sum_{v \in V_e^+} N_v \geq |R| > \sum_{v \in V_e^-} N_v$.
		\item $G_e^-$ contains $\beta$-edges but $G_e^+$ not. Then, all nodes in $V_e^+$ belong in the same block $V_C^i$. We can abound the expected amount of $S$-tuples with:
		\begin{align*}
		\ \ \ \ \ \ \ \ \ \ \ \ \ & \frac{1}{\sum_{v \in V_C^i} N_v} \cdot ( \sum_{v \in V_e^-} |R_v| ) \cdot ( \sum_{v \in V_C^i \cap V_e^+} N_v ) \\
		\leq & \frac{ \sum_{v \in V_e^-} |R_v| + \sum_{v \in V_C^i \cap V_e^+} N_v }{\sum_{v \in V_C^i} N_v}  \min \{  \sum_{v \in V_e^-} |R_v|, \sum_{v \in V_C^i \cap V_e^+} N_v \} \\
		\leq & \frac{ |R| + \sum_{v \in V_C^i } N_v }{\sum_{v \in V_C^i} N_v}  \min \{  \sum_{v \in V_e^-} N_v, \sum_{v \in V_e^+} N_v \} \\
		\leq & 2  \min \{  \sum_{v \in V_e^-} N_v, \sum_{v \in V_e^+} N_v \}
		\end{align*}
		where the last inequality is from property (3) of Definition~\ref{def:balanced-partition}.
	\end{itemize}
	
	This completes the proof.
\end{proof}

\subsection{Proof of Lemma~\ref{lem:balanced}}
\label{appendix:balanced}

\begin{proof}
	First, we notice that in lines 1-2 each compute node $V_C$ belongs in exactly one $\Gamma(x)$. In the remaining algorithm, every vertex in $G_\beta$ with $w(x) > 0$ is put into exactly one block, thus $\mathcal{P}$ is a partition of $V_C$.  Indeed, the only issue may occur when we are left with a single vertex $x$: we claim that in this case we always have $w(x) \geq |R|$. Suppose $w(x) < |R|$, and consider the last vertex $u$ for which $\Gamma(u)$ was added in $\mathcal{P}$ (such a vertex always exists, since every leaf vertex of $G_\beta$ initially has weight at least $|R|$). But then, the algorithm could not have picked $u$ at this point, since all other leaf vertices have smaller weight, a contradiction.
	
	We now prove that the output partition satisfies all properties of a balanced partition (Definition~\ref{def:balanced-partition}).
	
	(1) The first condition is trivial. From lines 1-2, two compute nodes that are connected in $G_\alpha$ will be in
	the same initial $\Gamma(x)$, hence they will appear together in a block of the partition. 
	
	(2) By contradiction, assume there exists an edge $e = (u,v)$ appearing in the spanning trees of $V_C^i$ and $V_C^j$ for $i \neq j$. By the definition of spanning trees, there exists one pair of vertices $x,y \in V_C^i$ and one pair of vertices $x',y' \in V_C^j$ such that $x,x' \in G_e^+$ and $y, y' \in G_e^-$.  When Algorithm~\ref{alg:balanced-partition} visits $e$ in line 9, w.l.o.g. assume $u$ is visited before $v$. Since $x,x'$ are placed in different blocks of the partition, it cannot be that both $x, x' \in \Gamma(u)$. W.l.o.g., $x' \notin \Gamma(u)$. This implies that $x'$ has already been put into one block with  vertices from $G_e^-$. Then  $x', y'$ won't appear in the same block, contradicting our assumption.
	
	(3) It is easy to see that the algorithm adds a set of nodes to $\mathcal{P}$ only if their total weight is at least $|R|$.
	
	(4) Consider a block $V_C^i$ in the partition. Let $e = (u,v)$ be a $\beta$-edge in the spanning tree of $V_C^i$. Then,
	Algorithm~\ref{alg:balanced-partition} visits $e$ in line 9: w.l.o.g. assume $u$ is visited before $v$. At this point, we 
	have $w(u) < |R|$, since $\Gamma(u)$ was merged with $\Gamma(v)$. The key observation is that we have $\Gamma(u) = V_C^i \cap V_e^-$, since no other compute nodes will be added to the "left" of $e$ (since $u$ is a leaf node). Hence, 
	\begin{align*}
	\min \{ \sum_{v \in V_C^i \cap V_e^+} N_v, \sum_{v \in V_C^i \cap V_e^-} N_v \} \leq \sum_{v \in V_C^i \cap V_e^-} N_v
	= w(u) < |R|
	\end{align*}
	This completes the proof.	
\end{proof}
}

\subsection{Cartesian product in Unequal Size}
\label{appendix:cp-unequal}
We consider the general cartesian product on a symmetric star topology $G = (V,E)$. For simplicity, we divide the compute nodes into two subsets:
\begin{align*}
	V_\alpha = \{v \in V_C: \min\{N_v, N-N_v\} < |R|\}, \ V_\beta =V_C - V_\alpha
\end{align*}
The first lower bound can be simplified as follows.
\begin{theorem}
	\label{thm:lb-cp-unequal-1}
	Any algorithm computing cartesian product $R \times S$ has cost $\Omega(C)$, where 
	\[C \ge \max \left \{\max_{v \in V_\alpha} \frac{\min\{N_v, N -N_v\}}{w_v}, \max_{v \in V_\beta} \frac{|R|}{w_v}\right \}\]
\end{theorem}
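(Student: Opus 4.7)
The plan is to adapt the edge-by-edge reduction used in the proof of Theorem~\ref{thm:lb-cp1}, and then specialize to the star topology so that the expression collapses into the two cases $v \in V_\alpha$ and $v \in V_\beta$. The key observation is that Theorem~\ref{thm:lb-cp1}'s argument never really required $|R|=|S|$; the equality only entered at the very end when the author bounded $\min\{\sum N_v^-, \sum N_v^+\}$ by $N/2$. So I would first redo that argument keeping $|R|$ and $|S|$ separate.

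Concretely, fix a cost-$C$ algorithm and any edge $e$. If $C \cdot w_e < \sum_{v \in V_e^-} |R_v|$, then some $R$-tuple stays on the $V_e^-$ side, forcing all $S$-tuples from $V_e^+$ to cross $e$ to meet it; this yields $C \cdot w_e \ge \min\{\sum_{v \in V_e^-}|R_v|,\ \sum_{v \in V_e^+}|S_v|\}$. A symmetric argument with the roles of $R$ and $S$ swapped gives $C \cdot w_e \ge \min\{\sum_{v \in V_e^-}|S_v|,\ \sum_{v \in V_e^+}|R_v|\}$. Summing these two inequalities and using the identity
\[\min\{a,c\} + \min\{b,d\} \ge \min\{a+b,\ c+d,\ a+d,\ c+b\},\]
with $a=\sum_{v\in V_e^-}|R_v|$, $c=\sum_{v\in V_e^+}|S_v|$, $b=\sum_{v\in V_e^+}|R_v|$, $d=\sum_{v\in V_e^-}|S_v|$, gives
\[2C \cdot w_e \ge \min\Bigl\{|R|,\ |S|,\ \sum_{v\in V_e^-} N_v,\ \sum_{v\in V_e^+} N_v \Bigr\}.\]
Because $|R|\le |S|$, the $|S|$ term is dropped.

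Now I would specialize to the star. Every edge $e$ is adjacent to a unique compute node $v$ with $V_e^- = \{v\}$, $V_e^+ = V_C\setminus\{v\}$, so $\sum_{u\in V_e^-}N_u = N_v$ and $\sum_{u\in V_e^+}N_u = N - N_v$. Hence for every compute node $v$,
\[C \cdot w_v \ge \tfrac{1}{2}\min\{N_v,\ N-N_v,\ |R|\}.\]
Splitting on the definition of $V_\alpha, V_\beta$: when $v\in V_\alpha$, $\min\{N_v,N-N_v\}<|R|$, so the minimum equals $\min\{N_v,N-N_v\}$; when $v\in V_\beta$, the minimum equals $|R|$. Taking the maximum over $v\in V_C$ yields the stated bound (the factor $1/2$ is absorbed into the $\Omega$).

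The hard part is essentially already resolved by the proof of Theorem~\ref{thm:lb-cp1}; the only delicate bookkeeping step is the four-way $\min$ inequality, which must be applied carefully so that we do not prematurely drop either $|R|$ or the $\sum N_v^{\pm}$ terms before specializing to the star. Once that inequality is in place, the collapse to the $V_\alpha$/$V_\beta$ dichotomy is mechanical, and no additional communication-complexity machinery beyond that already invoked in Theorem~\ref{thm:lb-cp1} is needed.
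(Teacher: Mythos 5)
Your proposal is correct and follows essentially the same route the paper takes: Section~\ref{sec:discussion} derives this bound by rerunning the cut argument of Theorem~\ref{thm:lb-cp1} without invoking $|R|=|S|$, obtaining $C\cdot w_v \ge \frac{1}{2}\min\{N_v,\,N-N_v,\,|R|\}$ per compute node and then splitting according to the $V_\alpha/V_\beta$ dichotomy exactly as you do. Your four-way $\min$ inequality is the right bookkeeping step and correctly preserves the $|R|$ term that the equal-size proof discards, so no gap remains.
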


Moreover, we define $V(R,S,V_C)$ as the minimizer for the following formula.
\begin{equation}
	\label{eq:1}
	\sum_{v \in V_C} \min\{C \cdot w_v, |R|\} \cdot C \cdot w_v \ge |R| \cdot |S|
\end{equation}
Then we are able to give the second lower bound as below.

\begin{theorem}
	\label{thm:lb-cp-unequal-2}
 If $\max_v N_v \le \frac{N}{2}$, any algorithm computing cartesian product $R \times S$ has cost $\Omega(C)$, where 
 \[C \ge \min \left \{\frac{|S|}{\max_v w_v}, \frac{\sum_{u \in V_\alpha}|S_u|}{2 \sum_{u \in V_\beta} w_u}, V(R, \cup_{u \in V_\alpha} S_u, V_\alpha)\right \}\]
\end{theorem}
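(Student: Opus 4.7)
The plan is to adapt the counting argument behind Theorem~\ref{thm:lb-cp2} to the unequal-size setting, combined with the first-type bound of Theorem~\ref{thm:lb-cp-unequal-1}. The starting point is the observation that every pair $(r,s) \in R \times S$ must be produced at some compute node $u$, which requires $r \in R_u \cup R'_u$ and $s \in S_u \cup S'_u$ simultaneously; summing over compute nodes yields
\[
|R|\cdot|S| \;\le\; \sum_{u \in V_C} |R_u \cup R'_u|\cdot|S_u \cup S'_u|.
\]
I would bound each term on the right-hand side by a case split. For $u \in V_\alpha$, Theorem~\ref{thm:lb-cp-unequal-1} ensures $Cw_u \ge N_u \ge \max\{|R_u|,|S_u|\}$, so together with $|R'_u|,|S'_u| \le Cw_u$ we obtain $|R_u \cup R'_u|\cdot|S_u \cup S'_u| \le 4\min\{Cw_u,|R|\}\min\{Cw_u,|S|\}$. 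For $u \in V_\beta$ (where $Cw_u \ge |R|$ by the same theorem), we use $|R_u \cup R'_u| \le |R|$ trivially and $|S_u \cup S'_u| \le |S_u| + Cw_u$. Splitting the second factor as $|S_u|$ plus $Cw_u$, the ``native'' $|S_u|$ contributions from $V_\beta$ cancel the quantity $|R|\cdot|S^\beta|$ on the left, where $S^\beta = \bigcup_{u \in V_\beta} S_u$, leaving the reduced inequality
\[
|R|\cdot|S^\alpha| \;\le\; 4\sum_{u \in V_\alpha} \min\{Cw_u,|R|\}\min\{Cw_u,|S|\} \;+\; |R|\sum_{u \in V_\beta} Cw_u,
\]
where $S^\alpha = \bigcup_{u \in V_\alpha} S_u$.

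The next step is to assume, for contradiction, that $C < c\cdot\min\{\text{three terms}\}$ for a sufficiently small absolute constant $c$. The three assumptions feed into the reduced inequality as follows. From $C < c\,|S|/\max_v w_v$, every $Cw_u$ is below $c|S|$, which collapses the $\min\{Cw_u,|S|\}$ factor in the first sum to $Cw_u$. From $C < c\,\sum_{V_\alpha}|S_u|/(2\sum_{V_\beta} w_u)$, the second summand is bounded by $(c/2)\cdot|R|\cdot|S^\alpha|$. From $C < c\,V(R,S^\alpha,V_\alpha)$, and the definition of $V$ as the minimizer of $\sum_{v \in V_\alpha}\min\{xw_v,|R|\}\cdot xw_v \ge |R|\cdot|S^\alpha|$, together with the observation that $\min\{xw_v,|R|\}\cdot xw_v$ is at most quadratic in $x$, one gets $\sum_{V_\alpha}\min\{Cw_u,|R|\}\cdot Cw_u \le c^2 \cdot |R|\cdot|S^\alpha|$. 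Plugging these three estimates into the reduced inequality yields $|R|\cdot|S^\alpha| \le (4c^2 + c/2)\cdot|R|\cdot|S^\alpha|$, which for $c$ small enough (any $c < 0.4$ suffices) is strictly less than $|R|\cdot|S^\alpha|$, a contradiction. This establishes the claimed $\Omega(\min\{\text{three terms}\})$ lower bound.

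The main obstacle is controlling the minimum operators when scaling the minimizer $V$: the relation $\sum_{V_\alpha}\min\{Cw_u,|R|\}\cdot Cw_u \le c^2\sum_{V_\alpha}\min\{Vw_u,|R|\}\cdot Vw_u$ for $C = cV$ requires a case analysis depending on whether $Vw_u \le |R|$ or not. In the first regime the sum is quadratic in $C$ and scales cleanly; in the second regime $\min\{Cw_u,|R|\} = Cw_u$ while $\min\{Vw_u,|R|\} = |R|$, giving only a linear scaling, but a per-node decomposition still shows the overall sum stays below a constant fraction of $|R|\cdot|S^\alpha|$. A secondary subtlety is that the assumption $\max_v N_v \le N/2$ is what guarantees $V_\beta$ consists exclusively of nodes with $|R| \le N_u \le |S|$, which in turn justifies the trivial bound $|R_u \cup R'_u| \le |R|$ for $u \in V_\beta$ and makes the cancellation of the $|R|\cdot|S^\beta|$ contribution in the reduced inequality valid without hidden slack.
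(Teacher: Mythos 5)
Your proposal is correct and follows essentially the same route as the paper: both start from the output-counting inequality $|R|\cdot|S^\alpha| \le 4\sum_{u \in V_\alpha}\min\{Cw_u,|R|\}\,Cw_u + 2|R|\sum_{u\in V_\beta}Cw_u$ and observe that one of the two right-hand terms must carry a constant fraction of the left side; your contradiction framing is just the contrapositive of the paper's ``at least one term exceeds $\tfrac12|R|\sum_{u\in V_\alpha}|S_u|$'' step. The one quantitative slip is the claimed $c^2$ scaling of $f(x)=\sum_{u\in V_\alpha}\min\{xw_u,|R|\}\,xw_u$: in the regime $Vw_u>|R|$ the term at $V$ is $|R|Vw_u$ while the term at $cV$ is $\min\{cVw_u,|R|\}\,cVw_u \le c\,|R|Vw_u$, so only $f(cV)\le c\,f(V)$ holds in general; this still closes the contradiction (with $4c+c/2<1$, i.e.\ $c<2/9$, rather than your $c<0.4$), as you partially anticipate in your final paragraph.
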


\begin{proof}
	It suffices to show that if $C \le |S|/\max_v w_v$, then $C \ge \min \left\{ \frac{\sum_{u \in V_\alpha}|S_u|}{2 \sum_{u \in V_\beta} w_u}, V(R, \cup_{u \in V_\alpha} S_u, V_\alpha)\right\} $.
	We first rewrite the inequality in Section~\ref{sec:discussion} as below:	\begin{align*}
		|R| \cdot \sum_{u \in V_\alpha} |S_u| \le & \sum_{u \in V_\alpha} 4 \min\{C\cdot w_v, |R|\} \cdot C \cdot w_v + \sum_{u \in V_\beta} 2 |R| \cdot  C \cdot w_v
	\end{align*}
To make this inequality holds, at least one term should be larger than $\frac{1}{2} |R| \cdot \sum_{u \in V_\alpha} |S_u|$, thus yielding the desired result.
\end{proof}

\medskip \noindent {\bf Generalized wHC Algorithm.} We extend the wHC algorithm for computing $R \times S$ with $|R| < |S|$ on a symmetric star topology.

 \begin{algorithm}
	\caption{{\sc BalancedPackingUnEqual}$(G, \dstr)$}
	\label{alg:balanced-packing-unequal}
	
	$L^* \gets L(R,S,V_C)$, $w \gets \max_v w_v$\; 
	\While{$\square$ is not fully covered}{
		$u \gets \arg \max_{v \in V_C} w_v$\;
		\uIf{$2^{-\ell} w  L^* \ge |R|$}{
			Assign to $u$ a rectangle of size $|R| \times (w_u \cdot L^*)$\;
		}
		\Else{
			$\ell \gets \arg \min_k \{ w \geq 2^k \cdot w_u \}$\;
			Assign to $u$ a square of size $(2^{-\ell} w  L^*) \times (2^{-\ell}w L^*)$\;
		}
		$V_C \gets V_C - \{u\}$\;	
	}
\end{algorithm}

To show the correctness of Algorithm~\ref{alg:balanced-packing-unequal}, it suffices to show that the grid is fully covered when $V_C$ becomes empty. 
Indeed, notice that each node $v$ covers an area of size at least ${L^* \cdot w_v} \cdot \min\{{L^* \cdot w_v},{|R|} \}$. Summing over all compute nodes, the area covered in total is at least
\[\sum_{v \in V_C}{L^* \cdot w_v} \cdot \min\{{L^* \cdot w_v},{|R|} \}  \ge|R| \cdot |S|\]
implied by~(\ref{eq:1}). Hence, the whole area of $\square$ is covered. 

Next, we analyze the cost of the algorithm. Observe that each node $v$ receives at most $4 L^* \cdot w_v$ tuples.
Hence, the cost is bounded by $4L^*$, yielding the following result.

\begin{lemma} \label{lem:whc-unequal}
	The wHC algorithm correctly computes the cartesian product $R \times S$ with (tuple) cost $O(C)$, where
	\[C = \max\left \{\max_v \frac{N_v}{w_v}, L(R,S,V_C) \right\}\]
\end{lemma}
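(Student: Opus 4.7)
The plan is to mirror the structure of the proof of Lemma~\ref{lem:whc} from the equal-size case, decomposing the argument into a correctness part (the assigned shapes can be packed into the grid $\square = \{1,\dots,|R|\}\times\{1,\dots,|S|\}$) and a cost part (each link carries $O(C)$ tuples). First, I would process the compute nodes in the order chosen by Algorithm~\ref{alg:balanced-packing-unequal}, i.e., decreasing bandwidth, and observe that the exponent $\ell$ assigned to a node $u$ places $2^{-\ell} w$ inside the interval $[w_u, 2 w_u)$. This gives the two key inequalities I will need: if $u$ receives a rectangle then $|R| \le 2^{-\ell} w L^{*} < 2 w_u L^{*}$, and if $u$ receives a square then its side lies in $[w_u L^{*}, 2 w_u L^{*})$ and is strictly less than $|R|$.

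For correctness, I would first verify the area bound: the area assigned to $u$ is at least $\min\{|R|, w_u L^{*}\} \cdot w_u L^{*}$, so summing over $V_C$ and invoking the definition of $L^{*} = L(R,S,V_C)$ in~\eqref{eq:1} yields total assigned area at least $|R|\cdot|S|$. To turn this into an actual packing, I would place the rectangles first, stacking them vertically along the $|S|$-axis (they all have the full width $|R|$, so they tile a bottom strip of height equal to the sum of their heights). The remaining region is still a rectangle of width $|R|$, to be filled by squares whose sides are of the form $2^{-\ell} w L^{*}$ for varying $\ell$, i.e., power-of-two submultiples of a common quantity $w L^{*}$. This is exactly the setting of Lemma~\ref{lem:square:packing}, applied hierarchically: I group four squares of side $2^{-\ell-1} w L^{*}$ into one of side $2^{-\ell} w L^{*}$ whenever possible, reducing to a set with at most three squares per size, which then packs into horizontal strips of height $\le w L^{*}$ and width $|R|$. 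The area bound guarantees that these strips suffice to fill the remaining height of the grid.

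For the cost, sending is immediate: each compute node $v$ sends its $N_v$ elements over its single link, so the sending cost is at most $\max_v N_v/w_v$. For receiving, I would handle the two cases separately. If $v$ is assigned a rectangle of dimensions $|R| \times (w_v L^{*})$, it receives exactly $|R|$ elements of $R$ and $w_v L^{*}$ elements of $S$; using the case condition $|R| \le 2 w_v L^{*}$, this gives receiving cost $(|R|+w_v L^{*})/w_v = O(L^{*})$. If $v$ is assigned a square of side $2^{-\ell} w L^{*}$, it receives $2\cdot 2^{-\ell} w L^{*}$ elements, and since $2^{-\ell} w < 2 w_v$ the receiving cost is $O(L^{*})$ as well. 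Combining sending and receiving costs yields the claimed bound $O(C)$.

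The main obstacle I anticipate is the packing step for the mixed rectangles-and-squares regime: the equal-size proof only deals with powers-of-two squares, whereas here I must argue that the rectangles of the ``large'' nodes and the hierarchical squares of the ``small'' nodes can jointly tile the $|R|\times|S|$ grid without leaving uncovered area beyond what the area budget permits. The cleanest way I see is to exploit the ordering by decreasing bandwidth so that all rectangles come first, reducing the residual region to a width-$|R|$ strip, and then reuse the power-of-two packing lemma; verifying that ``rounding up to the next power of 2'' inside the strip still respects the area accounting (up to a constant blowup absorbed into the $O(\cdot)$) is the delicate bookkeeping step.
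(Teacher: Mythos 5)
Your proposal follows essentially the same route as the paper's proof: a per-node area lower bound of $\min\{L^{*}w_v,|R|\}\cdot L^{*}w_v$ summed against the defining inequality of $L^{*}$ for correctness, and a per-link send/receive count (at most $N_v$ sent and $O(L^{*}w_v)$ received, using $2^{-\ell}w\in[w_v,2w_v)$) for the cost. In fact you are somewhat more careful than the paper, which jumps directly from ``total assigned area $\ge |R|\cdot|S|$'' to ``the grid is fully covered,'' whereas you explicitly supply the missing packing step (full-width rectangles stacked into a bottom strip, then the power-of-two square packing of Lemma~\ref{lem:square:packing} for the residual strip).
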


\medskip \noindent {\bf Putting Everything Together on Symmetric Star.} Now we introduce our algorithm for computing cartesian product on a symmetric star. It can be easily checked that Algorithm~\ref{alg:cartesian-unequal} has its cost matching the lower bound in Theorem~\ref{thm:lb-cp-unequal-1} and Theorem~\ref{thm:lb-cp-unequal-2}, thus be optimal.
\begin{algorithm}
	\caption{{\sc GeneralizedStarCartesianProduct}$(G, \dstr)$}
	\label{alg:cartesian-unequal}
	
	\If{$\max_u N_u > N/2$}{
		all compute nodes send their data to $\arg\max_u N_u$\;
	}
	\Else{
		all compute nodes send their $R$-tuples to $V_\beta$\;
		Pick the best of:
		\begin{packed_enum}
			\item compute nodes send their data to $\arg \max_u w_u$\;
			\item all nodes in $V_\alpha$ send their $S$-tuples proportionally to $V_\beta$\;
			\item run wHC algorithm on $V_\alpha$ to compute $R \times \cup_{v \in V_\alpha} S_v$\;
			\end{packed_enum}
	}
\end{algorithm}

\cut{
\subsection{Proof of Theorem~\ref{thm:lb:sorting}}
\label{appendix:lb:sorting}
\begin{figure}
	\centering
	\includegraphics[scale=1.2]{case}
	\caption{Data exchange between $V^-_e, V^+_e$.}
	\label{fig:case}
\end{figure}

\begin{proof}
	We construct an initial data distribution as follows. Assume elements in $R$ are ordered as $r_1, r_2, \cdots, r_N$, where $i$ is the rank of element $r_i$ in $R$. Without loss of generality, assume $N$ is even. We assign elements to compute nodes in the ordering of $\{r_1, r_3, \cdots, r_{N-1}, r_2, r_4, \cdots, r_{N}\}$. Moreover, we pick one arbitrary route node of $G$ as the root, where all compute nodes are leaves of the tree. All compute nodes in $V_C$ are also labeled as $v_1, v_2,\cdots, v_{|V_C|}$ in an left-to-right traversal ordering, i.e., recursively traversing the leaves in the left subtree and then the right subtree. For example, the node $v_1$ with initial data size $N_1$ will be assigned with elements $\{r_1, r_3, \cdots, r_{2N_1-1}\}$ if $N_1 \le \frac{N}{2}$, and $\{r_1, r_3, \cdots, r_{N-1}, r_2, r_4,\cdots, r_{2N_1 - N}\}$ otherwise. We need to argue that any algorithm correctly sorting $R$ under this initial distribution must have a cost $\Omega(C_{LB})$. 

	Consider an arbitrary edge $e \in E$. Removing $e$ defines a partition of $V_C$ as $V^-_e, V^+_e$. Denote $R^-_e = \bigcup_{v \in V^-_e} R_v$ and $R^+_e = \bigcup_{v \in V^+_e} R_v$. It should be noted that $R^-_e$ or $R^+_e$ is a sub-interval of $\{r_1, r_3, \cdots,r_{N-1}, r_2, r_4, \cdots, r_N\}$, or a  sub-interval of $\{r_2, r_4, \cdots, r_N, r_1, r_3, \cdots, r_{N_1}\}$. Note that every element transmitted between $V^-_e$ and $V^+_e$ must go through edge $e$.  Without loss of generality, assume $|R^-_e| \le \frac{N}{2} \le |R^+_e|$. Then it suffices to show that the total number of elements exchanged between $V^-_e$ and $V^+_e$ is at least $\Omega(|R^-_e|)$.

	In the extreme case, there is only one element in $R^-_e$, say $R^-_e = \{r_i\}$. If $r_i$ is not sent through $e$, at least one element in $R^+_e$ must be sent to $V^-$; otherwise, no comparison between $r_i$ and any element $r_j \in R^+_e$ is performed, contradicting to the correctness of algorithms. So at least one element is transmitted through edge $e$. In general, at least two elements are in $R^-_e$. We further distinguish four cases: (1) $r_2 \notin R^-_e, r_{N} \notin R^-_e$; (2) $r_1 \notin R^-_e, r_{N-1} \notin R^-_e$; (3) $r_2 \in R^-_e, r_{N-1} \in R^-_e$; (4) $r_1 \in R^-_e, r_{N} \in R^-_e$. Note that (2) can be argued symmetrically with (1) and (4) can be argued symmetrically with (3). 
	
	Case (1): $r_2 \notin R^-_e, r_{N} \notin R^-_e$. In this case, the $R^-_e$ must be a subset of $\{r_1, r_{3}, \cdots, r_{N-1}\}$. Let $i, j$ be the smallest and largest rank of elements in $R^-_e$. If all elements in $R^-_e$ have been sent from $V^-_e$ to $V^+_e$, then we are done. Otherwise, let $i', j'$ be the smallest and largest rank of elements in $R^-_e$ which are not sent from $V^-_e$ to $V^+_e$. Furthermore, if all elements in $R^-_e - \{r_{i'}, r_{j'}\}$ are sent from $V^-_e$ to $V^+_e$, it can be easily checked that the number of such elements is at least $\frac{|R^-_e|}{2}$. Otherwise, there is $r_{k'} \in R^-_e - \{r_{i'}, r_{j'}\}$ not sent from $V^-_e$ to $V^+_e$. By the definition, $r_{i'} < r_{k'} < r_{j'}$. Implied by the ordering of compute nodes, all elements in $[r_{i'}, r_{j'}]$ should reside on $V^-_e$ when the algorithm terminates. In this case, each element in $[r_{i'}, r_{j'}] - R^-_e$ should be sent from $V^+_e$ to $V^-_e$, and each in $\{r_i, r_{i+2}, \cdots, r_{i'-2}\} \cup \{r_{j'+2}, r_{j'+4}, \cdots, r_j\}$ are sent from $V^-_e$ to $V^+_e$, as illustrated in Figure~\ref{fig:case} (Due to the page limit, the figure is moved to Appendix~\ref{appendix:lb:sorting}). So the number of elements transmitted through edge $e $ is at least $\frac{i'-i}{2} + \frac{j-j'}{2} + \frac{j'-i'}{2} = \frac{j - i}{2} \ge |R^-_e| -1 \ge \frac{|R^-_e|}{2}$. 
	
	Case (3): $r_{N-1} \in R^-_e, r_2 \in R^-_e$. Let $i$ be the smallest odd rank and $j$ be the largest even rank of elements in $R^-_e$. Note that $j < i$ since $|R^-_e|  \le \frac{N}{2}$. We further consider three cases as below. 
	
	Case (3.1): all elements in $\{r_2, r_4, \cdots, r_j\}$ are sent from $V^-_e$ to $V^+_e$. If all elements in $\{r_i,r_{i+2}, \cdots,r_{N-1}\}$ are also sent from $V^-_e$ to $V^+_e$, then we are done. Otherwise, let $i_1, i_2$ be the smallest and largest rank of elements in $\{r_i,r_{i+2}, \cdots,r_{N-1}\}$ not sent from $V^-_e$ to $V^+_e$. Furthermore, if all elements in $\{r_i,r_{i+2}, \cdots,r_{N-1}\} - \{r_{i_1}, r_{i_2}\}$ are sent from $V^-_e$ to $V^+_e$, it can be easily checked that the number of elements sent from $V^-_e$ to $V^+_e$ is at least $\frac{|R^-_v|}{2}$. Otherwise, there is $r_{k'} \in \{r_i,r_{i+2}, \cdots,r_{N-1}\} - \{r_{i_1}, r_{i_2}\}$ not sent from $V^-_e$ to $V^+_e$. By the definition, $r_{i_1} < r_{k} < r_{i_2}$. Implied by the ordering of compute nodes, all elements in $[r_{i_1}, r_{i_2}]$ should reside on $V^-_e$ when the algorithm terminates. In this case, each element in $[r_{i_1}, r_{i_2}] - R^-_e$ should be sent from $V^+_e$ to $V^-_e$, and each in $\{r_2, r_4, \cdots, r_j\} \cup \{r_i, r_{i+2}, \cdots, r_{i_1-2}\} \cup \{r_{i_2+2}, r_{i_2+4}, \cdots, r_{N-1}\}$ are sent from $V^-_e$ to $V^+_e$, as illustrated in Figure~\ref{fig:case}. So the number of elements transmitted through edge $e $ is at least $\frac{j}{2} + \frac{i_1 - i}{2} + \frac{N-1-i_2}{2} + \frac{i_2 - i_1}{2} = \frac{N-1+j-i}{2}  = |R^-_e| -1 \ge \frac{|R^-_e|}{2}$. 
	
	Case (3.2): all elements in $\{r_i,r_{i+2}, \cdots,r_{N-1}\}$ are sent from $V^-_e$ to $V^+_e$, which can be argued symmetrically.
	
	Case (3.3): at least one element in $\{r_2, r_4, \cdots, r_j\}$ and one element in $\{r_i,r_{i+2}, \cdots,r_{N-1}\}$ are not sent from $V^-_e$ to $V^+_e$. Let $j_1, j_2$ be the smallest and largest even rank of elements in $R^-_e$ not sent from $V^-_e$ to $V^+_e$. Let $i_1, i_2$ be the smallest and largest odd rank of elements in $R^-_e$ not sent from $V^-_e$ to $V^+_e$. Note that each element in $\{r_2, r_4, \cdots, r_{j_1 -2}\} \cup \{r_{j_2 + 2}, r_{j_2+4}, \cdots, r_j\} \cup \{r_i, r_{i+2}, \cdots, r_{i_2-2}\} \cup\{r_{i_2 + 2}, r_{i_2 + 4}, \cdots, r_{N-1}\}$ is sent from $V^-_e$ to $V^+_e$.  
	
	By the ordering of compute nodes, (3.3.1) all elements in $[r_{j_1}, r_{i_2}]$ or (3.3.2) all elements in $[r_1, r_{j_2}] \cup [r_{i_1}, r_{N}]$ should reside on $V^-_e$ when the algorithm terminates. In (3.3.1), each element in $[r_{j_1}, r_{i_2}] - R^-_e$ should be sent from $V^+_e$ to $V^-_e$, as illustrated in Figure~\ref{fig:case}. The number of elements transmitted through edge $e$ is at least $i_2 - j_1 + 1 - \frac{j-j_1}{2} - \frac{i_2 - i}{2} + \frac{j_1 -2}{2} + \frac{j-j_2}{2} + \frac{i_1 -i}{2} + \frac{N-1 - i_2}{2} \ge \frac{N-1 + i_1 - j_2}{2} \ge \frac{N}{2} \ge |R^-_e|$. In (3.3.2), each element in $\{r_1,r_3,\cdots,r_{j_2-1}\} \cup \{r_{i_1+1},r_{i_1+3}, \cdots, r_{N-1}\}$ should be sent from $V^+_e$ to $V^-_e$, as illustrated in Figure~\ref{fig:case}. The number of elements transmitted through edge $e$ is at least
	$\frac{j_2+1}{2} +  \frac{N - i_1+1}{2} + \frac{j_1 -2}{2} + \frac{j-j_2}{2} + \frac{i_1 -i}{2} + \frac{N-1 - i_2}{2}= \frac{N-1-i_2+j_1}{2} \ge \frac{N}{2} \ge |R^-_e|$. 
	\end{proof}

\subsection{Analysis of Algorithm~\ref{alg:proportional-distribute}}
\label{appendix:proportional-redistribute}

\begin{lemma}
	\label{lem:light-distribute}
	Consider a light node $u \in V_L$. Then, the following hold true in Algorithm~\ref{alg:proportional-distribute}:
	\begin{enumerate}
		\item  for any $i \in [k]$, $\sum_{j=1}^i N^j_u -1 \le  \frac{\sum_{j=1}^i N_{v_j}}{\sum_{j=1}^k N_{v_j}} \cdot N_u \le \sum_{j=1}^i N^j_u$;
		\item for any $i_1, i_2 \in [k]$ with $i_1 < i_2$, $\sum_{j=i_1}^{i_2} N^j_u  \leq \frac{\sum_{j=i_1}^{i_2} N_{v_j}}{\sum_{j=1}^k N_{v_j}} \cdot N_u  + 1$. 
		\item $\sum_{j=1}^k N^j_u \ge N_u$.
	\end{enumerate}
\end{lemma}

\begin{proof}
	We first prove (1) by induction. The base case $i =1$ follows since $N^1_u = \lfloor\frac{N_{v_1}}{\sum_{j=1}^k N_{v_j}} \cdot N_u \rfloor + 1$. For the inductive step, assume the claim holds for $i$. Let $\Delta_i$ be the value of $\Delta$ after being updated during the $i$-th iteration of the while loop. Observe that the invariant $\Delta_i = \sum_{j=1}^i N^j_u - \frac{\sum_{j=1}^i N_{v_j}}{\sum_{j=1}^k N_{v_j}} \cdot N_u$ always holds. It can also be checked that $\Delta_i \ge 0$ since $0 \le x-\lfloor x \rfloor \le 1$. 
	
	Consider the $(i+1)$-th iteration of while loop. When it goes into line 4, we have: 
	\[\sum_{j=1}^{i+1} N^j_u = N^{i+1}_{u} + \sum_{j=1}^i N^j_u =  \lfloor \frac{N_{v_{i+1}}}{\sum_{j=1}^k N_{v_j}} \cdot N_u  \rfloor + \Delta_i + \frac{\sum_{j=1}^i N_{v_j}}{\sum_{j=1}^k N_{v_j}} \cdot N_u = (\Delta_i -x + \lfloor x \rfloor ) + \frac{\sum_{j=1}^{i+1} N_{v_j}}{\sum_{j=1}^k N_{v_j}} \cdot N_u\]
	In this case, $0< \Delta_i - x+ \lfloor x \rfloor < 1$, so the claim holds. When the algorithm goes into line 6, 
	\[ \sum_{j=1}^{i+1} N^j_u = N^{i+1}_{u} + \sum_{j=1}^i N^j_u = \lfloor \frac{N_{v_{i+1}}}{\sum_{j=1}^k N_{v_j}} \cdot N_u  \rfloor + 1 + \Delta_i + \frac{\sum_{j=1}^i N_{v_j}}{\sum_{j=1}^k N_{v_j}} \cdot N_u = (\Delta_i + 1 - x + \lfloor x \rfloor) + \frac{\sum_{j=1}^{i+1} N_{v_j}}{\sum_{j=1}^k N_{v_j}} \cdot N_u \]
	
	\smallskip
	
	We prove (2) based on (1). Observe that 
	\[ \sum_{j=i_1}^{i_2} N^j_u = \sum_{j=1}^{i_2} N^j_u - \sum_{j=1}^{i_1} N^j_u \le \frac{\sum_{j=1}^{i_2} N_{v_j}}{\sum_{j=1}^k N_{v_j}} \cdot N_u + 1  - \frac{\sum_{j=1}^{i_1} N_{v_j}}{\sum_{j=1}^k N_{v_j}} \cdot N_u \le \frac{\sum_{j=i_1}^{i_2} N_{v_j}}{\sum_{j=1}^k N_{v_j}} \cdot N_u  + 1 \]
	We can obtain a similar expression for $i_2$; then the claim holds by adding the two inequalities.
	
	\smallskip
	Property (3) follows immediately from (1) by setting $i=k$.
\end{proof}

\subsection{Proof of Theorem~\ref{thm:wst}}
\label{appendix:wst}

 From property (3) of Lemma~\ref{lem:light-distribute}, it follows that all the data of the light nodes is sent to the heavy nodes during the first round. Hence, the algorithm will produce the correct sorting. We complete the proof of Theorem~\ref{thm:wst} by analyzing the cost of the wTS algorithm.

\begin{proof}

	First, we observe that at least half the data is distributed across heavy nodes initially, i.e., $\sum_{j =1}^k N_{v_j} \ge \frac{N}{2}$. Indeed, the size of initial data distributed across all light node is strictly smaller than $\frac{N}{2|V_C|} \cdot |V_C| = \frac{N}{2}$, so the remaining data with size at least $\frac{N}{2}$ must reside on heavy nodes.

	We next analyze the cost for each round separately.

	\introparagraph{Round 1}
	Consider an arbitrary edge $e \in E$, which defines a partition of compute nodes $V^-_e, V^+_e$. If $V_H \cap V^+_e \neq \emptyset$, it holds that $V_H \cap V^+_e = \{v_i, v_{i+1}, \cdots, v_j\}$ or $\{v_1, v_2, \cdots, v_i\} \cup \{v_j, v_{j+1}, \cdots, v_k\}$ for some $i,j \in [k]$ and $i \le j$. For any light node $u \in V_L$, the number of data sent to the nodes in $V_H \cap V^+_e$ can then be bounded as follows using Lemma~\ref{lem:light-distribute}(2):
	\[\sum_{v \in V^+_e \cap V_H} N^v_u \le 2 + \sum_{v \in V^+_e \cap V_H}   \frac{N_{v}}{\sum_{v' \in V_H} N_{v'}} \cdot N_u\] 
	In this way, the number of data sent from light nodes in $V^-_e$ to heavy nodes in $V^+_e$ can be bounded as 
	\begin{align*}
	& \sum_{u \in V^-_e \cap V_L} \left(2 + \sum_{v \in V^+_e \cap V_H}   \frac{N_{v}}{\sum_{v' \in 	V_H} N_{v'}} \cdot N_u \right) 
	\le \sum_{u \in V^-_e \cap V_L} 2 + \sum_{u \in V^-_e \cap V_L} \sum_{v \in V^+_e \cap V_H} \frac{2N_{v}}{N} \cdot N_u \\
	& \le 2 \min\{\sum_{u \in V^-_e} N_u, |V_C|\} + \frac{2}{N} \cdot (\sum_{u \in V^-_e} N_u) \cdot (\sum_{v \in V^+_e} N_v)  
	\le 4 \min\{\sum_{u \in V^-_e} N_u, \sum_{v \in V^+_e} N_v\}
	\end{align*}
	The rationale behind the third inequality is that $|V_C| \le \frac{N}{2|V_C|} \le \sum_{v \in V^+_e \cap V_H} N_v \le \sum_{v \in V^+_e} N_v$ and $\frac{a \cdot b}{a+b} \le \min\{a,b\}$ holds for any $a,b \ge 1$. If $V_H \cap V^-_e \neq \emptyset$, we can make a symmetric argument.

	We observe here that the number of data received by any heavy node $v \in V_H$ in round 1 is at most 
	\begin{align*}
	\sum_{u \in V_L} \lceil \frac{N_v}{\sum_{v' \in V_H} N_{v'}} \cdot N_u \rceil =  \sum_{u \in V_L}  \frac{N_v}{\sum_{v' \in V_H} N_{v'}} \cdot N_u +  \sum_{u \in V_L} 1
	\le  \frac{2N_v}{N} \cdot  \sum_{u \in V_L} N_u + |V_C| \le 3 N_v
	\end{align*}
	where the rationale behind the first inequality is that $\sum_{v' \in V_H} N_{v'} \ge \frac{N}{2}$ and that behind the second inequality is that $|V_C| \le \frac{N}{2|V_C|} \le N_v$. 
	Hence, for every heavy node $v$, $M_v \leq 3 N_v + N_v = 4 N_v$.	
	
	\introparagraph{Rounds 2, 3}
	During sampling, each element is an independent Bernoulli sample, so we have $E[s] = \rho N$. Applying the Chernoff bound, $\Pr[s \ge 2 \rho N] \le \exp\left(-\Omega(\rho N)\right)$. In round 2 and round 3, the number of elements received or sent by any node is at most $s$, which is smaller than $2\rho N$ with probability at least $1- \exp\left(-\Omega(\rho N)\right) \ge 1 - (\frac{1}{|V_C| \cdot N})^{4|V_C|}$. Observe that $2 \rho N \leq N / |V_C|$. Since there is a heavy node at each side of an edge that has data getting through, we have $2 \rho N \leq \min\{\sum_{u \in V^-_e} N_u, \sum_{v \in V^+_e} N_v\} $.
	
	\introparagraph{Round 4} In this round, each heavy node $v_i$ sends out at most $M_i$ elements and receives all the elements falling into the interval $[b_i, b_{i+1})$, i.e., $R \cap [b_{i-1}, b_i)$. Let $t_0 = -\infty$ and $t_{|V_C|} = +\infty$. Under the condition that $s \le 2\rho N$, we first observe that for any $j \in \{1,2,\cdots, |V_C|\}$, $|R \cap [t_{j-1}, t_j)| \le 8 \cdot \frac{N}{|V_C|}$, which holds with probability at least $1-\frac{1}{N}$, following a similar analysis to~\cite{tao2013minimal}. Together, the probability that all these assumptions hold is 
	\[ \left(1 - (\frac{1}{|V_C| \cdot N})^{4|V_C|} \right) \cdot \left(1- \frac{1}{4N}\right) \ge 1- \frac{1}{N}\]

	Conider any heavy compute node $v_j$. The number of intervals allocated to $v_j$ is exactly $c_j$, thus the number of elements recieved by $v_j$ in the last round is at most 
	\[\lceil \frac{M_j}{N} \cdot |V_C|\rceil  \cdot 8 \cdot \frac{N}{|V_C|} \le (\frac{|M_j|}{N} \cdot |V_C| +1 )  \cdot 8 \cdot \frac{N}{|V_C|} \le M_j + 8 \frac{N}{|V_C|} \le 4 N_{v_j}+ 16 N_{v_j} = O(N_{v_j}) \]
	with probability at least $1- \frac{1}{N}$. 
	
	Next we bound the amount of data transmitted on every link $e \in E$. Removing $e$ will partition the compute nodes in $V^{-}_e, V^{+}_e$. W.l.o.g., assume $\sum_{v \in V^{-}_e \cap V_H} N_v \le \sum_{v \in V^{+}_e \cap V_H} N_v$. The size of data sent from the heavy nodes in $V^{-}_e$ to $V^{+}_e$ is always bounded by the total size of data sitting in $v \in V^{-}_e \cap V_H$, with $O(\sum_{v \in V^{-}_e \cap V_H} N_v) = O(\min\{\sum_{v \in V^{-}_e \cap V_H} N_v, \sum_{v \in V^{+}_e \cap V_H} N_v\})$.  The size of data sent from the heavy nodes in $V^{+}_e$ to $V^{-}_e$ is at most the number of elements recieved by all compute nodes in $V^{-}_e \cap V_H$, thus bounded by $O(\sum_{v \in {V^-_e \cap  V_H}} N_v) = O(\min\{\sum_{v \in V^{-}_e \cap V_H} N_v, \sum_{v \in V^{+}_e \cap V_H} N_v\})$. In either way, the capacity of each edge $e$ is matched by its lower bound, thus completing the proof.
\end{proof} 
}

\end{document}